\title[Spectral algorithms optimally recover planted sub-structures]
{Spectral algorithms optimally recover \\
planted sub-structures}
\author{Souvik Dhara{$^\star$}, Julia Gaudio{$^\dagger$}, Elchanan Mossel{$^\ddagger$}, Colin Sandon}
\address[{$^\star$}]{The Simons Institute for the Theory of Computing, University of California, Berkeley}
\address[{$^\dagger$}]{Department of Industrial Engineering and Management Sciences, Northwestern University}
\address[{$^\ddagger$}]{Department of Mathematics, Massachusetts Institute of Technology}
\email{\href{mailto:dharasouvik1991@gmail.com}{dharasouvik1991@gmail.com},\href{mailto:julia.gaudio@northwestern.edu}{julia.gaudio@northwestern.edu}, \href{mailto:elmos@mit.edu}{elmos@mit.edu}, \href{mailto:csandon@comcast.net}{csandon@comcast.net}}
\thanks{\emph{Acknowledgement:} 
S.D., E.M., and C.S. were partially supported by Vannevar Bush Faculty Fellowship ONR-N00014-20-1-2826. S.D.~was supported by Simons-Berkeley Research Fellowship.
E.M. and C.S. were partially supported by NSF award DMS-1737944. E.M. was partially supported by Simons Investigator award (622132) and by ARO MURI W911NF1910217. J.G. was partially supported by NSF award CCF-2154100. Part of this work was completed while S.D and C.S. were at the MIT Mathematics Department.
} 
\begin{document}
\maketitle
\begin{abstract}

Spectral algorithms are an important building block in machine learning and graph algorithms. We are interested in studying when such algorithms can be applied directly to provide optimal solutions to inference tasks. Previous works by Abbe, Fan, Wang and Zhong (2020) and by Dhara, Gaudio, Mossel and Sandon (2022) showed the optimality for community detection in the Stochastic Block Model (SBM), as well as in a censored variant of the SBM.
Here we show that this optimality is somewhat universal as it carries over to other planted substructures such as the planted dense subgraph problem and submatrix localization problem, as well as to a censored version of the planted dense subgraph problem.
 \end{abstract}

\section{Introduction} 
Spectral algorithm are an important building block in machine learning and graph algorithms. We are interested in studying when such algorithms can be applied directly to provide optimal solutions. We study 
the applicability of such algorithm to the problem of finding dense matrix in a large structure. 
The problem of finding dense structure in a large matrix is an important machine learning task on network data. 
A canonical probabilistic  formulation of this problem is made by ``planting’’ a dense substructure hidden under random noise. More precisely, consider a symmetric $n\times n$ matrix $X$, and suppose there is an unknown subset $S^\star \subset [n]$ such that $X_{ij} \sim P$ if $i,j\in S^\star$ and $X_{ij} \sim Q$ otherwise, where $P$ and $Q$ are two distributions with different means.
The goal is to recover $S^\star$ based on an observation of $X$.
If $P,Q$ are Bernoulli distributions, the corresponding problem is referred to as the  {\em planted dense subgraph problem }(\PDS), which is a generalization of the well-known planted clique problem \cite{Kuc95,AKS98}. The case where $P,Q$ are Normal distributions is known as the \emph{submatrix localization problem} (\SL) \cite{SWPN09,KBRS11} in the high-dimensional statistics literature. 
The central questions in these areas are: (1) designing efficient algorithms for recovering the planted substructure, (2) finding parameter regimes when the recovery would be information theoretically impossible, as well as (3) showing parameter regimes where the recovery would be information theoretically possible, but known algorithms would fail (i.e. there is a computational versus statistical gap).
We refer the reader to the survey of Wu and Xu~\cite{WX21} for a detailed account on the recent developments on this topic and further references.

If the size of the planted subset is $|\true| = \Theta(n)$, then there is no computational versus statistical gap, as shown by Hajek, Wu, and Xu~\cite{Hajek2016, Hajek2016-hidden-community, Hajek2015c,Hajek2017}. We will focus on this regime throughout the paper. In this case, semidefinite programming (\textsc{SDP}) relaxation of the Maximum-Likelihood estimator \cite{Hajek2016, Hajek2016-hidden-community} recovers the hidden subset $\true$ in both the PDS and SL problems, up to the information-theoretic thresholds. An algorithm based on degree-thresholding and voting is also optimal for exact recovery in the PDS model \cite[Appendix A]{Hajek2015c};  message-passing \cite{Hajek2017} is also optimal for the SL model.  

Our main interest in this paper is in studying the following two aspects of these planted recovery problems:
\begin{itemize}
    \item We are interested in the power of spectral algorithms, which are simple, efficient, and widely used.
    \item We consider a {\em censored} version of the planted dense subgraph problem.
    Here the statuses of some edges are unknown as is often the case in real network inference problems. 
\end{itemize}
We note that some applications of spectral algorithms to the exact recovery problem use an additional combinatorial clean-up stage (see e.g.~\cite{CO16,Vu18,YP14}), but we follow \cite{Abbe2020,Dhara2021} in studying spectral algorithms without a clean-up stage.
This is partially motivated by the fact that most real applications of spectral algorithms do not include a combinatorial clean-up stage. Moreover, spectral algorithms are highly efficient; spectral decomposition can be computed in $O(n^{\omega})$ time~\cite{Banks21}, and in fact, finding only the top eigenvectors using power-iteration on sparse random matrices takes only $O(n\log^2n)$ time; see Remark~\ref{rem:run_time}.

This gives rise to the following main questions that we address in this paper: 
\begin{quote}
    \emph{Are there simple spectral algorithms that are optimal for recovering planted substructures? If so, can we design optimal algorithms when additionally there is missing data? }
\end{quote}

\noindent {\bf Our Contributions.} 
\begin{enumerate}
    \item We first study spectral algorithms for \PDS~and~\SL~models in the set up of Hajek, Wu and Xu~\cite{Hajek2016, Hajek2016-hidden-community}. We show for the SL model, a simple spectral algorithm based on thresholding the top eigenvector of the underlying random matrix $X$ succeeds in exact recovery up to the information theoretic threshold (Theorem \ref{theorem:submatrix-localization}). In the PDS model, a similar algorithm based on thresholding a suitable linear combination of the top two eigenvectors of the random matrix is optimal (Theorem~\ref{theorem:dense-subgraph}). 
    \item Next we consider the recovery problem in the censored version when the information about the edge statuses are \emph{missing at random} (see Definition~\ref{defn:CPDS}). To the best of our knowledge, the \PDS~model with missing data had not been studied in the literature. We obtain the information theoretic threshold in the censored set up (see Theorem~\ref{theorem:impossibility2}). The spectral algorithm based on $X$ is not always optimal. 
    To design an efficient spectral algorithm, we consider an operator called the \emph{signed adjacency matrix} (see \eqref{defn:signed-adj-matrix}), and show that a linear combination of the top two eigenvectors of this signed adjacency matrix can recover the \PDS~up to the information theoretic threshold (see Theorem~\ref{theorem:exact-recovery-censored}).
    Our results focus on the parameter regimes where there are no computational versus statistical gaps, and handle more general censoring regimes than what was considered in prior work on spectral algorithms for community detection \cite{Dhara2021}.
\end{enumerate}
Our proofs follow a similar pattern for all models considered. Indeed our results and proofs show that the optimality of spectral algorithms is somewhat universal. 

\subsection{Model description and objective}
We start by describing the models for the planted subgraph recovery problem, a version of this model with missing data, and the submatrix localization problem. 
\begin{definition}[$\PDS$ and Censored $\PDS$ model] \label{defn:CPDS}
In the {\em Planted Dense Subgraph Problem}, there are $n$ vertices, labeled $\{1, 2, \dots, n\}$ and $\true$ is drawn uniformly at random from all size-$K$ subsets of $[n]$.
A pair of vertices $\{i,j\}$ is connected with probability $p$ if $i, j \in \true$. Otherwise, the pair is connected with a different probability $q$. 
We refer to this as the \PDS~model, denoted by $\PDS(p,q,K)$. 

In the {\em Censored Planted Dense Subgraph Problem}, we additionally have each edge status being revealed to us independently with probability $\alpha$. In this case, the output is a graph with edge statuses given by $\{\texttt{present, absent, censored}\}$. We denote this model by $\CPDS(p,q,\alpha,K)$. 
\end{definition}

\begin{definition}[$\SL$ model] 
In the  \emph{Submatrix Localization (SL) Problem}, we have an $n \times n$ symmetric matrix~$A$. As before, $\true$ is drawn uniformly at random from all size-$K$ subsets of $[n]$. The entries of $(A_{ij})_{i\leq j}$ are independent, and $A_{ij} \sim \textsc{Normal}(\mu, 1)$ if $i, j \in \true$, and $A_{ij} \sim \textsc{Normal}(0, 1)$ otherwise.
Throughout, we assume that $\mu>0$.
We denote this model by $A \sim \text{SL}(\mu, K)$.
\end{definition}

\noindent {\bf Objective. } 
Suppose $\true$ is unknown.
We want to recover $\true$ \emph{exactly}, i.e., we want to find an estimator $\hat{S}_n \subset [n]$ such that 
\begin{eq} \label{exact-recovery-possible}
\lim_{n\to\infty} \PR(\hat{S}_n  = \true) = 1. 
\end{eq}
Throughout, we will assume that the parameters such as $K, p, q, \alpha, \mu$ are known. 

\subsection{Main results} 
\subsubsection*{The $\PDS$ problem.}

We will first consider spectral recovery on the \PDS~model for the regime where
\begin{eq}\label{eq:PDS-parameters}
   p = \frac{a\log n}{n}, \quad  q = \frac{b\log n}{n}, \quad K = \lfloor \rho n\rfloor, \quad a,b,\rho \text{ are fixed constants}.
\end{eq}
To state the result, define 
\begin{eq}\label{eq:defn:f}
   f(a,b) := a - \left(\frac{a-b}{\log a - \log b} \right) \log \left(\frac{\e a(\log a - \log b)}{a-b} \right), 
\end{eq}
for $a,b > 0$ such that $a \neq b$. Additionally, (for $a,b \geq 0$) define $f(a,0) = a$, $f(0,b) = b$, and $f(a,a) = 0$. The function $f(a,b)$ is symmetric in its arguments.
Our first main result is the following.
\begin{theorem}\label{theorem:dense-subgraph-1}
Consider the \PDS~model with parameters given by \eqref{eq:PDS-parameters}. If $\rho f(a,b) > 1$, then there is a spectral algorithm that outputs $\hat{S}_n$ such that 
$\lim_{n\to\infty}\PR(\hat{S}_n = \true) =1$.
\end{theorem}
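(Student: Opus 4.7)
The plan is to exploit the fact that $\mathbb{E}[A\mid \true]$ is, up to negligible diagonal corrections, a rank-two matrix, so the entire signal lives in the two-dimensional eigenspace associated with the top two eigenvalues of $A$. I would then run a two-stage scheme: a spectral step that correctly labels all but a vanishing fraction of vertices, followed by a per-vertex degree-based cleanup step whose error probability is precisely governed by the rate function $f(a,b)$.

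Conditioning on $\true$, the matrix $M := \mathbb{E}[A\mid \true]$ differs from the rank-two matrix $M_{0} := q\,\mathbf{1}\mathbf{1}^{\top} + (p-q)\,\mathbf{1}_{\true}\mathbf{1}_{\true}^{\top}$ only on the diagonal, and the nonzero eigenspace of $M_{0}$ equals $\mathrm{span}\{\mathbf{1},\mathbf{1}_{\true}\}$ with the two nonzero eigenvalues of order $\log n$. Standard operator-norm concentration for sparse centered random matrices (Feige--Ofek / Bandeira--van Handel), after possibly trimming the few abnormally high-degree vertices, gives $\|A - M\|_{\mathrm{op}} = O(\sqrt{\log n})$ with high probability. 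The Davis--Kahan theorem then shows that the eigenspace of $A$ spanned by its top two eigenvectors $u_{1},u_{2}$ sits within angle $O(1/\sqrt{\log n})$ of $\mathrm{span}\{\mathbf{1},\mathbf{1}_{\true}\}$.

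The key technical step is to upgrade this to an entrywise bound: for explicit coefficients $c_{1},c_{2}$ depending on $a,b,\rho$, the combination $v := c_{1}u_{1} + c_{2}u_{2}$ should satisfy $|v_{i} - (\mathbf{1}_{\true})_{i}| \le 1/3$ for all but $o(n/\log n)$ coordinates. I would obtain this via a leave-one-out argument in the spirit of Abbe--Fan--Wang--Zhong, coupling $u_{1},u_{2}$ with the eigenvectors of $A^{(i)}$ obtained by deleting row and column $i$, then using independence to reduce $v_{i}$ to a sum that concentrates by Chernoff. Thresholding $v$ at $1/2$ yields an initial estimate $\hat{S}_{0}$ with $|\hat{S}_{0}\triangle \true| = o(n/\log n)$ with high probability.

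For the cleanup, for each vertex $i$ set $N_{i} := |\{j \in \hat{S}_{0}\setminus\{i\} : \{i,j\}\in E\}|$ and output $\hat{S}_{n} := \{i : N_{i} \ge \tau^{\ast}\}$, where $\tau^{\ast}= \tau^{\ast}(a,b,\rho)\log n$ is the optimal likelihood-ratio threshold separating $\mathrm{Bin}(K-1,p)$ from $\mathrm{Bin}(K,q)$. Since $|\hat{S}_{0}\triangle\true| = o(n/\log n)$, the distribution of $N_{i}$ is essentially the ``clean'' in-set or out-set binomial, and a Chernoff estimate identifies $\rho f(a,b)$ as the common exponential decay rate of $\Pr(N_{i}\ge \tau^{\ast}\mid i\notin\true)$ and $\Pr(N_{i}<\tau^{\ast}\mid i\in\true)$. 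Both tail probabilities are therefore at most $n^{-\rho f(a,b) - \Omega(1)} = o(1/n)$, and a union bound over the $n$ vertices gives $\hat{S}_{n} = \true$ with high probability.

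The main obstacle is the $\ell_{\infty}$-type guarantee on $v$. In the sparse $\log n/n$ regime, the bare $\ell_{2}$ Davis--Kahan bound is far too crude for per-vertex statements, and extracting an entrywise estimate that beats the natural size of an eigenvector coordinate requires a careful leave-one-out decoupling, combined with a delicate handling of the atypically high-degree vertices whose rows violate the naive concentration bounds.
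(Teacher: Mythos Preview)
Your two-stage scheme (spectral almost-exact recovery followed by a degree-based cleanup) is a different route from the paper, and while the overall strategy is standard in the community-detection literature, there are two issues worth flagging.

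\textbf{Comparison with the paper.} The paper avoids the cleanup step altogether. The key observation is that the Abbe--Fan--Wang--Zhong machinery does \emph{not} say that $u_k$ is entrywise close to $u_k^{\star}$; it says that $u_k$ is entrywise close to $A u_k^{\star}/\lambda_k^{\star}$. With weights $c_1^{\star},c_2^{\star}$ chosen so that $\tfrac{c_1^{\star}\log n}{\lambda_1^{\star}}u_1^{\star}+\tfrac{c_2^{\star}\log n}{\lambda_2^{\star}}u_2^{\star}=K^{-1/2}\mathbf{1}_{\true}$, one obtains that (up to signs) $\log(n)\,(c_1^{\star}u_1+c_2^{\star}u_2)$ is $\ell_\infty$-close to $K^{-1/2}A\mathbf{1}_{\true}$, which is exactly your cleanup statistic $N_i$ computed against the \emph{true} set. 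Thus the spectral vector itself already separates $\true$ from its complement at the optimal rate, and no refinement is needed. The residual sign ambiguity ($s_1,s_2\in\{\pm1\}$) is handled by evaluating the likelihood on the four resulting candidates and picking the best. This is more direct and, importantly, is what justifies the paper's headline claim that a \emph{pure} spectral algorithm (plus a trivial four-way comparison) is information-theoretically optimal; your version requires a non-spectral cleanup to reach the threshold.

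\textbf{A genuine gap in the cleanup.} Your statistic $N_i=d_i(\hat S_0\setminus\{i\})$ is not a clean binomial: $\hat S_0$ is a function of the entire adjacency matrix, so it is correlated with the edges incident to $i$. Writing $N_i=\tilde N_i+O(d_i(\hat S_0\triangle\true))$ with $\tilde N_i=d_i(\true)$, you need $d_i(\hat S_0\triangle\true)=o(\log n)$ uniformly in $i$ to preserve the Chernoff exponent $\rho f(a,b)$, but the set $\hat S_0\triangle\true$ is data-dependent and a naive union bound over all size-$o(n/\log n)$ subsets is far too expensive. This is fixable---e.g.\ by a second leave-one-out (define $\hat S_0^{(i)}$ from $A$ with row/column $i$ zeroed, so that $d_i(\hat S_0^{(i)})$ is conditionally binomial), or by sample splitting with a careful accounting of the factor-two loss---but it must be addressed explicitly. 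A smaller point: you also need to break the eigenvector sign ambiguity before thresholding $v$ at $1/2$; and in the $\Theta(\log n/n)$ regime no trimming of high-degree vertices is needed for the operator-norm bound.
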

The spectral algorithm is based on a linear combination of the top two eigenvectors of the adjacency matrix as described in Algorithm~\ref{alg:PDS}, and in particular, does not require additional combinatorial  clean-up steps. 
Hajek et~al.~\cite{Hajek2016} showed that the recovery of $\true$ is information theoretically impossible if $\rho f(a,b)<1$. Therefore, Theorem~\ref{theorem:dense-subgraph} shows that a spectral algorithm works up to the information theoretic threshold.  The achievability and impossibility parameter regimes in the PDS model are summarized by Figure \ref{fig:PDS-regimes}.
\begin{figure}[h]
    \centering
    \includegraphics[scale=0.35]{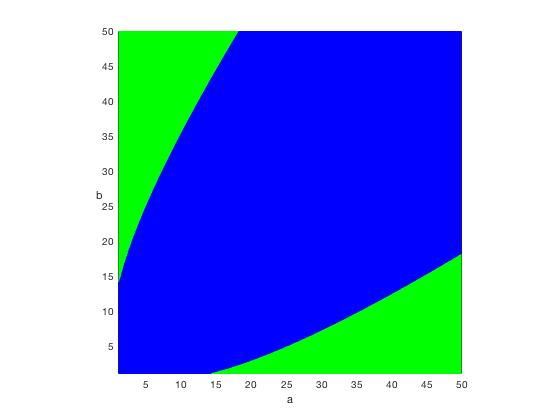}
    \caption{Phase diagram for the PDS model, for $\rho = \frac{1}{4}$. In the green region, estimation of $\true$ is possible by a spectral algorithm, while in the blue region, it is information-theoretically impossible.}
    \label{fig:PDS-regimes}
\end{figure}

\subsubsection*{The $\SL$ problem. } Next, we state the results about spectral recovery for the $\SL$ model when
\begin{align} \label{eq:SL-parameters}
    \mu = a \sqrt{\frac{\log n}{n}}, \quad K = \lfloor \rho n\rfloor, \quad a,\rho \text{ are fixed constants.}
\end{align}

\begin{theorem}\label{theorem:submatrix-localization-2}
Consider the $\SL$ model with parameters given by \eqref{eq:SL-parameters}, and assume that $\rho a^2 > 8$. Then there is a spectral algorithm that outputs $\hat{S}_n$ such that $\lim_{n\to\infty}\PR(\hat{S}_n = \true) =1$.
\end{theorem}
The spectral algorithm for the $\SL$~model only needs to consider the top eigenvector of the underlying matrix (see Algorithm~\ref{alg:Gaussian}) as compared to  top two eigenvectors for the $\PDS$~model. The fact that exact recovery of $\true$ is information theoretically impossible for $\rho a^2 < 8$ was proven by Hajek et~al.~\cite{Hajek2016-hidden-community}. 

\subsubsection*{The $\CPDS$ problem.}
Next, we discuss multiple results about the censored model. To the best of our knowledge, this model was not studied in the prior literature.
Interestingly, if $p,q = \Omega(1)$, then  the \emph{na\"ive} spectral algorithm based on the adjacency matrix with missing entries replaced by 0 does not lead to an optimal algorithm.
Instead, we have to consider a spectral algorithm using a ternary encoding matrix called the \emph{signed adjacency matrix} that encodes information about the labels $\{\texttt{present, absent, censored}\}$. However, the above na\"ive spectral algorithm turns out to be optimal when $p,q = o(1)$. 

The results for the $\CPDS$ model will be stated in terms of a Chernoff--Hellinger divergence, introduced by Abbe and Sandon~\cite{Abbe2015}.  
\begin{definition}[Chernoff--Hellinger Divergence]
Given two vectors $\mu, \nu \in (\R_+\setminus\{0\})^k$,
define
\[D_x(\mu, \nu) = \sum_{i \in[k]}  \big[x \mu_i  + (1-x) \nu_i - \mu_i^x \nu_i^{1-x}\big] \quad \text{ for }x \in [0,1].\]
The Chernoff--Hellinger divergence of $\mu$ and $\nu$ is defined as
\begin{eq} \label{eq:CH-defn}
   \Delta_+(\mu, \nu) = \max_{x \in [0,1]} D_x(\mu, \nu). 
\end{eq}
Given $p,q\in [0,1]$, we simply write $\Delta_+(p,q)$ for $\Delta_+((p,1-p),(q,1-q))$. We also define $\Delta_+(p,q)$ when $|\{p, q\} \cap \{0,1\} | \geq 1$ by taking the continuous extension, which yields $\Delta_+(p,q) = |p-q|$ in these cases.
\end{definition}

The first main result for the $\CPDS$ model is the following. 
\begin{theorem}\label{theorem:exact-recovery-censored-1}
Consider the $\CPDS(p,q,\alpha,K_n)$ model from Definition~\ref{defn:CPDS} with $K_n = \lfloor \rho n\rfloor $, $\alpha_n = \nicefrac{t\log n}{n}$ for some fixed $\rho, p, q \in (0,1)$ and $t>0$.
Suppose that $t\rho \Delta_+(p,q) > 1$. Then there is a spectral algorithm whose output $\hat{S}_n$ satisfies $\lim_{n\to\infty}\PR(\hat{S}_n = \true) =1$.  
\end{theorem}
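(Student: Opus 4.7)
The plan is to mimic the spectral-then-cleanup strategy behind Theorem~\ref{theorem:dense-subgraph-1}, with the signed adjacency matrix $A$ from \eqref{defn:signed-adj-matrix} in place of the ordinary adjacency matrix. First I would use a linear combination of the top two eigenvectors of $A$ to produce a preliminary estimator $\widetilde{S}$ satisfying $|\widetilde{S}\triangle \true| = o(K)$ with high probability. Then I would refine $\widetilde{S}$ vertex-by-vertex with a log-likelihood-ratio test based on each vertex's revealed edges to $\widetilde{S}$. The hypothesis $t\rho\Delta_+(p,q) > 1$ is exactly what makes each individual test correct with probability $1 - o(n^{-1})$, so a union bound over the $n$ vertices closes the argument.

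For the spectral phase, the key observation is that $M := \mathbb{E}[A \mid \true]$ has the rank-two form $\beta_q\,\mathbf{1}\mathbf{1}^\top + (\beta_p - \beta_q)\,\mathbf{1}_{\true}\mathbf{1}_{\true}^\top$, with $\beta_p,\beta_q$ the expected signed edge-weights inside $\true\times \true$ and outside, respectively. The two nonzero eigenvalues are of order $\alpha n = t\log n$, so $M$ has a large relative spectral gap above the third eigenvalue. I would establish $\|A - M\|_{\mathrm{op}} = O(\sqrt{t\log n})$ with high probability, whence a Davis--Kahan bound shows that $\mathrm{span}(v_1,v_2)$ is within $o(1)$ principal angle of $\mathrm{span}(\mathbf{1},\mathbf{1}_{\true})$. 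An explicit linear combination $v = c_1 v_1 + c_2 v_2$ then approximates $\mathbf{1}_{\true}$ in $\ell_2$; selecting the top $K$ coordinates of $v$ produces $\widetilde{S}$ with $|\widetilde{S}\triangle \true| = o(K)$.

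For the cleanup, for each vertex $i$ I would form
\begin{equation*}
L_i \;=\; \sum_{j \in \widetilde{S}\setminus\{i\}} \log \frac{\mathbb{P}_p(\omega_{ij})}{\mathbb{P}_q(\omega_{ij})},
\end{equation*}
where $\omega_{ij}\in\{\texttt{present},\texttt{absent},\texttt{censored}\}$ is the observed status of $\{i,j\}$ and $\mathbb{P}_p,\mathbb{P}_q$ are the censored-Bernoulli laws on these outcomes with underlying parameters $p$ and $q$ and censoring rate $\alpha$. Since $|\widetilde{S}\triangle \true| = o(K)$, $L_i$ differs by at most $o(\log n)$ from the analogous sum taken over $\true$, which is a sum of $K-1$ i.i.d.\ terms whose Cram\'er/Chernoff exponent is exactly $\alpha\Delta_+(p,q)$. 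The individual misclassification probability is therefore $n^{-t\rho\Delta_+(p,q)+o(1)}$, and the assumption $t\rho\Delta_+(p,q)>1$ makes the final union bound over the $n$ vertices succeed.

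The main obstacle is the spectral concentration at the borderline sparsity $\alpha n\asymp \log n$: a naive matrix Bernstein bound is insufficient there, and the signed, three-valued nature of $A$ means standard concentration results for $\{0,1\}$ or symmetric $\pm 1$ matrices do not directly apply. I would address this with a Feige--Ofek style regularization---pruning an $o(n)$ set of atypically high-degree vertices and then applying either a Bandeira--van Handel inequality or a combinatorial trace bound to the pruned matrix---while checking that pruning does not disturb the rank-two structure of $M$ or the spectral gap. A secondary issue is that $L_i$ is not independent of $\widetilde{S}$; this can be handled by a sample split (for instance, partitioning the censoring indicators into two independent halves and using one half to build $\widetilde{S}$ and the other to compute $L_i$), which preserves the Chernoff exponent up to a harmless factor.
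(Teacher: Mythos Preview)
Your two-stage ``spectral then cleanup'' plan is a legitimate and genuinely different route from the paper's, but as written it has a gap at exactly the step you flag as secondary. Splitting the revealed edges into two \emph{equal} halves means the cleanup statistic $L_i$ is built from a sample with revelation rate $\alpha/2$, so its Chernoff exponent is $\tfrac{t}{2}\rho\,\Delta_+(p,q)$, not $t\rho\,\Delta_+(p,q)$. That factor of $1/2$ is not harmless: the hypothesis $t\rho\,\Delta_+(p,q)>1$ does not imply $\tfrac{t}{2}\rho\,\Delta_+(p,q)>1$, and your union bound over $n$ vertices fails throughout the range $1<t\rho\,\Delta_+(p,q)\le 2$. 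The standard repair is an asymmetric split---devote a vanishing fraction $\varepsilon_n\to 0$ of the revelations to the spectral step and $1-\varepsilon_n$ to the cleanup---but then you must also verify that Davis--Kahan still yields $|\widetilde S\triangle\true|=o(K)$ at revelation rate $\varepsilon_n\alpha$, which forces $\varepsilon_n\log n\to\infty$. A leave-one-out coupling is another way to sidestep the split entirely.

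The paper instead dispenses with cleanup altogether. It applies the entrywise ($\ell_\infty$) eigenvector perturbation theory of Abbe--Fan--Wang--Zhong~\cite{Abbe2020} to the signed adjacency matrix directly, showing that for suitable signs $s_1,s_2$ the combination $s_1c_1^\star u_1+s_2c_2^\star u_2$ is entrywise within $O\bigl((\log\log n)^{-1}n^{-1/2}\bigr)$ of $A\vecS/\log n$; a Chernoff bound on the entries of $A\vecS$ (Lemma~\ref{lemma:concentration}) then shows that thresholding this eigenvector combination already achieves exact recovery, with no refinement step and no sample split. The only residual issue is the unknown sign pair $(s_1,s_2)$, handled by computing all four candidate sets and returning the one of maximum likelihood---which in turn requires a separate argument (Lemma~\ref{lemma:MLE}) that the MLE succeeds above threshold. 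Your approach trades the $\ell_\infty$ machinery for an elementary Davis--Kahan $\ell_2$ bound, at the price of the cleanup stage and its independence headaches; the paper's approach is algorithmically cleaner but leans on the heavier entrywise toolkit. Two smaller points: the regularization you worry about is unnecessary here, since at revelation rate $t\log n/n$ the bound $\|A-\mathbb{E}[A]\|_2=O(\sqrt{\log n})$ already follows from~\cite[Theorem~5]{Hajek2016} without any pruning; and a log-likelihood refinement is not usually counted as a \emph{spectral} step, so your algorithm arguably does not match the theorem's wording.
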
 
As remarked earlier, the spectral algorithm (Algorithm~\ref{alg:CPDS}) in this case uses a signed adjacency matrix. 
The next result finds the information theoretic threshold for the \CPDS~model.
\begin{theorem}\label{theorem:impossibility2} 
Consider the $\CPDS(p,q,\alpha,K_n)$ model from Definition~\ref{defn:CPDS} with $K_n = \lfloor \rho n\rfloor $ for some fixed $\rho\in (0,1)$. 
Suppose that $\min(p_n,q_n,1-p_n,1-q_n)=n^{-o(1)}$, 
and 
\begin{eq}\label{impossibility-condition}
   \limsup_{n\to\infty} \frac{\alpha_n n}{\log n} \times \rho \Delta_+(p_n,q_n) <  1.
\end{eq}
Given the unlabeled observation $G $ from $\CPDS(p,q,K,\alpha)$, we have that any estimator $\hat{S}_n$ satisfies $\lim_{n\to\infty} \PR(\hat{S}_n = \true) = 0$.  
\end{theorem}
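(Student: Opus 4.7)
The plan is to show that the Bayes-optimal MAP estimator $\hat{S}_{\mathrm{MAP}} := \arg\max_{|S| = K_n} \PR(G \mid \true = S)$ fails with probability tending to one; since MAP maximizes $\PR(\hat{S} = \true)$ over all estimators, this yields the conclusion for any $\hat{S}_n$. By exchangeability of the prior we condition on $\true = [K_n]$, and let
\[
\mu := \bigl(\alpha_n(1-p_n),\, \alpha_n p_n,\, 1-\alpha_n\bigr),\quad \nu := \bigl(\alpha_n(1-q_n),\, \alpha_n q_n,\, 1-\alpha_n\bigr)
\]
be the single-edge observation distributions on $\{\texttt{absent},\, \texttt{present},\, \texttt{censored}\}$ for within-$\true$ pairs and for cross or outside-$\true$ pairs, respectively. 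Since the censored atom contributes zero to every summand in the CH divergence, one obtains the reduction $\Delta_+(\mu,\nu) = \alpha_n\,\Delta_+(p_n,q_n)$. For each vertex $v$ define the genie-aided statistic
\[
T_v := \sum_{k \in \true \setminus \{v\}} \log \frac{\mu(e_{vk})}{\nu(e_{vk})}.
\]
For $i \in \true$ and $j \notin \true$, the log-likelihood ratio of $\true$ against the one-swap $(\true \setminus \{i\}) \cup \{j\}$ equals $T_i - T_j^{(-i)}$, where $T_j^{(-i)} = T_j - \log\mu(e_{ij})/\nu(e_{ij})$ differs from $T_j$ by a single bounded summand; if this ratio is non-positive for some such pair, MAP prefers the swap and fails.

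\textbf{Large-deviation rates.} For $j \notin \true$, the statistic $T_j$ is a sum of $K_n$ i.i.d.\ copies of $\log\mu(Y)/\nu(Y)$ with $Y \sim \nu$, having negative mean. A direct moment-generating-function calculation gives $\mathbb{E}_\nu[e^{\lambda \log\mu/\nu}] = \sum_a \mu(a)^\lambda \nu(a)^{1-\lambda} = 1 - D_\lambda(\mu,\nu)$ for $\lambda \in [0,1]$, so Cramer's theorem yields a rate function at zero of $I(0) = -\log(1-\Delta_+(\mu,\nu))$. Combining this with the reduction above and setting $t_n := \alpha_n n/\log n$, Cramer's large-deviation lower bound gives
\[
\PR(T_j \geq 0) \;\geq\; n^{-\rho\,t_n\,\Delta_+(p_n,q_n)\,(1+o(1))},
\]
and an analogous computation for $i \in \true$ (with $X \sim \mu$) yields $\PR(T_i \leq 0) \geq n^{-\rho\,t_n\,\Delta_+(p_n,q_n)\,(1+o(1))}$. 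The non-degeneracy hypothesis $\min(p_n,q_n,1-p_n,1-q_n) = n^{-o(1)}$ is used here to ensure that the summands and their moment generating functions are controlled uniformly in $n$, so that the Cramer asymptotic remains valid with the stated exponent.

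\textbf{Conclusion via first- and second-moment arguments, and main obstacle.} Under the impossibility hypothesis there exists $\epsilon > 0$ with $\rho t_n \Delta_+(p_n,q_n) \leq 1-\epsilon$ for all large $n$, so $\mathbb{E}\,\#\{j \notin \true : T_j \geq 0\} \geq (1-\rho) n^{\epsilon-o(1)} \to \infty$. The variables $\{T_j\}_{j\notin\true}$ are \emph{jointly independent} because the edge sets $\{(j,k) : k \in \true\}$ are pairwise disjoint across $j$'s, so $\PR(\max_{j\notin\true} T_j \geq 0) \to 1$ follows immediately. The variables $\{T_i\}_{i\in\true}$ share only the single summand $\log\mu(e_{ii'})/\nu(e_{ii'})$ between any pair $i \neq i'$; conditioning on this summand makes the residuals independent, and a standard second-moment bound on $\#\{i \in \true : T_i \leq 0\}$ yields $\PR(\min_{i\in\true} T_i \leq 0) \to 1$. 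Replacing the threshold $0$ by $\pm M$ for a uniform bound $M = o(\log n)$ on $|\log\mu(e)/\nu(e)|$ (permitted by the non-degeneracy hypothesis) does not alter the Cramer exponent, and on the intersection of the two high-probability events, a minimizer $i^*$ of $T_i$ and maximizer $j^*$ of $T_j$ satisfy $T_{i^*} \leq -M$ and $T_{j^*} \geq M$, giving $T_{i^*} - T_{j^*}^{(-i^*)} = T_{i^*} - T_{j^*} + \log\mu(e_{i^*j^*})/\nu(e_{i^*j^*}) \leq -M < 0$, so MAP strictly prefers the swap. The main technical difficulty is establishing the Cramer lower bound with $n$-dependent parameters $p_n,q_n$: one must verify that the optimal Cramer tilt lies in $[0,1]$ so the rate function genuinely equals $-\log(1-\Delta_+(\mu,\nu))$, and that the non-degeneracy hypothesis produces uniform moment-generating-function estimates; the second-moment computation for $\{T_i\}$ is then standard once one recognises that the single shared summand is a vanishing fraction of the $\sim \rho n$ total.
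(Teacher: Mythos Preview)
Your outline is sound and follows a genuinely different route from the paper's. Both reduce to MAP failure via one-swaps, but the mechanics differ in two places.

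For the tail lower bound, where you invoke Cram\'er for $\PR(T_j\ge 0)$, the paper instead fixes the explicit degree profile $d=\bigl(\lceil p^{x^*}q^{1-x^*}t\rho\log n\rceil,\ \lceil(1-p)^{x^*}(1-q)^{1-x^*}t\rho\log n\rceil\bigr)$ at the CH-optimizer $x^*$ and lower-bounds $\PR(\text{profile}=d)$ directly via Poisson approximation plus Stirling. This is the change-of-measure step of Cram\'er made concrete, and it sidesteps the triangular-array issue you correctly flag; your route is shorter but that issue is real and would need an explicit argument.

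For the dependence among $\{T_i\}_{i\in\true}$, the paper does \emph{not} run a second moment over all of $\true$. It draws random subsets $V_1\subset\true$, $V_2\subset(\true)^c$ of size $2n^{1-\delta}$ and passes to the subset $V'\subset V_1\cup V_2$ of vertices whose edges \emph{within} $V_1\cup V_2$ are all censored; since $\alpha_n\binom{|V|}{2}\ll|V|$ this retains most of $V$, and on $V'$ the profiles toward $\true\setminus V_1$ are exactly independent, so a Chebyshev bound finishes. Your second-moment sketch is the one place I would call a gap: ``a single shared summand out of $\sim\rho n$'' is not by itself enough, because the relevant pairwise covariance equals $\operatorname{Var}_L f(L)$ with $f(l)=\PR(R_i\le -M-l)$, and in the large-deviation regime $f$ can vary by a factor $e^{O(M)}=n^{o(1)}$ across the support of $L$. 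You recover a usable bound only after exploiting $\PR(L=0)=1-\alpha_n$, which gives $\operatorname{Cov}(Z_i,Z_{i'})\le C\,\alpha_n\, n^{o(1)}\pi^2$; this in turn requires $\alpha_n=n^{o(1)-1}$, precisely the small-$\alpha$ regime that the paper's sparsification trick also exploits. So your route can be completed, but not for the reason you stated.

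A last stylistic difference: you exhibit a \emph{strictly better} swap (so MAP certainly fails on the good event), whereas the paper exhibits $\omega(1)$ \emph{equiprobable} swaps (identical degree profiles) and bounds MAP's success probability by $1/k\to 0$. Either device suffices.
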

The achievability and impossibility parameter regimes in the CPDS model are summarized by Figure \ref{fig:CPDS-regimes}, in the case where $p$, $q$, $t$, and $\rho$ are constants.
\begin{figure}
    \centering
    \includegraphics[scale=0.35]{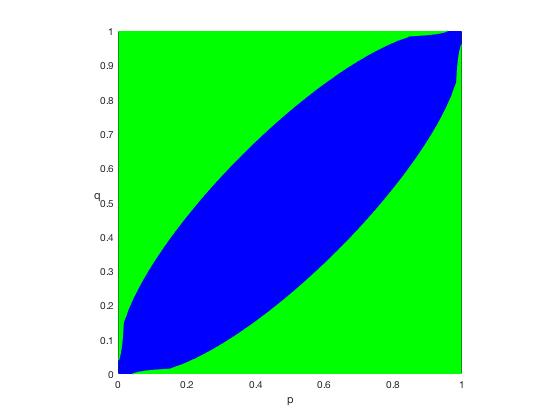}
        \caption{Phase diagram for the CPDS model, for $\rho = \frac{1}{4}$ and $t  = 100$. In the green region, estimation of $\true$ is possible by a spectral algorithm, while in the blue region, it is information-theoretically impossible.}
    \label{fig:CPDS-regimes}
\end{figure}
Next, we relate the achievability conditions for the \PDS~and the  \CPDS~problems. In particular, we prove the following.
\begin{lemma}\label{lemma:connected-conditions}
Let $a, b > 0$ be constants.
Suppose $\alpha_n = \frac{t_n \log n}{n}$, $\lim_{n\to\infty}p_n t_n = a$, and $\lim_{n\to\infty}q_n t_n = b$, where $p_n, q_n = o(1)$. 
Then $\lim_{n\to\infty}t_n \Delta_+(p_n,q_n) = f(a,b)$. 
\end{lemma}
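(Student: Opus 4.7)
\emph{Proof plan.} The plan is to reduce the statement to a calculus exercise in three steps: (i) approximate $D_x(p_n,q_n)$ by a simpler expression using $p_n, q_n = o(1)$, (ii) upgrade the pointwise limit to uniform convergence on $[0,1]$ so that the maximum and the limit may be exchanged, and (iii) carry out the one-dimensional maximization and identify its value with $f(a,b)$.

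For (i) and (ii), since $(p,1-p)$ and $(q,1-q)$ are probability vectors, the linear terms in the definition of $D_x$ contribute $x + (1-x) = 1$, yielding
\[D_x(p,q) = 1 - p^x q^{1-x} - (1-p)^x (1-q)^{1-x}.\]
Taylor-expanding the last term about $(1,1)$, which is valid since $p_n, q_n = o(1)$, gives
\[D_x(p_n, q_n) = x p_n + (1-x) q_n - p_n^x q_n^{1-x} + O(p_n^2 + q_n^2),\]
with the implicit constant uniform over $x \in [0,1]$. Multiplying by $t_n$ and using $t_n p_n \to a$, $t_n q_n \to b$, the error $t_n \cdot O(p_n^2 + q_n^2)$ is $o(1)$, while
\[t_n\, p_n^x q_n^{1-x} \;=\; (t_n p_n)^x (t_n q_n)^{1-x} \;\longrightarrow\; a^x b^{1-x}\]
uniformly in $x \in [0,1]$ by equicontinuity of $(u,v,x) \mapsto u^x v^{1-x}$ on a compact subset of $(0,\infty)^2 \times [0,1]$. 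Hence $t_n D_x(p_n, q_n) \to g(x) := xa + (1-x)b - a^x b^{1-x}$ uniformly on $[0,1]$, and therefore $t_n \Delta_+(p_n, q_n) \to \max_{x \in [0,1]} g(x)$.

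For (iii), assume first $a \neq b$. Differentiating, $g'(x) = (a-b) - (\log a - \log b)\, a^x b^{1-x}$, and the unique critical point $x^{\ast}$ on $[0,1]$ solves $a^{x^{\ast}} b^{1-x^{\ast}} = c$, where $c := (a-b)/(\log a - \log b)$; the log-mean inequality places $c$ strictly between $a$ and $b$, so $x^{\ast} \in (0,1)$. Taking logs gives $x^{\ast} = \log(c/b)/\log(a/b)$, hence $x^{\ast} a + (1-x^{\ast}) b = b + c \log(c/b)$ and therefore $g(x^{\ast}) = b - c + c\log(c/b)$. Verifying that this equals $f(a,b) = a - c\log(\e a/c)$ reduces to the identity $c\log(a/b) = a-b$, which is precisely the definition of $c$; this identity also explains the stated symmetry of $f$. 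The degenerate case $a=b$ is immediate since $g \equiv 0$ and $f(a,a) = 0$ by definition.

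The only step requiring any care is the uniformity in $x$ of both the Taylor remainder and the convergence $(t_n p_n)^x (t_n q_n)^{1-x} \to a^x b^{1-x}$, which is what allows $\max$ and $\lim$ to be exchanged; both reduce to continuity of elementary functions on a compact domain, and present no real obstacle. The rest is a routine critical-point computation.
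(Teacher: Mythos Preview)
Your proof is correct. The approach differs from the paper's in the order of operations: the paper first optimizes $D_x(p_n,q_n)$ over $x$ for each fixed $n$ (obtaining an explicit formula for the maximizer $x^\star$ in terms of $p_n,q_n$), substitutes, and only then multiplies by $t_n$ and passes to the limit; you instead establish uniform convergence of $t_n D_x(p_n,q_n)$ to $g(x)=xa+(1-x)b-a^xb^{1-x}$ on $[0,1]$, swap $\max$ and $\lim$, and optimize the limiting function directly. Your route is slightly cleaner, since it isolates the analytic issue (uniformity in $x$, which the paper hides in an $\asymp$) from the purely algebraic one-variable optimization, and avoids carrying the $p_n,q_n$-dependent maximizer through the limit. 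The paper's route has the minor advantage that the optimizer is written down explicitly at finite $n$, which could be reused elsewhere. Either way, the identification of $\max_x g(x)$ with $f(a,b)$ reduces to the same logarithmic-mean identity $c\log(a/b)=a-b$ with $c=(a-b)/(\log a-\log b)$, and your verification of this is correct.
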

In light of this lemma, we now have the following result for the \CPDS~problem that exhibits the identical
recovery regime when $p_n,q_n = o(1)$ as compared to the \PDS~problem in Theorem~\ref{theorem:dense-subgraph-1}.
\begin{corollary}\label{cor-CPDS-sparse-PDS-1}
Consider the $\CPDS(p_n,q_n,\alpha_n,K_n)$ model from Definition~\ref{defn:CPDS} with $K_n = \lfloor \rho n\rfloor $, where $\rho\in (0,1)$ is fixed. Let $a,b>0$ be constants  satisfying $\alpha_n p_n = \frac{a \log n}{n}, \alpha_n q_n = \frac{b \log n}{n}$.
Moreover, let $p_n,q_n = o(1)$.
If $ \rho f(a,b) >  1$, then there is a spectral algorithm that recovers $\true$ with probability $1-o(1)$.
\end{corollary}
\begin{proof}
Let $A'$ be the matrix where $A_{ij}' = 1$ if the edge $\{i,j\}$ is present and $A_{ij}' = 0$ otherwise. That is, we encode both the censored and absent edges by $0$. Observe that $A'$ is distributed as the adjacency matrix of $\PDS\big(\frac{a \log n}{n}, \frac{b \log n}{n}, K_n\big)$. The statement then follows from Theorem \ref{theorem:dense-subgraph-1}.
\end{proof}

Finally, we consider the optimality of spectral algorithms under a more general parameter regime.
\begin{theorem}\label{theorem:sparse-case-sharpness}
Consider the $\CPDS(p_n,q_n,\alpha_n,K_n)$ model from Definition~\ref{defn:CPDS} with $K_n = \lfloor \rho n\rfloor $, where $\rho\in (0,1)$ is fixed. Let $a,b>0$ be constants  satisfying $\alpha_n p_n = \frac{a \log n}{n}, \alpha_n q_n = \frac{b \log n}{n}$. Then the following hold: 
\begin{enumerate}
    \item If $ \liminf_{n\to\infty} \frac{\alpha_n n}{\log n} \times \rho \Delta_+(p_n,q_n) >  1$, then there is a spectral
    algorithm that recovers $\true$ with probability $1-o(1)$.
    \item If $ \limsup_{n\to\infty} \frac{\alpha_n n}{\log n} \times \rho \Delta_+(p_n,q_n) <  1$ and $\min(1-p_n,1-q_n)=n^{-o(1)}$,  then every algorithm fails to recover $\true$ with probability $1-o(1)$.
\end{enumerate}
\end{theorem}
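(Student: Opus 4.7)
Both parts reduce, via a subsequence / compactness argument, to results already in the excerpt: Corollary~\ref{cor-CPDS-sparse-PDS} in the sparse regime $p_n,q_n=o(1)$, and Theorems~\ref{theorem:exact-recovery-censored-1} and \ref{theorem:impossibility2} in the dense regime where $p_n,q_n$ stay bounded away from $0$. The coupling $\alpha_n p_n=a\log n/n$, $\alpha_n q_n=b\log n/n$ fixes the ratio $p_n/q_n\equiv a/b$, so $p_n\to 0$ if and only if $q_n\to 0$; moreover, along any subsequence $(n_k)$ on which $(p_{n_k},q_{n_k})\to(p^\ast,q^\ast)\in[0,1]^2$, the rescaled density $t_{n_k}:=\alpha_{n_k}n_k/\log n_k$ converges to $a/p^\ast$ (read as $+\infty$ when $p^\ast=0$), and either $p^\ast=q^\ast=0$ or $p^\ast,q^\ast\in(0,1]$.

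\textbf{Achievability.} Because the parameters $(p_n,q_n,\alpha_n)$ are known, we may define the proposed spectral estimator by branching on $t_n:=\alpha_n n/\log n$: if $t_n$ exceeds a large absolute constant, run the top-two-eigenvector adjacency algorithm of Theorem~\ref{theorem:dense-subgraph-1} after encoding censored edges as absent (valid by Corollary~\ref{cor-CPDS-sparse-PDS}); otherwise, run the signed-adjacency algorithm of Theorem~\ref{theorem:exact-recovery-censored-1}. Assume the $\liminf$ hypothesis and suppose toward contradiction that this procedure fails with probability $\geq\epsilon>0$ along a subsequence $(n_k)$; extract a sub-subsequence with $(p_{n_k},q_{n_k})\to(p^\ast,q^\ast)$. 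If $p^\ast=0$, Lemma~\ref{lemma:connected-conditions} converts the hypothesis into $\rho f(a,b)>1$, and Corollary~\ref{cor-CPDS-sparse-PDS}(1) succeeds along $(n_k)$, a contradiction. If $p^\ast>0$, then $q^\ast=bp^\ast/a>0$ and continuity of $\Delta_+$ on $(0,1)^2$ yields $\rho t^\ast\Delta_+(p^\ast,q^\ast)>1$, so Theorem~\ref{theorem:exact-recovery-censored-1} succeeds. The one nontrivial point is that Theorem~\ref{theorem:exact-recovery-censored-1} is stated for constant $(p,q)$, so applying it with $(p_{n_k},q_{n_k})$ varying within a compact subset of $(0,1)^2$ requires verifying that its proof is uniform in $(p,q,t)$ over such compacts -- that is, that the spectral-concentration and entrywise estimates driving the proof carry constants that are continuous and bounded on compacts. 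I expect this uniformization to be the main technical step.

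\textbf{Impossibility.} Assume the $\limsup$ hypothesis and $\min(1-p_n,1-q_n)=n^{-o(1)}$, and suppose some estimator satisfies $\PR(\hat{S}_{n_k}=\true)\geq\epsilon>0$ along $(n_k)$; extract a convergent sub-subsequence. If $p^\ast=0$, Lemma~\ref{lemma:connected-conditions} gives $\rho f(a,b)<1$ and Corollary~\ref{cor-CPDS-sparse-PDS}(2) supplies the contradiction. If $p^\ast>0$, then $q^\ast>0$ and $\min(p_{n_k},q_{n_k})=\Theta(1)=n^{-o(1)}$; combined with the standing assumption, all four of $p_n,q_n,1-p_n,1-q_n$ are $n^{-o(1)}$ along the subsequence, so Theorem~\ref{theorem:impossibility2} applies verbatim. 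No additional technical work is needed for this direction -- it is a clean consequence of compactness plus the two impossibility results already in hand.
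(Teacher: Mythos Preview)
Your subsequence/compactness scaffolding is reasonable, but it misses the central device the paper uses and leaves two genuine gaps --- one you flag and one you do not.

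\textbf{The missing idea.} The paper does not try to push Theorems~\ref{theorem:exact-recovery-censored-1} or~\ref{theorem:impossibility2} through varying $(p_n,q_n,t_n)$ at all. Instead it introduces a \emph{reduction}: given $\CPDS(p,q,\alpha,K)$ and any $\alpha'\ge\alpha$, independently relabel each censored pair as ``absent'' with probability $(\alpha'-\alpha)/(1-\alpha)$; the result is distributed as $\CPDS(p\alpha/\alpha',q\alpha/\alpha',\alpha',K)$. Writing $t_n=\alpha_n n/\log n$ and $g(t,x)=t-a^xb^{1-x}-(t-a)^x(t-b)^{1-x}$, one checks (via weighted AM--GM on $\partial_t g$) that $\max_{x}g(t,x)$ is continuous and nonincreasing in $t$, with limit $f(a,b)$ as $t\to\infty$. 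For achievability, this lets the paper pick a single \emph{fixed} target: either $\alpha'=1$ (reducing to the PDS model when $\rho f(a,b)>1$) or $\alpha'=t_0\log n/n$ for a fixed $t_0$ with $t_0\rho\Delta_+(a/t_0,b/t_0)>1$, then apply Theorem~\ref{theorem:exact-recovery-censored-1} with \emph{constant} $p=a/t_0$, $q=b/t_0$. No uniformity over compacts is needed --- the reduction replaces your ``main technical step'' entirely.

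\textbf{The unacknowledged gap.} Your impossibility argument is not clean in the sparse case. When $p^\ast=0$ you invoke Corollary~\ref{cor-CPDS-sparse-PDS}(2), but the paper's direct proof of that corollary goes through Theorem~\ref{theorem:impossibility2}, which requires $\min(p_n,q_n)=n^{-o(1)}$. Nothing in the hypotheses prevents $p_n$ from being as small as $a\log n/n$ (i.e.\ $\alpha_n=1$), in which case $p_n=n^{-1+o(1)}$ and Theorem~\ref{theorem:impossibility2} does not apply. Relying on Corollary~\ref{cor-CPDS-sparse-PDS}(2) here is also circular, since the paper notes that corollary itself follows from Theorem~\ref{theorem:sparse-case-sharpness}. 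The paper's fix is again the reduction, run in reverse: set $\alpha_n'=\min(\alpha_n,\log^2(n)/n)$, show via monotonicity that the $\limsup$ condition is preserved, observe that now $p_n'=a\log n/(\alpha_n' n)\ge a/\log n=n^{-o(1)}$, apply Theorem~\ref{theorem:impossibility2} to the primed model, and transfer impossibility back to $\CPDS(p_n,q_n,\alpha_n,K_n)$ since the primed model is a degradation of it.
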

The following result follows directly from Theorem \ref{theorem:sparse-case-sharpness}, complementing Corollary \ref{cor-CPDS-sparse-PDS-1}.

\begin{corollary}\label{cor-CPDS-sparse-PDS-2}
Under the assumptions of Corollary \ref{cor-CPDS-sparse-PDS-1}, 
if $ \rho f(a,b) <  1$, then every algorithm fails to recover $\true$ with probability $1-o(1)$.
\end{corollary}

\subsection{Proof ideas} 
We first give some intuition behind the success of spectral algorithms for estimating~$\true$ for the $\PDS$ problem. The ideas for the $\SL$ problem are similar in spirit.
Let $A^{\star} = \mathbb{E}[A]$ be the expected value of the adjacency matrix. Then $A^{\star}$ is a rank-$2$ matrix with two non-zero eigenpairs $(\lambda_1^{\star}, u_1^{\star})$ and $(\lambda_2^{\star}, u_2^{\star})$, where $\lambda_1^\star \geq \lambda_2^\star$. Both eigenvectors are block eigenvectors, taking on one value on entries corresponding to $\true$, and another value on entries corresponding to $[n] \setminus \true$. Given the matrix $A$, we can compute its eigenpairs $(\lambda_1, u_1), (\lambda_2, u_2), \dots$, where $\lambda_1 \geq \lambda_2 \geq \cdots$. Drawing on eigenvector perturbation theory, one might expect $u_1,u_2$ to qualitatively behave like $u_1^{\star},u_2^{\star}$. However, for the exact recovery problem, one needs a strong bound on the $\ell_{\infty}$ norm,  referred to as an \emph{entrywise} perturbation bound. That is, if it were the case that $\Vert u_1 - u_1^{\star}\Vert_{\infty}$ and $\Vert u_2 - u_2^{\star}\Vert_{\infty}$ were small, then we could threshold the values of $u_1$ or $u_2$ in order to accurately estimate $\true$.

Unfortunately, $u_1$ and $u_2$ are not well enough approximated by $u_1^{\star}$ and $u_2^{\star}$ in an entrywise sense. On the other hand, work by Abbe, Fan, Wang and Zhong~\cite{Abbe2020} identifies conditions under which an eigenvector $u_i$ of a random matrix $A$ is well-approximated by a different vector, namely the ratio $\nicefrac{A u_i^{\star}}{\lambda_i^{\star}}$. Such a ratio is often straightforward to analyze; in community detection scenarios, each entry of $\nicefrac{A u_i^{\star}}{\lambda_i^{\star}}$ is a weighted summation of independent Bernoulli random variables. Abbe et~al.~\cite{Abbe2020} applied their entrywise eigenvector analysis technique in order to show optimality of a simple spectral algorithm for community detection in the SBM (see \cite[Theorem 3.2]{Abbe2020}), in addition to other estimation problems involving matrices ($\mathbb{Z}_2$-synchronization and noisy matrix completion). Recently, Dhara, Gaudio, Mossel and Sandon~\cite{Dhara2021} applied the entrywise eigenvector analysis technique to show that a spectral algorithm is optimal for the community detection problem in a censored symmetric stochastic block model, where one has two communities of approximately equal sizes and the edge densities inside the two communities are equal.

One important technical distinction in our work compared to the community detection literature is that in community detection, we seek to determine the communities up to a global flip of the community labels. On the other hand, in the PDS model, we need to specifically identify the vertices within the planted dense subgraph. The entrywise eigenvector analysis technique allows us to bound
\[\min_{s\in \{-1,1\}} \left\Vert su_i - \frac{A u_i^{\star}}{\lambda_i^{\star}}\right \Vert_{\infty}.\]
That is, we can show that $u_i$ or $-u_i$ is close to $ \nicefrac{A u_i^{\star}}{\lambda_i^{\star}}$ in an entrywise sense, but the technique gives us no way to dissambiguate the two options. Moreover, we use a weighted combination $u_1$ and $u_2$ in the PDS algorithm, with weights $c_1^{\star}$ and $c_2^{\star}$. We are able to show that 
\[\min_{s_1, s_2\in \{-1,1\}} \left\Vert s_1 c_1^{\star} u_1 + s_2 c_2^{\star} u_2- A \left(\frac{c_1^{\star} u_1^{\star}}{\lambda_1^{\star}} + \frac{c_2^{\star} u_2^{\star}}{\lambda_2^{\star}} \right)\right \Vert_{\infty}\]
is small, so that the recovery performance of $s_1 c_1^{\star} u_1 + s_2 c_2^{\star} u_2$ matches the recovery performance of $A \left(\frac{c_1^{\star} u_1^{\star}}{\lambda_1^{\star}} + \frac{c_2^{\star} u_2^{\star}}{\lambda_2^{\star}} \right)$, for those $s_1, s_2$ which minimize the above expression. It remains to determine $s_1$ and $s_2$, which amounts to determining the orientations of $u_1$ and $u_2$. Given two choices for $s_1$ and $s_2$, we obtain four candidate linear combinations, and therefore four candidates estimates of $\true$. In order to determine which one is correct, we choose the one that maximizes the likelihood of the observed adjacancy matrix $A$. Since choosing the correct orientation amounts to maximizing the likelihood over a restricted number of candidates, we determine the regime where the Maximum Likelihood estimator succeeds in recovering the communities. In this regime, it suffices to show that $\true$ is one of the four candidates, with high probability.

The weights $c_1^{\star}$ and $c_2^{\star}$ cannot be arbitrary, but must be chosen correctly in order to achieve our optimal results. Using the insight that $c_1^{\star} u_1 + c_2^{\star} u_2$ (in the correct orientation) behaves like $v := A \left(\frac{c_1^{\star} u_1^{\star}}{\lambda_1^{\star}} + \frac{c_2^{\star} u_2^{\star}}{\lambda_2^{\star}} \right)$, we choose the weights so that the vector $v$ optimally recovers the planted dense subgraph. In particular, we choose the weights so that $v_i = \frac{1}{\sqrt{K}}\sum_{j \in \true} A_{ij}$.

The spectral algorithm for the CPDS model is similar in spirit to the spectral algorithm for the PDS model. The key difference is that unlike in the PDS model, where each pair of vertices is associated with one of two possible observations ($\texttt{present, absent}$), each pair of vertices is associated with one of three possible observations ($\texttt{present, absent, censored}$). Therefore, the adjacency matrix must be encoded using three numeric values. The choice of encoding matrix in \eqref{defn:signed-adj-matrix} is such that it balances the relative information contributed by revealed edges and non-edges. 
Recently,~\cite{Dhara2021} studied spectral algorithms using such signed adjacency matrices for community detection on a censored version of the stochastic block model. It was shown that while the spectral algorithms using signed adjacency matrices are optimal when the within community edge probabilities are the same, it may be the case that these spectral algorithms are sub-optimal when the within community edge probabilities are different (cf.~\cite[Theorem 2.6]{Dhara2021}). 
Note that the \CPDS~model would correspond to a censored stochastic block model, where the within community edge probabilities are different, and moreover, the community sizes are asymptotically unequal. The results in this paper show that these spectral algorithms are still optimal for the recovery problem in~\CPDS~and in particular are in contrast with the counterexamples provided in \cite[Theorem 2.6]{Dhara2021}. 
Finally, our analysis of the spectral algorithm in the censored case requires us to additionally show that the Maximum Likelihood Estimator (MLE) achieves exact recovery up to the information-theoretic threshold (see Lemma~\ref{lemma:MLE}).

The proof for the impossibility of recovery follows by showing that the Maximum a Posteriori (MAP) estimator, which is an optimal estimator, would fail in recovering $\true$. The idea is closely related to the analysis of the Censored SBM in \cite{Dhara2021}. We first show that the MAP and MLE are equivalent, and then identify a condition under which the MLE fails to determine $\true$. The condition requires several pairs of vertices, such that if we swap their assignments, the likelihood of the observed graph does not change. If there are enough such pairs, then the MLE is likely to fail in determining~$\strue$.

\subsection{Discussion and future work}
Our work leaves several directions for future work.
\begin{enumerate}
    \item 
    We focus on the case where $K = \Theta(n)$. While the entrywise eigenvector analysis method of~\cite{Abbe2020} allows us to handle slightly sublinear $K$, it does not allow us to match existing results for SDPs in the PDS and SL problems. When $K = \omega\left(\nicefrac{n}{\log n} \right)$, the SDP threshold matches the information-theoretic threshold with sharp constants \cite{Hajek2016-hidden-community}. On the other hand, when $K = o\left(\nicefrac{n}{\log n}\right)$, then the SDP is order-wise suboptimal \cite{Hajek2016-hidden-community}. Finally, if $K = \Theta\left(\nicefrac{n}{\log n}\right)$, then the SDP is suboptimal by a constant factor, though order-wise optimal \cite{Hajek2016-hidden-community}. It was conjectured that a spectral algorithm would require a stronger signal than the SDP algorithm. Comparing the performance of SDPs and spectral algorithms for sublinear $K$ is an interesting avenue for future work.
    \item 
    In the CPDS model, can we consider a more general censoring distribution that can model the case where the edges' statuses are not missing at random? 
\end{enumerate}

\noindent {\bf Organization.}
The remainder of the paper is structured as follows. 
We start by setting up some preliminary notation below. 
In Section~\ref{sec:algs} we provide the main algorithms and state the main algorithmic results. 
The analyses of the spectral algorithms for \PDS~and~\CPDS ~are provided respectively in Sections~\ref{sec:PDS}~and~\ref{sec:CPDS}. We  prove the impossibility result for \CPDS~in Section~\ref{sec:impossible}, in addition to demonstrating the tightness of our results under a more general parameter regime. The spectral recovery in the \SL~problem is analyzed in Section~\ref{sec:Gaussian}.
\\

\noindent {\bf Notation.} For a vector $x \in \mathbb{R}^n$, we define $\Vert x \Vert_2 = (\sum_{i=1}^n x_i^2)^{1/2}$ and $\Vert x \Vert_{\infty} = \max_i |x_i|$. For a matrix $M \in \mathbb{R}^{n \times d}$, we use $M_{i \cdot}$ to refer to its $i$-th row, represented as a row vector. 
Given a matrix $M$, $\Vert M \Vert_2 = \max_{\Vert x \Vert_2 = 1} \Vert M x \Vert_2$ is the spectral norm, 
and $\Vert M \Vert_{2 \to \infty} = \max_i \Vert M_{i \cdot}\Vert_2$ is the matrix $2 \to \infty$ norm. 
We use the convention that $\log$ denotes natural logarithm, and write $\log^k(n)$ to mean $\left(\log(n)\right)^k$. Finally, given two sequences $\{s_n\}_{n=1}^{\infty}$ and $\{r_n\}_{n=1}^{\infty}$, we write $s_n \asymp r_n$ when $\lim_{n\to \infty} \frac{s_n}{r_n} = 1$.

\section{Main Algorithms} \label{sec:algs} 
\subsection{Spectral recovery of \PDS} 
We will first consider spectral recovery in the \PDS~model for the regime when the parameters are given by \eqref{eq:PDS-parameters}.
The algorithm uses a linear combination of the top two eigenvectors of the adjacency matrix. Let us start by describing how to calculate the coefficients of this linear combination:\\

\begin{breakablealgorithm}
\caption{Calculation of weights}\label{alg:weights}
\begin{algorithmic}[1]
\Require{An $n \times n$ matrix $B$ of rank $2$, and a $K\in \N$.}
\vspace{.2cm}
\Ensure{A pair of weights.}
\State Find the top two eigenpairs of $B$, denoting them $(\gamma_1, w_1)$ and $(\gamma_1, w_2)$, where $\gamma_1 \geq \gamma_2$.
\State Letting $\vecS$ be the vector whose first $K$ entries are $1/\sqrt{K}$ and whose other entries are $0$, let $c_1^{\star}$ and $c_2^{\star}$ be such that $$\frac{c_1^{\star} \log(n)}{\gamma_1} w_1 + \frac{c_2^{\star} \log(n)}{\gamma_2} w_2 = \vecS.$$
\State Return $(c_1^{\star}, c_2^{\star})$.
\end{algorithmic}
\end{breakablealgorithm}
\vspace{.6cm}
Based on the weights above, we now describe the spectral algorithm: \\
\begin{breakablealgorithm}
\caption{Spectral recovery in the PDS model}\label{alg:PDS}
\begin{algorithmic}[1]
\Require{Parameters $a$, $b$, and $K$; an adjacency matrix $A$ of $\PDS$ model with parameters in \eqref{eq:PDS-parameters}.}
\vspace{.2cm}
\Ensure{An estimate of the planted dense subgraph vertices.}
\State Let $B$ be the matrix where $B_{ij} = \frac{a \log n}{n}$ when $i, j \leq K$, and $B_{ij} = \frac{b \log n}{n}$ otherwise. \label{step:PDS1}
\State Apply Algorithm \ref{alg:weights} to $(B, K)$, obtaining the weights $(c_1^{\star}, c_2^{\star})$.\label{step:PDS2}
\State Compute the top two eigenpairs of $A$, denoting them $(\lambda_1, u_1)$ and $(\lambda_2, u_2)$, where $\lambda_1 \geq \lambda_2 \geq \cdots$. \label{step:PDS3}
\State Let $U = \{s_1 c_1^{\star} u_1 + s_2 c_2^{\star} u_2 : s_1, s_2 \in \{-1,1\}\}$. \label{step:PDS4}
\State For each $u \in U$ identify $\hat{S}(u)$ as the set of vertices with the $K$ largest entries in the vector $u$.
\label{step:PDS5}
\State Return $\hat{S}(u)$ which maximizes the likelihood $\mathbb{P}(A | S^{\star} = \hat{S}(u))$, over $u \in U$. \label{step:PDS6}
\end{algorithmic}
\end{breakablealgorithm}
\vspace{.6cm}
To state the result, recall the definition of $f$ from \eqref{eq:defn:f}.
The following result implies Theorem \ref{theorem:dense-subgraph-1}.
\begin{theorem}\label{theorem:dense-subgraph}
Consider the \PDS~model with parameters given by \eqref{eq:PDS-parameters}, and let $\hat{S}_n$ be the output of Algorithm \ref{alg:PDS}. If $\rho f(a,b) > 1$, then $\lim_{n\to\infty}\PR(\hat{S}_n = \true) =1$.
\end{theorem}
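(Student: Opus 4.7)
The proof decomposes into three parts mirroring Algorithm~\ref{alg:PDS}: (i) an entrywise comparison between $s_1 c_1^\star u_1+s_2 c_2^\star u_2$ and an oracle vector built from $A$ and $\true$; (ii) a large-deviation analysis showing the oracle vector correctly separates $\true$ from its complement under $\rho f(a,b)>1$; and (iii) resolving the sign ambiguity via the likelihood step. Let $A^\star=\mathbb{E}[A]$, with top eigenpairs $(\lambda_i^\star,u_i^\star)$ for $i=1,2$; up to a diagonal and rank-one adjustment these agree with the eigenpairs of the matrix $B$ built in step~\ref{step:PDS1} of Algorithm~\ref{alg:PDS}, so $(c_1^\star,c_2^\star)$ can be interpreted on $A^\star$ as well, and by construction the oracle vector is
\[
w^\star \;:=\; \frac{c_1^\star}{\lambda_1^\star}u_1^\star+\frac{c_2^\star}{\lambda_2^\star}u_2^\star \;=\; \frac{1}{\log n}\,\vecS,
\]
i.e.\ a rescaling of the indicator of $\true$.

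For step~(i) I will apply the entrywise eigenvector perturbation framework of Abbe, Fan, Wang and Zhong~\cite{Abbe2020} to $(A,A^\star)$. The hypotheses are mild to check: the $u_i^\star$'s are two-block piecewise-constant with $\|u_i^\star\|_\infty=\Theta(1/\sqrt{n})$, both nonzero signal eigenvalues are of order $\log n$, and standard concentration gives $\|A-A^\star\|_2=O(\sqrt{\log n})$, so the gap from the signal eigenvalues to the bulk of $A$ is $\Theta(\log n)$. Propagating the per-eigenvector $\ell_\infty$ bounds through the weighted combination should give
\[
\min_{s_1,s_2\in\{-1,1\}}\left\| s_1 c_1^\star u_1+s_2 c_2^\star u_2 - A w^\star\right\|_\infty \;=\; o\!\left(\tfrac{1}{\sqrt{K}}\right),
\]
which is negligible compared with the gap derived in step~(ii). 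For step~(ii), write $(A w^\star)_i=\frac{1}{\sqrt{K}\log n}\sum_{j\in\true}A_{ij}$; the inner sum is distributed as $\mathrm{Bin}(K,q)$ for $i\notin\true$ and as $\mathrm{Bin}(K-1,p)$ (plus at most $1$) for $i\in\true$. A Chernoff bound combined with a union bound over $n$ vertices shows that $\rho f(a,b)>1$ is precisely the condition needed to produce a deterministic threshold $\tau$ such that, with probability $1-o(1)$, $\sum_{j\in\true}A_{ij}>\tau$ for every $i\in\true$ and $\sum_{j\in\true}A_{ij}<\tau$ for every $i\notin\true$, with a rescaled separation that strictly dominates the step-(i) error. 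This is the same rate-function computation that underlies the achievability half of Hajek, Wu and Xu~\cite{Hajek2016}, and is where the function $f$ of~\eqref{eq:defn:f} arises. Combining steps~(i) and~(ii), the top-$K$ entries of $s_1 c_1^\star u_1+s_2 c_2^\star u_2$ for the correct sign choice coincide with $\true$, so $\true\in\{\hat{S}(u):u\in U\}$ with high probability.

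It remains to show that the likelihood step~\ref{step:PDS6} of Algorithm~\ref{alg:PDS} picks $\true$ out of the (at most) four candidates. Since the unrestricted Maximum Likelihood estimator over all size-$K$ subsets recovers $\true$ under $\rho f(a,b)>1$ (by the achievability analysis in~\cite{Hajek2016}, or equivalently by a direct rate-function comparison of swap likelihoods), the MLE restricted to a list that contains $\true$ does so a fortiori, which closes the argument. The principal technical obstacle is step~(i): verifying the quantitative hypotheses of~\cite{Abbe2020} in the rank-$2$ setting where $\lambda_1^\star$ and $\lambda_2^\star$ need not be well separated from each other (only from the bulk), and then propagating the resulting per-eigenvector $\ell_\infty$ bounds through the specific linear combination prescribed by $(c_1^\star,c_2^\star)$ so that the final error is strictly smaller than the binomial separation from step~(ii). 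A secondary but routine nuisance is handling the discrepancy between the eigenpairs of $B$ used in Algorithm~\ref{alg:weights} and those of $A^\star$, and choosing $(c_1^\star,c_2^\star)$ to be well defined even when the two signal eigenvalues are close.
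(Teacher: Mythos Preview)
Your proposal is correct and follows essentially the same three-part architecture as the paper: entrywise perturbation via \cite{Abbe2020} (the paper's Lemma~\ref{lemma:entrywise}), a binomial large-deviation separation for the oracle $A\vecS$ citing \cite[Lemma~2]{Hajek2016}, and invoking the MLE's success to resolve the sign ambiguity in step~\ref{step:PDS6}. One remark: the obstacle you flag about $\lambda_1^\star$ and $\lambda_2^\star$ possibly not being well separated from each other is not in fact an issue here---the paper's Lemma~\ref{lemma:spectral-gap} shows both signal eigenvalues are asymptotic to \emph{distinct} constant multiples of $\log n$, so the eigengap $\lambda_1^\star-\lambda_2^\star$ is itself $\Theta(\log n)$ and the rank-$2$ application of \cite{Abbe2020} proceeds eigenvector by eigenvector without difficulty.
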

As remarked earlier, the recovery of $\true$ is information-theoretically impossible if $\rho f(a,b)<1$~as shown in~\cite{Hajek2016}. Hence, Theorem~\ref{theorem:dense-subgraph} proves optimality of Algorithm~\ref{alg:PDS}.

\begin{remark} \label{rem:run_time} 
To analyze the runtime of Algorithm \ref{alg:PDS}, note that Steps \ref{step:PDS1} and \ref{step:PDS2} are not data-dependent, and can therefore be precomputed. The runtime then depends on Steps \ref{step:PDS3}, \ref{step:PDS5}, and \ref{step:PDS6}. In order to compute the eigenpairs in Step \ref{step:PDS3}, one can use the power method on the matrix $A$.
Letting $\lambda_1^{\star}$ and $\lambda_2^{\star}$ be the eigenvalues of $\mathbb{E}[A]$, Weyl's inequality states that  $|\lambda_1 -\lambda_1^{\star}|, |\lambda_2 -\lambda_2^{\star}| \leq \Vert A - \mathbb{E}[A] \Vert_2$. In Lemma \ref{lemma:spectral-norm-concentration}, we show that $\Vert A - \mathbb{E}[A] \Vert_2 = O(\sqrt{\log n})$ with high probability. On the other hand, we will show that $\lambda_1^{\star}, \lambda_2^{\star} = \Theta(\log n)$. Therefore, the spectral gap of $A$, which is $\delta \triangleq \frac{\lambda_1 - \lambda_2}{\lambda_1}$, is $\Theta(1)$. Since the power method converges in $O\left(\log(n)/\delta \right)$ iterations (see e.g. \cite{Garber2016}), and each iteration requires $O(n \log n)$ time for multiplication of $A$ by a vector, the overall runtime of computing the top eigenvector of $A$ is $O(n \log^2 (n))$. To obtain the second eigenvector, we can deflate $A$ by substracting $\lambda_1 u_1 u_1^T$. By similar analysis, the power method will require $O(n \log^2 (n))$ time to obtain the second eigenvector. Next, Step \ref{step:PDS5} requires sorting four vectors of length $n$, which can be done in $O(n \log n)$ time. Step \ref{step:PDS6} requires counting the number of edges within each of the four estimated dense subgraphs. Given a sparse representation of $A$, this can be done in time $O(n \log n)$. Therefore, the overall runtime of Algorithm \ref{alg:PDS} is $O(n \log^2 (n))$, significantly faster than what would be possible with standard SDP algorithms.
\end{remark}

\subsection{Spectral recovery for $\SL$.}
Next, we consider a spectral algorithm based on only the top eigenvector of the underlying matrix, which is shown to be optimal. 
\begin{breakablealgorithm}
\caption{Spectral submatrix localization}\label{alg:Gaussian}
\begin{algorithmic}[1]
\Require{Parameters $\mu$ and $K$; a matrix $A \sim \SL(\mu, K)$} \vspace{.2cm}
\Ensure{An estimate of $\true$}
\State Find the top eigenvector $u$ of $A$. 
\State Identify $\hat{S}_+(u)$ as the set of vertices with the $K$ largest entries in the vector $u$. Similarly, identify $\hat{S}_-(u)$ as the set of vertices with the $K$ smallest entries in the vector $u$.
\State Return $\hat{S}(u) \in \left\{\hat{S}_+(u), \hat{S}_-(u)\right\}$ which maximizes the likelihood $\mathbb{P}(A | S^{\star} = \hat{S}(u))$.
\end{algorithmic}
\end{breakablealgorithm}
\vspace{.6cm}
The following theorem implies Theorem~\ref{theorem:submatrix-localization-2}.
\begin{theorem}\label{theorem:submatrix-localization}
Consider the $\SL$ model with parameters given by \eqref{eq:SL-parameters}.
Let $\hat{S}_n$ be the output of Algorithm \ref{alg:Gaussian}. If $\rho a^2 > 8$, then $\lim_{n\to\infty}\PR(\hat{S}_n = \true) =1$.
\end{theorem}
Using the fact from~\cite{Hajek2016-hidden-community} that exact recovery of $\true$ is information theoretically impossible for $\rho a^2 < 8$ (Corollary \ref{corollary:Gaussian-threshold}), Theorem~\ref{theorem:submatrix-localization} establishes optimality of Algorithm~\ref{alg:Gaussian}.
In order to prove Theorem \ref{theorem:submatrix-localization}, we will use an entrywise eigenvector perturbation result akin to the proof of Theorem~\ref{theorem:dense-subgraph}.

\subsection{Spectral recovery for \CPDS} 
We next describe a simple spectral algorithm that can recover the planted dense subgraph in the censored model up to the information theoretic boundary. 
Our spectral algorithm uses a \emph{signed adjacency matrix} representation, denoted by $A$. 
We define 
\begin{eq} \label{defn:signed-adj-matrix}
A_{ij} &= \begin{cases}
1 &\text{if } \{i,j\} \text{ is present}\\
-y &\text{if } \{i,j\} \text{ is absent}\\
0 &\text{if } \{i,j\} \text{ is censored}
\end{cases}
\end{eq}
where
\begin{eq}\label{defn:y}
y(p,q) = \frac{\log \frac{1-q}{1-p} }{\log \frac{p}{q} }.
\end{eq}
for $0 < p,q < 1$.
Recently, Dhara, Gaudio, Mossel, and Sandon studied spectral algorithms on such signed adjacency matrices, to perform community detection on a censored version of the Stochastic Block Model \cite{Dhara2021}. 
Note that the \CPDS~model would correspond to a stochastic block model, where the edge probabilities inside community 1 is $p$, and the edge probabilities inside community 2 and between two communities are $q$. While \cite{Dhara2021} had shown that  spectral algorithms based on the top two eigenvectors of a signed adjacency matrix may not work up to the information theoretic threshold (cf.~\cite[Theorem 2.6]{Dhara2021}), we will show that these algorithms are optimal for the CPDS model.

Let us now describe the spectral algorithm for recovery in the CPDS model.\\

\begin{breakablealgorithm}
\caption{Spectral recovery in the CPDS model}\label{alg:CPDS}
\begin{algorithmic}[1]
\Require{Parameters $p$, $q$, $\alpha$, and $K$; a signed adjacency matrix $A$ computed from an unlabeled observation of  $\CPDS(p,q,\alpha,K)$} \vspace{.2cm}
\Ensure{An estimate of the planted dense subgraph vertices}
\State Let $B$ be the matrix where $B_{ij} = \alpha \left(p -y(p,q)(1-p) \right)$ for $i, j \leq K$, and $B_{ij} = \alpha \left(q -y(p,q)(1-q) \right)$ otherwise.
\State Apply Algorithm \ref{alg:weights} to $B$, obtaining the weights $(c_1^{\star}, c_2^{\star})$.
\State Compute the top two eigenpairs of $A$, denoting them $(\lambda_1, u_1)$ and $(\lambda_2, u_2)$, where $\lambda_1 \geq \lambda_2 \geq \cdots$.
\State Let $U = \{s_1 c_1^{\star} u_1 + s_2 c_2^{\star} u_2 : s_1, s_2 \in \{-1,1\}\}$.
\State For each $u \in U$ identify $\hat{S}(u)$ as the set of vertices with the $K$ largest entries in the vector $u$.
\State Return $\hat{S}(u)$ which maximizes the likelihood $\mathbb{P}(A | S^{\star} = \hat{S}(u))$, over $u \in U$.
\end{algorithmic}
\end{breakablealgorithm}
\vspace{.6cm}
One can show that the runtime of Algorithm \ref{alg:CPDS} is $O(n \log^2(n))$, by a similar analysis to that of Algorithm \ref{alg:PDS}.

The following result implies Theorem \ref{theorem:exact-recovery-censored-1}.
\begin{theorem}\label{theorem:exact-recovery-censored} Let $K = \lfloor \rho n \rfloor$, where $\rho \in (0,1)$ is fixed. Suppose $p,q$ are constants such that $p,q \in (0,1)$ and $\alpha = \nicefrac{t\log n}{n}$. Let $\hat{S}_n$ be the output of Algorithm~\ref{alg:CPDS}.  If $t\rho > \nicefrac{1}{\Delta_+(p,q)}$, then $\lim_{n\to\infty}\PR(\hat{S}_n = \true) =1$.  \end{theorem}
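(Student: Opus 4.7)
The plan is to combine three ingredients, closely paralleling the proof of Theorem~\ref{theorem:dense-subgraph}: a spectral-norm concentration bound for the signed adjacency matrix $A$, an entrywise eigenvector perturbation argument in the spirit of Abbe--Fan--Wang--Zhong~\cite{Abbe2020}, and a Chernoff-type large deviation analysis in which the exponent $\Delta_+(p,q)$ arises naturally because the weight $y=y(p,q)$ in \eqref{defn:y} is exactly the likelihood-ratio tilt that matches the Chernoff-optimal tilting parameter. The new difficulty compared with \PDS~is that after showing that one of the four sign choices in $U$ yields $S^\star$, we must invoke the MLE step, and I expect the proof of Lemma~\ref{lemma:MLE} itself to be the main technical obstacle.

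First I would study the rank-$2$ matrix $B = \mathbb{E}[A]$ directly: its entries are $\beta_{\mathrm{in}} = \alpha(p - y(1-p))$ on $S^\star \times S^\star$ and $\beta_{\mathrm{out}} = \alpha(q - y(1-q))$ elsewhere, so $B$ has two nonzero eigenpairs $(\lambda_i^\star, u_i^\star)$, with each $u_i^\star$ piecewise constant on $S^\star$ and its complement and $\lambda_i^\star = \Theta(\log n)$. The calibration in Algorithm~\ref{alg:weights} is designed precisely so that
$$\frac{c_1^\star}{\lambda_1^\star} u_1^\star + \frac{c_2^\star}{\lambda_2^\star} u_2^\star \;=\; \frac{1}{\sqrt{K}\log n}\,\mathbf{1}_{S^\star}.$$
Consequently the \emph{target} vector $v := A\!\left(\tfrac{c_1^\star u_1^\star}{\lambda_1^\star} + \tfrac{c_2^\star u_2^\star}{\lambda_2^\star}\right)$ has $v_i$ proportional to $\sum_{j \in S^\star} A_{ij}$, a sum of i.i.d.\ three-valued mean-different variables whose distribution depends on whether $i \in S^\star$. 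Combining $\|A - B\|_2 = O(\sqrt{\log n})$ (via matrix Bernstein, yielding an eigengap of order $\log n$ through Weyl's inequality) with row-concentration bounds $\|(A - B)u_i^\star\|_\infty = O(\sqrt{\log n})$ and the entrywise framework of \cite{Abbe2020} applied to a linear combination of the two eigenvectors, I expect to obtain
$$\min_{s_1, s_2 \in \{\pm 1\}} \left\|s_1 c_1^\star u_1 + s_2 c_2^\star u_2 - v\right\|_\infty \;=\; o\!\left(1/\sqrt{n\log n}\right),$$
which is sharper than the typical separation of $v_i$'s across the two classes.

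The remaining step is a Chernoff computation showing that the probability that a given vertex is on the wrong side of the optimal threshold for $v$ is $n^{-t\rho\Delta_+(p,q) + o(1)}$. The exponent $t\rho\Delta_+(p,q)$ appears precisely because the encoding with $y=y(p,q)$ is the log-likelihood ratio between the revealed-edge distributions $\mathrm{Bernoulli}(p)$ and $\mathrm{Bernoulli}(q)$, so the Chernoff-optimal tilt coincides with the one defining $\Delta_+$. When $t\rho\Delta_+(p,q)>1$, a union bound over $n$ vertices gives $\mathbb{P}(\hat{S}(u) = S^\star) = 1-o(1)$ for the correct sign pair $(s_1,s_2)$; the four-candidate likelihood comparison in Algorithm~\ref{alg:CPDS} then selects $S^\star$ by appealing to Lemma~\ref{lemma:MLE}, which asserts that even the unrestricted MLE succeeds under the same threshold condition. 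I expect the proof of Lemma~\ref{lemma:MLE} to be the most delicate ingredient: it requires a careful union bound over swap perturbations of $S^\star$, combined with a matching Chernoff analysis again centered on $\Delta_+(p,q)$, in the presence of the censoring asymmetry and the fact that the within-group and across-group edge probabilities differ.
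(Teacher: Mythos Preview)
Your proposal is correct and follows essentially the same route as the paper: the three ingredients you identify---entrywise eigenvector perturbation via \cite{Abbe2020} (the paper's Lemma~\ref{lemma:entrywise-censored}), a Chernoff bound for $\sum_{j\in S^\star}A_{ij}$ in which the optimal tilt yields the exponent $t\rho\Delta_+(p,q)$ (Lemma~\ref{lemma:concentration}), and success of the unrestricted MLE (Lemma~\ref{lemma:MLE})---are exactly the lemmas the paper assembles, in the same order and for the same purpose. One quantitative slip: the entrywise bound delivered by the \cite{Abbe2020} machinery is $O\bigl(1/(\sqrt{n}\,\log\log n)\bigr)$, not $o\bigl(1/\sqrt{n\log n}\bigr)$ as you state; the weaker rate is what the paper obtains and it already suffices, since the separation of the $v_i$'s at the threshold is of order $1/\sqrt{n}$.
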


\section{Analyzing spectral algorithms for \PDS} \label{sec:PDS}
In this section, we analyze Algorithm~\ref{alg:PDS} and complete the proof of Theorem~\ref{theorem:dense-subgraph}. 
We apply the method of entrywise perturbation analysis of eigenvectors developed recently by
 Abbe,  Fan, Wang, and Zhong
\cite{Abbe2020}. We will show the following: 
\begin{lemma}\label{lemma:entrywise}
Let $\rho \in (0,1)$ and $a,b \geq 0$ be constants. Let $A \sim \PDS(p,q,K)$, with parameters given by \eqref{eq:PDS-parameters}. Let $c_1, c_2$ be constants. Then with probability $1- O(n^{-3})$, 
\begin{equation}
\min_{s_1, s_2 \in \{-1,1\}}
\left \Vert  s_1 c_1 u_1 + s_2 c_2 u_2 - \left(c_1 \frac{A u_1^{\star}}{\lambda_1^{\star}} + c_2 \frac{A u_2^{\star}}{\lambda_2^{\star}} \right)\right \Vert_{\infty} \leq \frac{C}{\log \log (n) \sqrt{n}}, \label{eq:entrywise-planted-dense} \end{equation}
where $C = C(a,b,\rho, c_1, c_2)$ is a constant depending on $a$, $b$, $\rho$, $c_1$, and $c_2$.
\end{lemma}
Given Lemma~\ref{lemma:entrywise}, we can analyze the output of Algorithm \ref{alg:PDS} by analyzing the vector $A\vecS$, where $A\vecS$ is described in Algorithm \ref{alg:weights}.

\begin{proof}[Proof of Theorem \ref{theorem:dense-subgraph}]
~Since we are in the achievable regime of the PDS model (due to \cite{Hajek2016}), the Maximum A Posteriori (MAP) estimator recovers $\strue$ with high probability. Since $\strue$ is chosen uniformly at random from all size-$K$ subsets of $[n]$, then also the Maximum Likelihood estimator (MLE) recovers $\strue$ with high probability. The success of the MLE implies that with high probability,
\begin{equation}
\mathbb{P}(A | \strue) > \max_{S \neq \strue} \mathbb{P}(A | \strue = S).  \label{eq:MLE-inequality}  
\end{equation}
Recall that Algorithm~\ref{alg:PDS} forms the set $U = \{s_1 c_1^{\star} u_1 + s_2 c_2^{\star} u_2: s_1, s_2 \in \{-1,1\}\}$, and chooses the element $u \in U$ which maximizes the likelihood $\mathbb{P}(A | S^{\star} = \hat{S}(u))$. In light of \eqref{eq:MLE-inequality}, it suffices to show that $\strue \in \{\hat{S}(u) : u \in U\}$.

Note that $c_1^{\star}, c_2^{\star}$ are constants due to the scaling of $(\lambda_1^{\star}, u_1^{\star}), (\lambda_2^{\star}, u_2^{\star})$ (cf.~Lemma \ref{lemma:spectral-gap}). Let $s_1, s_2 \in \{-1,1\}$ be such that
\[\left \Vert s_1 c_1^{\star} u_1 + s_ 2 c_2^{\star} u_2 - A\left(\frac{c_1^{\star} }{\lambda_1^{\star}} u_1^{\star} + \frac{c_2^{\star} }{\lambda_2^{\star}} u_2^{\star} \right) \right \Vert_{\infty} \leq \frac{C}{\log \log (n) \sqrt{n}}\]
This is possible by Lemma \ref{lemma:entrywise}, with probability $1-O(n^{-3})$, where $C$ depends on $a$, $b$, and $\rho$. Multiplying through by $\log(n)$, we also obtain
\[\left \Vert \log(n)\left(s_1 c_1^{\star} u_1 + s_2 c_2^{\star} u_2\right) - A \vecS \right \Vert_{\infty} \leq \frac{C \log(n)}{\log \log (n) \sqrt{n}}\]

We first consider the case $a > b > 0$. As shown in the proof of \cite[Theorem 3]{Hajek2016} along with \cite[Lemma 2]{Hajek2016},  we have that for all $C' > 0$,
\begin{align*}
\sqrt{K}\min_{i \in \true} \left(A \vecS\right)_i &\geq \rho \left(\frac{a -b}{\log(a) - \log(b)} \right) \log n +  \frac{C' \log n}{\log \log (n)}\\
\sqrt{K} \max_{i \in [n]\setminus \true} \left(A \vecS\right)_i &\leq \rho \left(\frac{a -b}{\log(a) - \log(b)} \right) \log n
\end{align*}
with probability at least $1 - n^{1 - \rho f(a,b) + o(1)} = 1 - n^{-\Omega(1)}$.
Therefore, we have
\begin{align*}
\log(n) \sqrt{K}\min_{i \in \true} \left(s_1 c_1^{\star} u_1 + s_ 2c_2^{\star} u_2 \right)_i &\geq \rho \left(\frac{a -b}{\log(a) - \log(b)} \right) \log n + \frac{C' \log n}{\log \log (n)} - \frac{C \log(n) \sqrt{K}}{\log \log (n) \sqrt{n}}\\
\log(n) \sqrt{K}\max_{i \in [n] \setminus \true} \left(s_1 c_1^{\star} u_1 + s_2 c_2^{\star} u_2 \right)_i &\leq \rho \left(\frac{a -b}{\log(a) - \log(b)} \right)\log n + \frac{C \log(n) \sqrt{K}}{\log \log (n) \sqrt{n}}
\end{align*}
with probability $1 - n^{-\Omega(1)}$.
Thus, we obtain
\[\min_{i \in \true} (s_1 c_1^{\star} u_1 + s_2 c_2^{\star} u_2)_i > \max_{i \in [n] \setminus \true} (s_1 c_1^{\star} u_1 + s_2 c_2^{\star} u_2)_i\]
with probability $1-n^{-\Omega(1)}$. Therefore, with probability $1-n^{-\Omega(1)}$, the $K$ largest entries of $s_1 c_1^{\star} u_1 + s_2 c_2^{\star} u_2$ correspond to the planted dense subgraph~$\true$.

If $a > b = 0$, then $\max_{i \in [n] \setminus{\true}} (A v_{\true})_i = 0$. Again applying \cite[Lemma 2]{Hajek2016},  we have that for all $C' > 0$,
\begin{align*}
\sqrt{K}\min_{i \in \true} \left(A \vecS\right)_i &\geq \frac{C' \log n}{\log \log (n)}
\end{align*}
with probability at least $1 - n^{1 - \rho a + o(1)} = 1 - n^{1 - \rho f(a,0) + o(1)} =  1 - n^{-\Omega(1)}$.

If $a < b$, then the analysis of the entries of $Av_{\true}$ again appeals to \cite[Lemma 2]{Hajek2016}. We then obtain
\[\max_{i \in \true} (s_1 c_1^{\star} u_1 + s_2 c_2^{\star} u_2)_i < \min_{i \in [n] \setminus \true} (s_1 c_1^{\star} u_1 + s_2 c_2^{\star} u_2)_i\]
with probability $1-n^{-\Omega(1)}$. Therefore,  with probability $1-n^{-\Omega(1)}$, the $K$ smallest entries of $s_1 c_1^{\star} u_1 + s_2 c_2^{\star} u_2$ correspond to the planted dense subgraph~$\true$.
\end{proof}

To prove Lemma \ref{lemma:entrywise}, we need the following two results.
\begin{lemma}\label{lemma:spectral-gap} Let $A$  be the adjacency matrix of a $\PDS$ with parameters satisfying \eqref{eq:PDS-parameters}, and let $A^\star = \E[A]$. 
The eigenvalues of $A^{\star}$ satisfy $\lambda_1^{\star} = (1+o(1))\theta_1 \log(n)$ and $\lambda_2^{\star} = (1+o(1))\theta_2 \log(n)$ for some constants $\theta_1, \theta_2$ depending on $a$, $b$, and $\rho$. 

\end{lemma}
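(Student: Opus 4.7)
The approach is to exploit the fact that $A^\star$ is, up to a small diagonal correction, a rank-$2$ matrix whose non-trivial eigenvalues can be computed explicitly by reducing to a $2\times 2$ problem.

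First, I would write $A^\star = M - D$, where
\[
M := (p-q)\,\mathbf{1}_{\true}\mathbf{1}_{\true}^{T} + q\,\mathbf{1}\mathbf{1}^{T},
\]
and $D$ is the diagonal matrix with $D_{ii} = p$ for $i\in\true$ and $D_{ii}=q$ otherwise, accounting for the absence of self-loops in $\PDS$. Since $p,q = \Theta(\log n/n)$, we have $\|D\|_2 = O(\log n/n) = o(\log n)$, so by Weyl's inequality it suffices to compute the eigenvalues of $M$ up to a $1+o(1)$ factor.

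Next I would diagonalize $M$ using its rank-$2$ structure. The range of $M$ lies in $V := \mathrm{span}\{\mathbf{1}_{\true},\mathbf{1}\}$, so its non-zero eigenvalues agree with those of $M|_V$, which in the (non-orthonormal) basis $(\mathbf{1}_{\true},\mathbf{1})$ is represented by the matrix
\[
\begin{pmatrix} (p-q)K & (p-q)K \\ qK & qn \end{pmatrix}.
\]
Its trace equals $(p-q)K + qn = \bigl(a\rho + b(1-\rho)\bigr)\log n$ and its determinant equals $(p-q)K\,q(n-K) = (a-b)\,b\,\rho(1-\rho)\log^2 n$. Hence the two non-trivial eigenvalues of $M$ are $\theta_1\log n$ and $\theta_2\log n$, where
\[
\theta_{1,2} = \tfrac{1}{2}\Bigl[(a\rho + b(1-\rho)) \pm \sqrt{(a\rho - b(1-\rho))^2 + 4b^2\rho(1-\rho)}\Bigr]
\]
after simplifying the discriminant; writing it as a sum of two squares confirms $\theta_{1,2}\in\mathbb{R}$, as required by the symmetry of $M$. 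Both $\theta_1$ and $\theta_2$ depend only on $a,b,\rho$.

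Finally I would invoke Weyl's inequality in the form $|\lambda_i(A^\star) - \lambda_i(M)| \le \|D\|_2 = o(\log n)$ for every $i$. Combined with the fact that $M$ has rank at most $2$, this yields $\lambda_i^\star = (1+o(1))\theta_i\log n$ for the two signal eigenvalues $i=1,2$, and $\lambda_i^\star = o(\log n)$ for $i\ge 3$. No step presents a genuine obstacle; the only routine items to verify are the discriminant simplification (which also shows $\theta_1 > \theta_2$ whenever $b(1-\rho)>0$) and the negligibility of $\|D\|$ at the scale $\log n$ of the signal eigenvalues.
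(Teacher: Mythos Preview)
Your argument is correct and in fact cleaner than the paper's. The paper proceeds indirectly: it approximates $\rho$ by a rational $\rho'=s/t$, passes to the expected adjacency matrix $B^\star$ of a $\PDS$ with parameter $\rho'$, argues that the eigenvector entry ratios $\alpha_2/\alpha_1$ for $A^\star$ are continuous in $K/n$ and hence close to the corresponding ratios $\beta_2/\beta_1$ for $B^\star$, and then reads off $\lambda_1^\star = \log(n)\bigl(\rho a + (1-\rho)b\,\alpha_2/\alpha_1\bigr)$ from the eigenvalue equation. This yields the qualitative scaling $\lambda_i^\star=(1+o(1))\theta_i\log n$ without ever writing down $\theta_i$ explicitly.

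Your route---splitting off the $O(\log n/n)$ diagonal, reducing the rank-two part $M$ to its $2\times 2$ restriction on $\mathrm{span}\{\mathbf 1_{\true},\mathbf 1\}$, and then invoking Weyl---is more elementary, avoids the rational-approximation and continuity detour entirely, and as a bonus produces closed-form expressions for $\theta_{1,2}$ in terms of $a,b,\rho$. The only minor point worth flagging is that $K=\lfloor\rho n\rfloor$ rather than $\rho n$, so the trace and determinant you wrote are themselves $(1+o(1))$ approximations; this is harmless and already absorbed into the $(1+o(1))$ in the conclusion.
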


The following result is a special case of \cite[Theorem 5]{Hajek2016}.
\begin{lemma}\label{lemma:spectral-norm-concentration}
Let $A$  be the adjacency matrix of a $\PDS$ with parameters satisfying \eqref{eq:PDS-parameters}. There exists $c_1(a,b,\rho)$ such that
\[\mathbb{P}\left(\Vert A - A^{\star}\Vert_2 \geq c_1 \sqrt{\log (n)} \right) \leq n^{-3}. \]
\end{lemma}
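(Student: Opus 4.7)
I would treat this as a concentration statement for the spectral norm of a symmetric random matrix with independent, bounded, mean-zero entries, and then invoke an off-the-shelf sharp bound. Set $X := A - A^{\star}$. The upper-triangular entries $\{X_{ij}\}_{i<j}$ are mutually independent, each bounded by $1$ in absolute value, with variance at most $\max(p(1-p), q(1-q)) \le \max(a,b)\,\log n / n$. Consequently the maximum row variance obeys
\[
\sigma^2 \;:=\; \max_i \sum_{j} \mathrm{Var}(X_{ij}) \;\le\; \max(a,b)\,\log n.
\]

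The goal is then to show $\Vert X\Vert_2 = O(\sqrt{\log n})$ with probability $1 - O(n^{-3})$. My preferred route is the Bandeira--van Handel inequality for Hermitian matrices with independent bounded entries, which yields $\Vert X\Vert_2 \le 2\sigma + c\,\sqrt{\log n}$ outside an event of probability $O(n^{-3})$; since $\sigma = O(\sqrt{\log n})$, this gives the claim. An alternative is the Kahn--Szemer\'edi / Feige--Ofek approach: discretize the unit sphere with an $\varepsilon$-net and split the bilinear form $u^{\top} X v$ into contributions from ``light'' index pairs $(i,j)$ (those with $|u_i v_j|$ small, which can be controlled by Bernstein and a union bound) and ``heavy'' pairs (controlled by a discrepancy/counting argument, exploiting that with high probability no vertex has degree much larger than $\log n$ in this regime). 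A third, self-contained option is the trace/moment method with moment order $k = \Theta(\log n)$, giving the same rate.

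The main obstacle I expect is the following sharpness issue. A naive application of the matrix Bernstein inequality to $X$ written as a sum of rank-two independent matrices yields only $\Vert X\Vert_2 = O(\sqrt{\sigma^2 \log n} + \log n) = O(\log n)$, because in this sparsity regime the additive $\log n$ term dominates. Tightening from $\log n$ down to $\sqrt{\log n}$ is precisely what requires the sharper inequalities above, which simultaneously exploit the entry variances and the $O(1)$ entry bound. Finally, since the statement is explicitly advertised as a special case of \cite[Theorem 5]{Hajek2016}, the shortest rigorous proof is simply to verify its hypotheses for an inhomogeneous Erd\H{o}s--R\'enyi graph with $\max(p,q) = O(\log n / n)$ and invoke it directly; the $n^{-3}$ tail is obtained by taking the constant in that theorem large enough.
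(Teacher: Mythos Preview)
Your proposal is correct, and your final paragraph is exactly the paper's approach: the paper does not give any argument at all, but simply records the lemma as a special case of \cite[Theorem~5]{Hajek2016}. Your additional self-contained routes (Bandeira--van Handel, Kahn--Szemer\'edi/Feige--Ofek, trace method) are all valid ways to obtain the $\sqrt{\log n}$ rate and go beyond what the paper does, but they are not needed here.
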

\begin{proof}[Proof of Lemma \ref{lemma:entrywise}]
The result will follow by establishing
\begin{equation}
\min_{s \in \{-1,1\}} \left \Vert  s u_1 -  \frac{A u_1^{\star}}{\lambda_1^{\star}}  \right \Vert_{\infty} \leq \frac{C_1(a,b,\rho)}{\log \log (n) \sqrt{n}}, \label{eq:entrywise-planted-dense-1} \end{equation}
and
\begin{equation}
\min_{s \in \{-1,1\}}  \left \Vert  su_2 -  \frac{A u_2^{\star}}{\lambda_2^{\star}}  \right \Vert_{\infty} \leq \frac{C_2(a,b,\rho)}{\log \log (n) \sqrt{n}}. \label{eq:entrywise-planted-dense-2} 
\end{equation}
In order to establish \eqref{eq:entrywise-planted-dense-1} and \eqref{eq:entrywise-planted-dense-2}, it suffices to verify Assumptions 1-4 of \cite[Theorem 2.1]{Abbe2020}. Letting $\tau = \max\{a,b\}$, we have
$$\Vert A^{\star}\Vert_{2 \to \infty} = \sqrt{\rho n \left(\frac{a \log n}{n}\right)^2 + (1-\rho)n \left(\frac{b \log n}{n}\right)^2} \leq \frac{\tau \log(n)}{\sqrt{n}}.$$ By Lemma \ref{lemma:spectral-gap}, we have that
\[\Delta^{\star} \triangleq  \min\left\{\left( \lambda_1^{\star} - \lambda_2^{\star}\right), |\lambda_1^{\star}|, |\lambda_2^{\star}| \right\} = c \log(n),\]
for some $c = c(a,b,\rho)$. Let 
\[\gamma = \frac{c_1 \sqrt{\log(n)}}{\Delta^{\star}}, \]
where $c_1 = c_1(a,b, \rho)$ is the value from Lemma \ref{lemma:spectral-norm-concentration}. Then $\gamma \Delta^{\star} = c_1 \sqrt{\log(n)}$, which dominates $\frac{\tau \log(n)}{\sqrt{n}}$ for $n$ large enough. Therefore, Assumption 1 of \cite[Theorem 2.1]{Abbe2020} holds.

The second assumption holds since the edges are independent, conditioned on the memberships of the vertices.

By Lemma \ref{lemma:spectral-norm-concentration},
\[\mathbb{P}\left(\Vert A - A^{\star}\Vert_2 \leq \gamma \Delta^{\star} \right) \geq 1 - n^{-3}.\]
Additionally, 
\[\kappa \triangleq \frac{\max\left\{|\lambda_1^{\star}|, |\lambda_2^{\star}| \right\}}{\Delta^{\star}} \]
is a constant by Lemma \ref{lemma:spectral-gap}, and $\gamma \to 0$ as $n \to \infty$. Therefore, the third assumption is verified.

To verify the fourth assumption, choose
\[\varphi(x) = \frac{(2\tau + 4) \log(n)}{\Delta^{\star} \max\left\{1, \log \left(\frac{1}{x}\right)\right\}} \]
Applying \cite[Lemma 7]{Abbe2020}, with $p = \frac{\tau \log n}{n}$ and $\alpha = \frac{4}{\tau}$, we obtain that for a fixed vector $w \in \mathbb{R}^n$ and $m \in [n]$,
\begin{align*}
\mathbb{P}\left(\left|(A - A^{\star})_{m, \cdot} w \right| \leq \frac{(2\tau+4) \log(n)}{\max\left\{1, \log\left(\frac{\sqrt{n}\Vert w \Vert_{\infty}}{\Vert w \Vert_2} \right) \right\}} \Vert w \Vert_{\infty} \right) &\geq 1 - 2n^{-4}\\
\mathbb{P}\left(\left|(A - A^{\star})_{m, \cdot} w \right| \leq  \Delta^{\star} \Vert w \Vert_{\infty} \varphi\left(\frac{\Vert w \Vert_2}{\sqrt{n} \Vert w \Vert_{\infty}} \right) \right) &\geq 1 - 2n^{-4},
\end{align*}
so that Assumption 4 is verified with $\delta_1 = 2n^{-3}$.
\end{proof}

\begin{proof}[Proof of Lemma \ref{lemma:spectral-gap}]
Fix $\epsilon > 0$, and find $\rho' \in [0,1]$ such that $\rho'$ is rational and $|\rho - \rho'| < \epsilon$. Let $s,t$ be the integers such that $\rho' = \frac{s}{t}$ is the minimal representation of $\rho'$ (i.e., $s$ and $t$ have no common factors). 
Let $B$  be the adjacency matrix of a $\PDS$ with parameters satisfying \eqref{eq:PDS-parameters} but with $\rho$ replaced by $\rho'$, and let $B^\star = \E[B]$.
Then $B^{\star}$ has two eigenvectors that correspond to non-zero eigenvalues. The first eigenvector $v_1^{\star}$ of $B^{\star}$ has entries that take on two possible values, one on the first $s$ entries, and one on the remaining $t$ entries. Suppose $\beta_1$ and $\beta_2$ are the values, and note that these are both nonzero. 

Let $u_1^{\star}$ be the first eigenvector of $A^{\star}$, with distinct entries $\alpha_1$ and $\alpha_2$. If $\rho$ were itself rational, then we could let 
$\rho' = \rho$. If additionally $n$ were such that $\rho n = \lfloor \rho n\rfloor$, then $A^{\star}$ would be a scaled version of $B^{\star}$. In that case, we would obtain $\frac{\alpha_2}{\alpha_1} = \frac{\beta_2}{\beta_1}$. For general $\rho$ and fixed $n$, we note that both $\alpha_1$ and $\alpha_2$ are continuous functions of $\frac{K}{n}$, and therefore
\[\left|\frac{\alpha_2}{\alpha_1} - \frac{\beta_2}{\beta_1} \right| \leq f_n(\epsilon),\]
where $\lim_{\epsilon \to 0} f_n(\epsilon) = 0$.

Using $A^{\star} u_1^{\star} = \lambda_1^{\star} u_1^{\star}$, we obtain that for fixed $n$,
\begin{gather*}
\lambda_1^{\star} \alpha_1 = \rho n \cdot \frac{a \log n}{n} \cdot \alpha_1 + (1-\rho) n \cdot \frac{b \log n}{n} \cdot \alpha_2, \quad 
\lambda_1^{\star} = \log(n) \left(\rho a + (1-\rho) b \frac{\alpha_2}{\alpha_1} \right)\\
\log(n) \left(\rho a + (1-\rho) b \left(\frac{\beta_2}{\beta_1}-f_n(\epsilon)\right)  \right) \leq \lambda_1^{\star}  \leq \log(n) \left(\rho a + (1-\rho) b \left(\frac{\beta_2}{\beta_1}+f_n(\epsilon)\right) \right).
\end{gather*}
Choosing a sequence $\{\epsilon_n\}_{n=1}^{\infty}$ such that $\epsilon_n \to 0$, we obtain $\lambda_1^{\star} = (1+o(1)) \theta_1(a, b, \rho) \log(n)$. A similar argument applies to $\lambda_2^{\star}$.
\end{proof}

\section{Analyzing spectral algorithms for censored \PDS}\label{sec:CPDS}
We first provide the proof of Lemma \ref{lemma:connected-conditions}.
\begin{proof}[Proof of Lemma \ref{lemma:connected-conditions}]
There exists $N_0$ such that for all $n > N_0$,
\[t_n p_n \geq \frac{a}{2},~~~~ t_n q_n \geq \frac{b}{2},~~~~ \text{and } |t_n p_n - t_n q_n| \geq \frac{1}{2} | a - b|.\]
We may therefore assume without loss of generality that $p_n, q_n > 0$ and furthermore that $ p_n \neq q_n$ when $a \neq b$, since these statements are true for $n$ sufficiently large. Under the assumptions of the statement,
\begin{align*}
\Delta_+(p,q) &= \max_{x \in [0,1]} xp + (1-x)q - p^xq^{1-x} + x(1-p) + (1-x)(1-q) - (1-p)^x(1-q)^{1-x}\\
&\asymp \max_{x \in [0,1]} xp + (1-x)q - p^xq^{1-x}\\
&= \max_{x \in [0,1]} q + (p-q)x - q \left(\frac{p}{q}\right)^x,
\end{align*}
where the second step uses $p,q = o(1)$. 

We first consider the case $a = b >0$. Then
\begin{align*}
\lim_{n \to \infty} t_n \Delta_+(p,q) &= \lim_{n \to \infty} \max_{x \in [0,1]} q_n t_n + (p_n t_n- q_n t_n) x - q_n t_n \left(\frac{p_n t_n}{q_n t_n} \right)^x = 0 = f(a,a).
\end{align*}

Next, suppose $a \neq b$. Letting $g(x) =  q + (p-q)x - q \left(\frac{p}{q}\right)^x$, we obtain
\begin{align*}
g'(x) &= p -q - q\left(\frac{p}{q}\right)^x \log \left(\frac{p}{q} \right).
\end{align*}
Setting $g'(x^{\star}) = 0$, we obtain the relations
\begin{align*}
\left(\frac{p}{q} \right)^{x^{\star}} &= \frac{p - q}{q \log\left(\frac{p}{q}\right)}\\
x^{\star} &= \frac{1}{\log \left(\frac{p}{q} \right)} \log \left(\frac{p - q}{q \log\left(\frac{p}{q}\right)} \right).
\end{align*}
Substituting, we obtain 
\begin{align} \label{delta-plus-simplification}
\Delta_+(p,q) &\asymp q + \frac{(p-q)}{\log \left(\frac{p}{q} \right)} \log \left(\frac{p - q}{q \log\left(\frac{p}{q}\right)} \right) - \frac{p-q}{\log \left(\frac{p}{q} \right)}.
\end{align}
Multiplying through by $t = t_n$ and using the fact that $\lim_{n \to \infty} p_n t_n = a, \lim_{n\to \infty} q_n t_n = b$, we obtain 
\begin{align*}
t\Delta_+(p,q) &\asymp b + \frac{a-b}{\log \left(\frac{a}{b} \right)} \log \left(\frac{a - b}{b \log\left(\frac{a}{b}\right)} \right) - \frac{a-b}{\log \left(\frac{a}{b} \right)}\\
&= b + \frac{a -b}{\log a - \log b} \left(\log \left(\frac{a - b}{b \log\left(\frac{a}{b}\right)} \right) -1\right)\\
&= b + \frac{a -b}{\log a - \log b} \log \left(\frac{a - b}{eb (\log a - \log b)} \right) \\
&= b - \frac{a -b}{\log a - \log b} \log \left(\frac{eb (\log a - \log b)}{a - b} \right) \\
&= f(b,a)\\
&= f(a,b),
\end{align*}
since $f(\cdot, \cdot)$ is symmetric in its arguments. Since $f(a,b)$ is a constant, we conclude that $\lim_{n \to \infty} t_n \Delta_+(p_n, q_n) = f(a,b)$.
\end{proof}

In the remainder of this section, we analyze Algorithm~\ref{alg:CPDS} and complete the proof of Theorem~\ref{theorem:exact-recovery-censored}. 
The idea again is to apply the method of entrywise perturbation analysis of eigenvectors from~\cite{Abbe2020}. 
The following results will be used to prove Theorem \ref{theorem:exact-recovery-censored}, Part~(1).
\begin{lemma}\label{lemma:MLE}
Suppose the conditions identical to Theorem~\ref{theorem:exact-recovery-censored} hold. Then the \MLE~recovers the planted dense subgraph, with probability $1-o(1)$.
\end{lemma}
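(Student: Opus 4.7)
The plan is to follow the standard MLE achievability template for planted-subgraph problems, as in~\cite{Hajek2016} for the uncensored PDS and in the authors' earlier work~\cite{Dhara2021} for the censored SBM, adapted to the three-outcome CPDS observation model. The first step is to rewrite the log-likelihood. Up to an additive $S$-independent constant,
\[
L(S) \;:=\; \log \mathbb{P}(G \mid \strue = S) \;=\; \sum_{\{i,j\} \in \binom{S}{2}} Y_{ij}, \qquad Y_{ij} \;:=\; \log \frac{P_1(G_{ij})}{P_0(G_{ij})},
\]
where $P_1 = (\alpha p, \alpha(1-p), 1-\alpha)$ and $P_0 = (\alpha q, \alpha(1-q), 1-\alpha)$ are the edge distributions on $\{\texttt{present}, \texttt{absent}, \texttt{censored}\}$ under the ``both endpoints in $\strue$'' versus ``otherwise'' hypotheses. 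The MLE therefore returns the $K$-subset maximising this weighted edge sum.

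The heart of the argument is a per-vertex Chernoff bound. Define
\[
X_i \;:=\; \sum_{j \in \strue \setminus \{i\}} Y_{ij} \text{ for } i \in \strue, \qquad X_i \;:=\; \sum_{j \in \strue} Y_{ij} \text{ for } i \notin \strue,
\]
so that $X_i$ is a sum of $K-1$ (resp.\ $K$) i.i.d.\ $P_1$- (resp.\ $P_0$-) samples of $Y$. A single swap of $u \in \strue$ with $v \notin \strue$ changes $L$ by exactly $X_u - X_v$, so local (single-swap) MLE optimality reduces to the event $\min_{u \in \strue} X_u > \max_{v \notin \strue} X_v$. A direct computation yields, for $s \in [0,1]$,
\[
\mathbb{E}_{P_1}[e^{-sY}] \;=\; 1 - \alpha D_{1-s}(p,q), \qquad \mathbb{E}_{P_0}[e^{sY}] \;=\; 1 - \alpha D_s(p,q),
\]
so Markov's inequality at threshold $0$, optimised over $s \in [0,1]$ and combined with $K\alpha = \rho t \log n$, gives $\mathbb{P}(X_u \leq 0), \mathbb{P}(X_v \geq 0) \leq n^{-t\rho \Delta_+(p,q)(1+o(1))}$. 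A union bound over the $n$ vertices (not the $\Theta(n^2)$ swap pairs) then yields
\[
\mathbb{P}\bigl(\min_{u \in \strue} X_u \leq 0\bigr) + \mathbb{P}\bigl(\max_{v \notin \strue} X_v \geq 0\bigr) \;\leq\; 2 n^{1 - t\rho \Delta_+(p,q)(1+o(1))} \;=\; o(1)
\]
once $t\rho \Delta_+(p,q) > 1$, which already rules out every $S$ with $|S \triangle \strue| = 2$ as an MLE candidate.

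The main obstacle is upgrading this single-swap analysis to global MLE optimality over all $S$ with $|S| = K$. For any such $S$ with $|S \triangle \strue| = 2m$, writing $A = \strue \setminus S$ and $B = S \setminus \strue$, one computes
\[
L(\strue) - L(S) \;=\; \sum_{u \in A} X_u - \sum_{v \in B} X_v + R_{A,B},
\]
where $R_{A,B}$ is a signed $Y$-sum over the $m(2m-1)$ pairs within $A$, within $B$, and between $A$ and $B$. I plan to split on $m$. For small $m$ (say $m \leq c_0 \log n$ for a suitable constant $c_0$), I would refine the Chernoff bound above to show that, with high probability, $X_u \geq \delta \log n$ for every $u \in \strue$ and $X_v \leq -\delta \log n$ for every $v \notin \strue$, for some $\delta > 0$ afforded by the strict inequality $t\rho \Delta_+ > 1$ (the Chernoff rate depends continuously on the threshold near zero). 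Then the leading sum is at least $2 m \delta \log n$, which dominates $|R_{A,B}|$, since the number of revealed edges among the $O(m^2)$ candidate pairs has expectation $O(m^2 \alpha) = O(m^2 \log n / n)$, negligible next to $m \log n$. For $m > c_0 \log n$, I would apply Chernoff directly to the independent sum $L(\strue) - L(S)$, which has $N := \binom{m}{2} + m(K-m) \geq mK/2$ i.i.d.\ terms on each side, giving $\mathbb{P}(L(\strue) \leq L(S)) \leq \exp(-\Omega(mK\alpha))$; combined with the combinatorial factor $\binom{K}{m}\binom{n-K}{m} \leq \exp(2m \log(n/m))$, this is easily summable because $\log(n/m)$ is then substantially smaller than $\log n$. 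Carefully balancing the two regimes is the delicate step, but both thresholds are driven by the same per-vertex Chernoff rate $t\rho \Delta_+(p,q)$ from the single-swap analysis.
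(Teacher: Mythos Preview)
Your overall template (per-vertex Chernoff for small $m$, direct Chernoff plus union bound for large $m$) is exactly what the paper does, and your MGF computation and single-swap analysis are correct. However, the split point $m_0 = c_0 \log n$ is too small, and the sentence ``$\log(n/m)$ is then substantially smaller than $\log n$'' is where the argument breaks. For $m$ just above $c_0\log n$ one has $\log(n/m)=\log n-\log\log n+O(1)=(1-o(1))\log n$, so the combinatorial factor is $\exp\bigl((2-o(1))m\log n\bigr)$. On the other hand, the best per-$S$ Chernoff rate you can extract from your MGF identities is $\alpha N\Delta_+(p,q)$ with $N=\binom{m}{2}+m(K-m)\le mK$, giving at best $\exp\bigl(-(1-o(1))\,\rho t\Delta_+(p,q)\,m\log n\bigr)$. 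The union bound therefore requires $\rho t\Delta_+(p,q)>2$, not merely $>1$, so your large-$m$ step fails throughout the range $1<\rho t\Delta_+(p,q)\le 2$.

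The paper closes this gap by pushing the split point all the way to $m_0=\Theta(n)$. The small-$m$ argument then has to control the remainder $R_{A,B}$ \emph{uniformly} over all $A,B$ with $|A|=|B|=m$ up to linear size, which neither your expectation bound nor the crude deterministic bound $|R_{A,B}|\le Cm^2$ can do. The missing ingredient is a pseudorandomness estimate for the Erd\H{o}s--R\'enyi ``reveal'' graph $H\sim G(n,\alpha)$: for all pairs of vertex sets $U,W$ simultaneously, the number of revealed statuses between $U$ and $W$ is $\frac{t\log n}{n}|U||W|+O(\sqrt{|U||W|t\log n})$ (the paper cites Krivelevich). This gives $|R_{A,B}|\le C'\bigl(\tfrac{\log n}{n}m^2+m\sqrt{\log n}\bigr)$ uniformly, which is $o(m\log n)$ as long as $m\le \varepsilon' n$ for a suitable constant $\varepsilon'$. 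Once the small-$m$ regime extends to linear $m$, the large-$m$ regime has $N=\Theta(n^2)$ and hence Chernoff rate $\Theta(n\log n)$, which easily beats the crude $2^n$ union bound. Replace your $c_0\log n$ threshold by a linear one and invoke a uniform edge-count estimate for $H$, and your proof goes through.
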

\begin{lemma}\label{lemma:entrywise-censored}
Let $c_1, c_2$ be constants. Then with probability $1- O(n^{-3})$,
\begin{equation}
\min_{s_1, s_2 \in \{-1,1\}} \left \Vert  s_1c_1 u_1 + s_2c_2 u_2 - \left(c_1 \frac{A u_1^{\star}}{\lambda_1^{\star}} + c_2 \frac{A u_2^{\star}}{\lambda_2^{\star}} \right)\right \Vert_{\infty} \leq \frac{C}{\log \log (n) \sqrt{n}}, \label{eq:entrywise-planted-dense-censored} \end{equation}
where $C = C(p,q,t, \rho, c_1, c_2)$. 
\end{lemma}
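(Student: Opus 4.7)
The plan is to mirror the proof of Lemma~\ref{lemma:entrywise} almost verbatim, replacing the ordinary adjacency matrix by the signed adjacency matrix $A$ defined in~\eqref{defn:signed-adj-matrix}. By the triangle inequality it suffices to establish the single-eigenvector bounds
\[
\min_{s\in\{-1,1\}} \left\| s u_i - \frac{A u_i^\star}{\lambda_i^\star} \right\|_\infty \leq \frac{C_i(p,q,t,\rho)}{\log\log(n)\sqrt{n}}, \qquad i=1,2,
\]
and then combine them after multiplying by $c_1$ and $c_2$. Each of these two bounds will follow from an application of \cite[Theorem~2.1]{Abbe2020} to $A$, exactly as in the PDS case; the only thing to do is to verify Assumptions~1--4 of that theorem in the present setting.

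For Assumption~1, I would first establish the analogue of Lemma~\ref{lemma:spectral-gap}: since $A^\star=\E[A]$ is a rank-$2$ block matrix with two distinct entry values $\alpha(p-y(1-p))$ and $\alpha(q-y(1-q))$, and $\alpha=t\log n/n$, the same rational-approximation argument used to prove Lemma~\ref{lemma:spectral-gap} shows that the two nonzero eigenvalues satisfy $\lambda_i^\star = (1+o(1))\theta_i(p,q,t,\rho)\log n$ with $\theta_i\neq 0$ (the non-degeneracy here uses $p\neq q$, which is implicit in the assumption $\Delta_+(p,q)>0$). This yields $\Delta^\star=\Theta(\log n)$, and since $\|A^\star\|_{2\to\infty}=O(\log n/\sqrt{n})$ the choice $\gamma = c_1\sqrt{\log n}/\Delta^\star$ makes $\gamma\Delta^\star$ dominate $\|A^\star\|_{2\to\infty}$ and sends $\gamma\to 0$. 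Assumption~2 is immediate because the triples $(\mathsf{present}/\mathsf{absent}/\mathsf{censored})$ on distinct pairs $\{i,j\}$ are mutually independent given the community memberships. Assumption~3 (spectral-norm concentration $\|A-A^\star\|_2 = O(\sqrt{\log n})$ with probability at least $1-n^{-3}$) follows from a standard Bandeira--van~Handel / matrix-Bernstein bound on a bounded-entry random matrix whose variance per entry is $\Theta(\log n/n)$; this is the analogue of Lemma~\ref{lemma:spectral-norm-concentration} and will be proved separately (or can be quoted from the censored-SBM analysis of~\cite{Dhara2021}).

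Assumption~4 is the row-wise concentration: for each fixed $w\in\mathbb{R}^n$ and each row $m$,
\[
\bigl| (A-A^\star)_{m,\cdot}\, w\bigr| \leq \Delta^\star\, \|w\|_\infty\, \varphi\!\left(\frac{\|w\|_2}{\sqrt{n}\,\|w\|_\infty}\right)
\]
with probability at least $1-2n^{-4}$, for an appropriate $\varphi$. Because $|A_{ij}|\leq \max(1,y)$ is uniformly bounded and $\Var(A_{ij})=\Theta(\log n/n)$, the Bernstein-type argument of \cite[Lemma~7]{Abbe2020} goes through with $p$ replaced by $\Theta(\log n/n)$ and the same logarithmic envelope $\varphi(x)=\Theta(\log n)/(\Delta^\star\max\{1,\log(1/x)\})$, again matching what was done for the PDS model. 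This gives $\delta_1 = 2n^{-3}$ by a union bound over rows, and Theorem~2.1 of~\cite{Abbe2020} then delivers the single-eigenvector bounds above with the required $1/(\log\log n\sqrt{n})$ rate.

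The main technical obstacle is the non-degeneracy of the spectral gap of the signed expectation matrix $A^\star$ and the correct identification of the leading constants $\theta_1,\theta_2$; this must be handled carefully because $A^\star$ now has entries that can have opposite signs in the two blocks (via the choice of $y=y(p,q)$), and one has to rule out cancellation that could collapse the gap. The argument is the continuity-in-$\rho$ + rational-approximation trick from the proof of Lemma~\ref{lemma:spectral-gap}, combined with a direct check that when the block values $\alpha(p-y(1-p))$ and $\alpha(q-y(1-q))$ are distinct (which holds whenever $p\neq q$), the $2\times 2$ reduced matrix has two nonzero eigenvalues of order $\log n$. Once this is in hand, every remaining step is a routine transcription of the PDS proof.
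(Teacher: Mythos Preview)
Your proposal is correct and follows essentially the same route as the paper: reduce to the two single-eigenvector bounds and verify Assumptions~1--4 of \cite[Theorem~2.1]{Abbe2020} for the signed adjacency matrix. The only cosmetic differences are in the citations you invoke: the paper quotes \cite[Theorem~5]{Hajek2016} for the spectral-norm concentration (rather than Bandeira--van~Handel/matrix-Bernstein) and refers to \cite[Proposition~5.1]{Dhara2021} for the row-wise concentration of Assumption~4 (rather than reproving the Bernstein bound directly), but the underlying arguments are exactly the ones you outline.
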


\begin{lemma}\label{lemma:concentration}
Fix $0 < q< p < 1$ and let $y = y(p,q)$ be as in \eqref{defn:y}.
Let $\{X_i\}_{i=1}^m$ be i.i.d.~random variables, where $X_i = 1$ with probability $\alpha p$, $X_i = -y$ with probability $\alpha (1-p)$, and $X_i = 0$ with probability $1-\alpha$. Then for all $\epsilon > 0$, 
\begin{align}\label{eq:conc-1}
\log \left(\mathbb{P}\left(\sum_{i=1}^m X_i \leq \epsilon \log n \right) \right) &\leq - \alpha m \Delta_+(p,q)+ \epsilon \log \left(\frac{p}{q} \right) \log (n) .
\end{align}
Similarly, let $\{Y_i\}_{i=1}^m$ be i.i.d. random variables, where $Y_i = 1$ with probability $\alpha q$, $Y_i = -y$ with probability $\alpha (1-q)$, and $Y_i = 0$ with probability $1-\alpha$. Then
\begin{align}\label{eq:conc-2}
\log \left(\mathbb{P}\left(\sum_{i=1}^m Y_i \geq -\epsilon \log n \right) \right) &\leq   - \alpha m \Delta_+(p,q)+\epsilon \log \left( \frac{p}{q}\right)  \log (n).
\end{align}
\end{lemma}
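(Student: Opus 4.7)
The plan is a direct Chernoff--Cram\'er calculation, where the crucial structural point is that the weight $y = y(p,q)$ has been defined precisely so that the exponential moment of the signed variable factorizes through the CH divergence. Both bounds are obtained by the same mechanism, so I describe the first in detail.

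For \eqref{eq:conc-1} I would start from the standard Chernoff bound $\mathbb{P}\bigl(\sum_{i=1}^m X_i \le \epsilon \log n\bigr) \le e^{s \epsilon \log n}\, \mathbb{E}[e^{-sX_1}]^m$, valid for any $s \ge 0$, and choose the tilting parameter $s = x\log(p/q)$ for some $x \in [0,1]$ (note $\log(p/q) > 0$ since $p > q$). The defining identity $y \log(p/q) = \log\bigl(\tfrac{1-q}{1-p}\bigr)$ then gives $e^{-s} = (q/p)^x$ and $e^{sy} = \bigl((1-q)/(1-p)\bigr)^x$, so after collecting terms one computes
\[\mathbb{E}[e^{-sX_1}] = \alpha\, p^{1-x} q^x + \alpha(1-p)^{1-x}(1-q)^x + (1-\alpha) = 1 - \alpha\, D_{1-x}(p,q).\]
Applying $1+u \le e^u$ and taking logs in the Chernoff bound, I obtain $\log \mathbb{P}(\sum X_i \le \epsilon \log n) \le x \log(p/q)\, \epsilon \log n - m\alpha\, D_{1-x}(p,q)$. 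I then choose $x^\star \in [0,1]$ so that $D_{1-x^\star}(p,q) = \Delta_+(p,q)$, which exists by definition of $\Delta_+$ (the substitution $x \mapsto 1-x$ preserves the optimization over $[0,1]$). Since $x^\star \le 1$ and $\log(p/q), \epsilon, \log n > 0$, the first term is at most $\epsilon \log(p/q) \log n$, establishing \eqref{eq:conc-1}.

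The second bound \eqref{eq:conc-2} is symmetric. From the upper-tail Chernoff bound $\mathbb{P}(\sum Y_i \ge -\epsilon \log n) \le e^{s \epsilon \log n}\, \mathbb{E}[e^{sY_1}]^m$, the same parameterization $s = x \log(p/q)$ yields $\mathbb{E}[e^{sY_1}] = \alpha\, p^x q^{1-x} + \alpha(1-p)^x(1-q)^{1-x} + (1-\alpha) = 1 - \alpha\, D_x(p,q)$, and choosing $x$ to maximize $D_x(p,q)$ over $[0,1]$ produces the desired inequality by the same bookkeeping. There is no real obstacle beyond verifying the algebraic identity $p^{1-x}q^x + (1-p)^{1-x}(1-q)^x = 1 - D_{1-x}(p,q)$ (and its counterpart with $x$ in place of $1-x$), which is exactly what motivates the specific choice of $y(p,q)$ in the signed adjacency matrix: the $y$-weighting is the unique one that makes the one-step log-MGF depend on the Bernoulli distributions $(p,1-p)$ and $(q,1-q)$ only through the Hellinger-type quantities entering $\Delta_+(p,q)$.
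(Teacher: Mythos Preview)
Your proposal is correct and is essentially the same argument as the paper's: both apply the Chernoff bound with tilting parameter $s = x\log(p/q)$ (the paper writes $\lambda = (1-x^\star)\log(p/q)$ for the optimizing $x^\star$), use $\log(1+u)\le u$ to bound the one-step MGF by $e^{-\alpha D_x(p,q)}$, and then optimize $x$ over $[0,1]$ to obtain $\Delta_+(p,q)$. Your explicit identification $\mathbb{E}[e^{-sX_1}] = 1 - \alpha D_{1-x}(p,q)$ is exactly the computation the paper carries out in slightly less compact notation.
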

We now prove the first statement of Theorem \ref{theorem:exact-recovery-censored}.
\begin{proof}[Proof of Theorem \ref{theorem:exact-recovery-censored}]
We first give the proof in the case $p > q$. 
Lemma \ref{lemma:MLE} shows that \eqref{eq:MLE-inequality} holds in this parameter regime. Therefore, as in the proof of Theorem \ref{theorem:dense-subgraph}, it suffices to show that there exist $s_1, s_2 \in \{-1,1\}$ such that the vector $s_1 c_1^{\star} u_1 + s_2 c_2^{\star} u_2$ recovers the planted dense subgraph. 

Let $c_1^{\star}, c_2^{\star}$ be such that $\frac{c_1^{\star} \log(n) }{\lambda_1^{\star}} u_1^{\star} + \frac{c_2^{\star} \log(n)}{\lambda_2^{\star}} u_2^{\star} = \vecS$ as in Algorithm~\ref{alg:weights}. Choose $s_1, s_2 \in \{-1,1\}$ such that 
\[\left \Vert s_1 c_1^{\star} u_1 + s_2 c_2^{\star} u_2 - A\left(\frac{c_1^{\star} }{\lambda_1^{\star}} u_1^{\star} + \frac{c_2^{\star} }{\lambda_2^{\star}} u_2^{\star} \right) \right \Vert_{\infty} \leq \frac{C}{\log \log (n) \sqrt{n}}.\]
This is possible by Lemma \ref{lemma:entrywise-censored} with probability $1-O(n^{-3})$, where $C$ depends on $p$, $q$, $t$, and $\rho$. Multiplying through by $\log(n)$, we also obtain
\[\left \Vert \log(n)\left(s_1 c_1^{\star} u_1 + s_2 c_2^{\star} u_2\right) - A \vecS \right \Vert_{\infty} \leq \frac{C \log(n)}{\log \log (n) \sqrt{n}}\]
We now apply Lemma \ref{lemma:concentration} with $m = \rho n$. In particular, if $i\in \true$, then \eqref{eq:conc-1} applies and if $i\notin \true$ then \eqref{eq:conc-2} applies. 
Then 
\[\alpha m \Delta_+ (p,q) =   t\rho \Delta_+(p,q)\log(n).\]
Since $t\rho \Delta_+(p,q) > 1$, there exists $\epsilon > 0$ such that
\begin{align*}
&\sqrt{K}\min_{i \in \true} (A \vecS)_i \geq \epsilon \log(n), \quad \sqrt{K}\max_{i \notin \true} (A \vecS)_i \leq -\epsilon \log(n).
\end{align*}
with probability $1-n^{-\Omega(1)}$. Therefore, 
\begin{align*}
\log(n) \sqrt{K} \min_{i \in \true} (s_1 c_1^{\star} u_1 + s_2 c_2^{\star} u_2)_i \geq \epsilon \log(n) - \frac{C \log(n) \sqrt{K}}{\log \log(n) \sqrt{n}}\\
\log(n) \sqrt{K} \max_{i \in [n] \setminus \true} (s_1 c_1^{\star} u_1 + s_2 c_2^{\star} u_2)_i \leq -\epsilon \log(n) + \frac{C \log(n) \sqrt{K}}{\log \log(n) \sqrt{n}}.
\end{align*}
with probability $1-n^{-\Omega(1)}$.
We conclude that with probability $1-n^{-\Omega(1)}$,
\[ \min_{i \in \true} (s_1 c_1^{\star} u_1 + s_2 c_2^{\star} u_2)_i >  \max_{i \in [n] \setminus \true} (s_1 c_1^{\star} u_1 + s_2 c_2^{\star} u_2)_i. \]

Next, consider the case $p < q$ (the equality case is ruled out since $\Delta_+(p,p) = 1$). The only change is in the application of Lemma \ref{lemma:concentration}; if $i \in S^{\star}$, then \eqref{eq:conc-2} applies and if $i \not \in \true$ then \eqref{eq:conc-1} applies. We then conclude that with probability $1-n^{-\Omega(1)}$,
\[ \max_{i \in \true} (s_1 c_1^{\star} u_1 + s_2 c_2^{\star} u_2)_i <  \min_{i \in [n] \setminus \true} (s_1 c_1^{\star} u_1 + s_2 c_2^{\star} u_2)_i. \]

\end{proof}

\begin{proof}[Proof of Lemma \ref{lemma:MLE}]
As noted in the proof of Theorem \ref{theorem:exact-recovery-censored} Part (1), it suffices to consider the case $p > q$. 
We first characterize the \MLE. Given a signed  adjacency matrix $A$ and a set $S \subseteq [n]$, let 
\begin{align*}
    E^+_1(A, S) = \left|\{\{i,j\} : A_{ij} = 1, i, j \in S \}\right|, \quad E^-_1(A, S) = \left|\{\{i,j\} : A_{ij} = 1, \{i \not \in S \text{ or } j \not \in S\} \}\right| \\
    E^+_y(A, S) = \left|\{\{i,j\} : A_{ij} = -y, i, j \in S \}\right|, \quad E^-_y(A, S) = \left|\{\{i,j\} : A_{ij} = -y, \{i \not \in S \text{ or } j \not \in S\} \}\right|.
\end{align*}
Given this notation, we can write the log-likelihood as
\begin{align*}
&\log \mathbb{P}(A|\strue) \\
&= \log(\alpha p) E_1^+(A, \strue) + \log(\alpha(1-p)) E_y^+(A,\strue) + \log(\alpha q) E_1^-(A, \strue) + \log(\alpha(1-q)) E_y^-(A,\strue) \\
&~~~~~~+ \log(1-\alpha) \left(\binom{n}{2} -E_1^+(A, \strue) - E_y^+(A, \strue)-E_1^-(A, \strue)-E_y^-(A, \strue)  \right).    
\end{align*}
When taking the difference of log likelihoods below, the final term does not contribute. We obtain
\begin{align*}
&\log\left(\mathbb{P}(A | \strue\right))
- \log\left(\mathbb{P}(A | \strue = S \right))\\ &= \log(\alpha p) \left(E_1^+(A,\strue) - E_1^+(A, S) \right) + \log(\alpha(1-p))\left(E_y^+(A,\strue) - E_y^+(A, S) \right)\\
&~~~+\log(\alpha q) \left(E_1^-(A,\strue) - E_1^-(A, S) \right) + \log(\alpha(1-q))\left(E_y^-(A,\strue) - E_y^-(A, S) \right)\\
&= \log(p) \left(E_1^+(A,\strue) - E_1^+(A, S) \right) + \log(1-p)\left(E_y^+(A,\strue) - E_y^+(A, S) \right)\\
&~~~+\log(q) \left(E_1^-(A,\strue) - E_1^-(A, S) \right) + \log(1-q)\left(E_y^-(A,\strue) - E_y^-(A, S) \right)\\
&= \log \left( \frac{p}{q}\right) \left(E_1^+(A,\strue) - E_1^+(A, S) \right) + \log \left( \frac{1-p}{1-q}\right) \left(E_y^+(A,\strue) - E_y^+(A, S) \right).
\end{align*}
It follows that the \MLE~maximizes $\log ( \frac{p}{q}) E_1^+(A,S) + \log ( \frac{1-p}{1-q}) E_y^+(A,S)$ over $S \subseteq [n]$, where $|S| = \lfloor \rho n \rfloor$. Let $F = \{S \subseteq [n] : |S| = \lfloor \rho n \rfloor\}$ denote the set of feasible sets. By rearranging, we find that the MLE computes
\[\argmax_{S \in F} \left\{E_1^+(A,S) - y E_y^+(A,S)\right\}.\]
Since $t\rho \Delta_+(p,q) > 1$, it is possible to choose $\epsilon$ satisfying
\[0 < \epsilon < \frac{t\rho\Delta_+(p,q) -1}{\log\left(\frac{p}{q}\right)}.\]
First we will show that with high probability, the \MLE~will not select any $S \in F$ such that $0 < |S \setminus \strue| \leq \frac{\epsilon}{t(1+y)} n$. 
To this end, we will leverage the concentration of partial row sums of $A$. Let $X_i = \sum_{j \in \strue} A_{ij}$. Lemma \ref{lemma:concentration} along with a union bound implies that 
\begin{align}
\mathbb{P}\left(\left\{\min_{i \in \strue} X_i \leq \epsilon \log n\right\} \cup \left\{\max_{i \in \strue} X_i \geq -\epsilon \log n\right\} \right) &\leq n^{1 + \epsilon \log\left(\frac{p}{q}\right) - t\rho \Delta_+(p,q)} = n^{-\Theta(1)}. \label{eq:simultaneous-concentration}
\end{align}
Comparing $\strue$ with $S \in F$,
\begin{align}
&E_1^+(A,S) - y E_y^+(A,S) - \left(E_1^+(A,\strue) - y E_y^+(A,\strue)\right)\nonumber \\
&= \sum_{i \in  S \setminus \strue, j \in \strue \cap  S} A_{ij} + \sum_{i,j \in S \setminus \strue: i < j}A_{ij} -  \sum_{i \in \strue \setminus S, j \in S \cap \strue} A_{ij}  - \sum_{i,j \in \strue \setminus S: i < j}A_{ij}. \label{eq:objective-value-difference}
\end{align}
We can rewrite the first term of \eqref{eq:objective-value-difference} as follows:
\begin{equation}
\sum_{i \in  S \setminus \strue, j \in \strue \cap  S} A_{ij} = \sum_{i \in  S \setminus \strue, j \in \strue} A_{ij} - \sum_{i \in  S \setminus \strue, j \in \strue \setminus S} A_{ij}. \label{eq:objective-value-difference-1}
\end{equation}
Under the high probability event in \eqref{eq:simultaneous-concentration}, we have $\sum_{i \in  S \setminus \strue, j \in \strue} A_{ij} < - \epsilon \log n|S\setminus S^\star|$. In order to bound the second term of \ref{eq:objective-value-difference-1}, we bound the number of non-zero entries in the summation. Let $H$ be the graph on $n$ vertices which contains an edge between $i$ and $j$ whenever the edge status is revealed. 
Then $H$ is distributed as an Erd\H{o}s-R\'enyi random graph with parameters $n$ and $\nicefrac{t\log n}{n}$.
For two sets $U, W \subseteq [n]$, let $r(U,W)$ denote the number of edges between $U$ and $W$ in $H$ (where we double-count any edge connecting two vertices in the intersection $U \cap W$). By \cite[Corollary 2.3]{Krivelevich2006},
\begin{equation}
\left|r(U, W) - \frac{t \log n}{n} |U| \cdot |W| \right| = O\left(\sqrt{|U| \cdot |W| t \log (n)} \right), \label{eq:pseudorandom}
\end{equation}
for all $U, W$ with high probability. Therefore, the number of revealed edge statuses between $S \setminus \strue$ and $\strue \setminus S$ is at most
\[\frac{t \log n}{n}\left(|S \setminus \strue|\right)^2 + O\left(|S \setminus \strue| \sqrt{\log n} \right).\]
In the worst case, all of the revealed statuses are nonedges, contributing $-y$. Combining these observations,
\[ \sum_{i \in  S \setminus \strue, j \in \strue \setminus S} A_{ij} \geq -y\left(\frac{t \log n}{n}\left(|S \setminus \strue|\right)^2 + O\left(|S \setminus \strue| \sqrt{\log n} \right) \right).\]
under the high probability event \eqref{eq:pseudorandom}. Therefore, we can bound \eqref{eq:objective-value-difference-1} by
\[\sum_{i \in  S \setminus \strue, j \in \strue \cap  S} A_{ij} \leq -\epsilon \log n|S\setminus S^\star| + y\left(\frac{t \log n}{n}\left(|S \setminus \strue|\right)^2 + O\left(|S \setminus \strue| \sqrt{\log n} \right) \right),\]
under the high probability events \eqref{eq:simultaneous-concentration} and \eqref{eq:pseudorandom}. We can similarly analyze the third term in \eqref{eq:objective-value-difference}:
\begin{align*}
 -  \sum_{i \in \strue \setminus S, j \in S \cap \strue} A_{ij} &=  -\sum_{i \in \strue \setminus S, j \in \strue} A_{ij} +  \sum_{i \in \strue \setminus S, j \in \strue \setminus S} A_{ij}  \\
 &\leq -\epsilon \log n + \frac{t \log n}{n}\left(|S \setminus \strue|\right)^2 + O\left(|S \setminus \strue| \sqrt{\log n} \right).
\end{align*}
Finally, we bound the second and fourth terms in \eqref{eq:objective-value-difference}:
\begin{align*}
\sum_{i,j \in S \setminus \strue: i < j}A_{ij} - \sum_{i,j \in \strue \setminus S : i< j}A_{ij} &\leq \frac{1}{2} (1+y) \left(\frac{t \log n}{n}\left(|S \setminus \strue|\right)^2 + O\left(|S \setminus \strue| \sqrt{\log n} \right) \right).
\end{align*}
Putting everything together, we bound \eqref{eq:objective-value-difference}:
\begin{align*}
&E_1^+(A,S) - y E_y^+(A,S) - \left(E_1^+(A,\strue) - y E_y^+(A,\strue)\right)\\
&\leq -2\epsilon |S \setminus \strue| \log(n) + \frac{3}{2}(1+y)\left(\frac{t \log n}{n}\left(|S \setminus \strue|\right)^2 + O\left(|S \setminus \strue| \sqrt{\log n} \right) \right) \\
&= |S \setminus \strue| \log n\left[-2\epsilon + \frac{3}{2}(1+y) \frac{t}{n} |S \setminus \strue| + O \left(\frac{1}{\sqrt{\log n}} \right) \right].
\end{align*}
If $|S \setminus \strue| \leq \frac{\epsilon}{t(1+y)} n$, then the difference in objective values will be negative for large enough $n$. We conclude that with high probability, the MLE will not select such an $S$.

It remains to show that with high probability, the MLE will not select any $S \in F$ such that $|S \setminus \strue| > \frac{\epsilon}{t(1+y)} n.$ Fix $\frac{\epsilon}{t(1+y)} n < m \leq n$, and let $S \in F$ be such that $|S \setminus \strue| = m$. Let $N(m) = \binom{m}{2} + m(K-m)$. 
Let $\{X_i\}_{i=1}^{N(m)}$ be i.i.d.~random variables, where $X_i = 1$ with probability $\alpha p$, $X_i = -y$ with probability $\alpha (1-p)$, and $X_i = 0$ with probability $1-\alpha$. Also let $\{Y_i\}_{i=1}^{N(m)}$ be i.i.d.~random variables independent of the $X_i$'s, where $Y_i = 1$ with probability $\alpha q$, $Y_i = -y$ with probability $\alpha (1-p)$, and $Y_i = 0$ with probability $1-\alpha$. Examining \eqref{eq:objective-value-difference}, we see that the difference $E_1^+(A,S) - y E_y^+(A,S) - \left(E_1^+(A,\strue) - y E_y^+(A,\strue)\right)$ has the same distribution as $\sum_{i=1}^{N(m)} Y_i - \sum_{i=1}^{N(m)} X_i$. By Lemma \ref{lemma:concentration},
\begin{align*}
\log \left(\mathbb{P}\left(\sum_{i=1}^{N(m)} X_i \leq \delta \log n \right) \right) &\leq   - \alpha N(m) \Delta_+ (p,q) + \delta \log \left(\frac{p}{q} \right) \log (n)
\end{align*}
and
\begin{align*}
\log \left(\mathbb{P}\left(\sum_{i=1}^{N(m)} Y_i \geq -\delta \log n \right) \right) &\leq    - \alpha N(m) \Delta_+(p,q) +\delta \log \left( \frac{p}{q}\right)  \log (n)
\end{align*}
for any $\delta > 0$. Simplifying the bounds using $N(m) = \binom{m}{2} + m (K-m)$, we obtain
\begin{align*}
\delta \log \left( \frac{p}{q}\right)  \log (n)  - \alpha N(m) \Delta_+(p,q) & = - \Theta(n \log n).
\end{align*}
It follows that
\begin{align*}
\mathbb{P}\left(\sum_{i=1}^{N(m)} Y_i - \sum_{i=1}^{N(m)} X_i \geq 0 \right) \leq e^{-\Theta(n \log n)},
\end{align*} 
which also implies that $E_1^+(A,S) - y E_y^+(A,S) - \left(E_1^+(A,\strue) - y E_y^+(A,\strue)\right) < 0$ with probability at least $1 - e^{-\Theta(n \log n)}$.

On the other hand, $|F| \leq 2^n$. 
By a union bound, the probability that the \MLE~selects some $S$ satisfying $|S \setminus \strue| > \frac{\epsilon}{t(1+y)} n$ is at most
\[2^n e^{-\Theta(n \log n)} = o(1).\]
We conclude that with high probability, the \MLE~selects $\strue$.
\end{proof}

\begin{proof}[Proof of Lemma \ref{lemma:entrywise-censored}]
The result will follow by establishing
\begin{equation}
\min_{s \in \{-1,1\}} \left \Vert  s u_1 -  \frac{A u_1^{\star}}{\lambda_1^{\star}}  \right \Vert_{\infty} \leq \frac{C_1(p,q,t,\rho)}{\log \log (n) \sqrt{n}}, \label{eq:entrywise-censored-planted-dense-1} \end{equation}
and
\begin{equation}
\min_{s \in \{-1,1\}}  \left \Vert  su_2 -  \frac{A u_2^{\star}}{\lambda_2^{\star}}  \right \Vert_{\infty} \leq \frac{C_2(p,q,t,\rho)}{\log \log (n) \sqrt{n}}. \label{eq:entrywise-censored-planted-dense-2} \end{equation}
In order to establish \eqref{eq:entrywise-censored-planted-dense-1} and \eqref{eq:entrywise-censored-planted-dense-2}, it suffices to verify Assumptions 1-4 of \cite[Theorem 2.1]{Abbe2020}.

By a similar argument as the proof of Lemma \ref{lemma:spectral-gap}, we have $\Delta^{\star} = c_1 \log(n)$, for some $c_1 = c_1(p,q,\rho, t)$. 
It follows from \cite[Theorem 5]{Hajek2016} that
\[\mathbb{P}\left(\Vert A - A^{\star}\Vert_2 \leq c_2 \sqrt{\log n}\right) \geq 1 - n^{-3}\]
for some $c_2 = c_2(p,q,\rho,t)$. We have
\[\Vert A^{\star}\Vert_{2 \to \infty} = \sqrt{\rho n \left(\alpha p - \alpha(1-p)y \right)^2 + (1-\rho) n \left(\alpha q - \alpha(1-q)y \right)^2} = \Theta\left(\frac{\log n}{\sqrt{n}}\right).\]
Let $\gamma = \frac{c_2 \sqrt{\log(n)}}{\Delta^{\star}}$. Then $\gamma \Delta^{\star}$ dominates $\Vert A^{\star} \Vert_{2 \to \infty}$ for $n$ large enough, verifying Assumption 1.

The second and third assumptions hold by the same reasoning as the proof of Lemma \ref{lemma:entrywise}. The fourth assumption is verified identically as in the proof of \cite[Proposition 5.1]{Dhara2021}.
\end{proof}

\begin{proof}[Proof of Lemma \ref{lemma:concentration}]
Let $\lambda > 0$. Exponentiating and applying the Markov inequality,
\begin{align*}
\mathbb{P}\left(\sum_{i=1}^m X_i \leq \epsilon \log n \right)&= \mathbb{P}\left(- \lambda \sum_{i=1}^m X_i \geq - \lambda \epsilon \log n \right)\\
&= \mathbb{P}\left(\exp\left(- \lambda \sum_{i=1}^m X_i\right) \geq \exp\left(- \lambda \epsilon \log n\right) \right)\\
&\leq n^{\lambda \epsilon} \mathbb{E}\left[\exp\left(- \lambda \sum_{i=1}^m X_i\right) \right]= n^{\lambda \epsilon} \left(\mathbb{E}[e^{-\lambda X_1}]\right)^m.
\end{align*}
Using $\log(1+x) \leq x$ for $x > -1$,
\begin{align*}
\log\left(\mathbb{E}[e^{-\lambda X_1}]\right) &= \log \left(e^{-\lambda} \alpha p + e^{\lambda y} \alpha(1-p) + 1-\alpha \right)\\
&\leq \alpha \left(e^{-\lambda}p + e^{\lambda y} (1-p) - 1 \right).
\end{align*}
Therefore,
\begin{align*}
\log \left(\mathbb{P}\left(\sum_{i=1}^m X_i \leq \epsilon \log n \right) \right) &\leq \lambda \epsilon \log (n) + \alpha m \left(e^{-\lambda}p + e^{\lambda y} (1-p) - 1 \right) 
\end{align*}
Set $\lambda = (1-x^{\star}) \log \left( \frac{p}{q}\right)$, where
$x^{\star}$ is the minimizer of 
\begin{eq}
p^x q^{1-x} + (1-p)^x (1-q)^{1-x}
\end{eq}
over $x \in [0,1]$. 
We claim that $0<x^\star<1$. Indeed, taking $x \in \{0,1\}$ yields a value of $1$, while $x = \frac{1}{2}$ yields $\sqrt{pq} + \sqrt{(1-p)(1-q)}$, which is strictly less than $1$ when $p \neq q$. Using continuity of $p^x  q^{1-x} + (1-p) ^x (1-q)^{1-x}$ with respect to $x$, we conclude that $x^\star\in (0,1)$.
Next, using that
\[e^{-\lambda} p + e^{\lambda y} (1-p) =  p^{x^{\star}} q^{1-x^{\star}} + (1-p)^{x^{\star}} (1-q)^{1-x^{\star}} = -\Delta_+(p,q) + 1,\]
we obtain
\begin{align*}
\log \left(\mathbb{P}\left(\sum_{i=1}^m X_i \leq \epsilon \log n \right) \right) &\leq - \alpha m \Delta_+(p,q)+ (1-x^{\star}) \epsilon \log \left(\frac{p}{q} \right)  \log (n) \\
&\leq  - \alpha m \Delta_+(p,q)+ \epsilon \log \left(\frac{p}{q} \right)  \log (n).
\end{align*}
Similarly,
\begin{align*}
\mathbb{P}\left(\sum_{i=1}^m Y_i \geq -\epsilon \log n \right)&= \mathbb{P}\left(\lambda \sum_{i=1}^m Y_i \geq - \lambda \epsilon \log n \right)\\
&= \mathbb{P}\left(\exp\left(\lambda \sum_{i=1}^m Y_i\right) \geq \exp\left( -\lambda \epsilon \log n\right) \right)\\
&\leq n^{\lambda \epsilon} \mathbb{E}\left[\exp\left(\lambda \sum_{i=1}^m Y_i\right) \right]= n^{\lambda \epsilon} \left(\mathbb{E}[e^{\lambda Y_1}]\right)^m.
\end{align*}and
\begin{align*}
\log\left(\mathbb{E}[e^{\lambda Y_1}]\right) &= \log \left(e^{\lambda} \alpha q + e^{-\lambda y} \alpha(1-q) + 1-\alpha \right)\\
&\leq \alpha \left(e^{\lambda}q + e^{-\lambda y} (1-q) - 1 \right).
\end{align*}
Set $\lambda = x^{\star} \log \left( \frac{p}{q}\right)$. Then
\[e^{\lambda} q + e^{-\lambda y} (1-q) =  p^{x^{\star}} q^{1-x^{\star}} + (1-p)^{x^{\star}} (1-q)^{1-x^{\star}} = -\Delta_+(p,q) + 1,\]
and we obtain
\begin{align*}
\log \left(\mathbb{P}\left(\sum_{i=1}^m Y_i \geq -\epsilon \log n \right) \right) &\leq  - \alpha m \Delta_+(p,q)+ x^{\star} \epsilon \log \left( \frac{p}{q}\right)  \log (n) \\
&\leq  - \alpha m \Delta_+(p,q)+ \epsilon \log \left( \frac{p}{q}\right)  \log (n) .
\end{align*}
The proof of Lemma~\ref{lemma:concentration} is now complete. 
\end{proof}

\section{Impossibility of recovery for the \CPDS~model} \label{sec:impossible}
\subsection{Proof of Theorem \ref{theorem:impossibility2}}
We start by identifying a sufficient condition for impossibility of recovering the planted dense subgraph. 
Let $\cS$ be the space of possible assignments of vertices to the dense subgraph, i.e., all possible subsets $S$ of size~$K$. 
Recall that $\true \in \cS$ is the true underlying dense subgraph, which is a uniform subset of size $K$. 
We write $g$ as a generic notation to denote the observed value of the edge-labeled graph $G$ consisting of present, absent and censored edges. 
Also, let $\cG$ be the space of all possible values of $G$. 

The optimal estimator of $\true$ is the Maximum A Posteriori (\textsc{MAP}) estimator. Given a realization $g \in \mathcal{G}$, the \textsc{MAP} estimator outputs $\map \in \argmax_{\mathcal{S}} \mathbb{P}(S_0 = S | G = g)$, choosing uniformly at random from the argmax set. Since $\true$ is uniformly distributed on $\mathcal{S}$, we have the following equality between sets:
\[\argmax_{\mathcal{S}} \mathbb{P}(\true = S | G = g) = \argmax_{\mathcal{S}} \mathbb{P}(G = g | \true = S);\]
i.e. the \textsc{MAP} estimator coincides with the Maximum Likelihood estimator. Due to the equivalence, we obtain the following condition for failure of the \textsc{MAP} estimator. 

\begin{proposition}\label{proposition:equiprobable}
Fix $\delta>0$. Suppose that there is  $\cG' \subset \cG $ with $\PR(G \in \cG' | \true)\geq \delta$ such that the following holds for any $g \in \cG'$: There are $k$ pairs of vertices $\{(u_i,v_i): i\in [k]\}$ with $u_i\in \true$ and $v_i\notin \true$ such that if $\tilde{S}^\star := (\true \cup \{v_i\} )\setminus\{u_i\}$, 
then  $\mathbb{P}(G = g \mid \true) = \mathbb{P}(G = g \mid \tilde{S}^\star)$. Then, conditionally on $\true$, 
the MAP estimator $\map$ fails in exact recovery with probability at least $\delta\big(1 - \frac{1}{k}\big)$.
\end{proposition}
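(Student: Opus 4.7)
The plan is to exploit that the MAP estimator coincides with the MLE here (as noted in the paragraph preceding the statement, since $\true$ is uniform on $\mathcal{S}$), and that $\map$ by definition breaks ties uniformly at random over the argmax set. On the event $\{G \in \mathcal{G}'\}$ the hypothesis produces many equally likely candidates, so $\map$ must equal $\true$ with small conditional probability; averaging over $\mathcal{G}'$ then yields the claimed bound.

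First, I would fix $g \in \mathcal{G}'$ and unpack the $k$ perturbations
\[
\tilde{S}^\star_i := (\true \cup \{v_i\}) \setminus \{u_i\}, \qquad i \in [k],
\]
guaranteed by the hypothesis. I would check that these $k$ sets together with $\true$ are $k+1$ distinct elements of $\mathcal{S}$: each $\tilde{S}^\star_i$ differs from $\true$ in the nonempty symmetric difference $\{u_i, v_i\}$ (since $u_i \in \true$ while $v_i \notin \true$), and two perturbations coincide if and only if the corresponding swap pairs coincide. Combined with the equality $\PR(G=g \mid \tilde{S}^\star_i) = \PR(G=g \mid \true)$, this forces the argmax of $S \mapsto \PR(G=g \mid S)$ on $\mathcal{S}$ to contain the set $\{\true, \tilde{S}^\star_1, \ldots, \tilde{S}^\star_k\}$, hence to have cardinality at least $k+1$.

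Next, using the uniform tie-breaking convention, on $\{G=g\}$ the estimator $\map$ is uniformly distributed over the argmax, so
\[
\PR(\map = \true \mid G = g,\, \true) \leq \frac{1}{k+1}.
\]
Integrating against the conditional law of $G$ given $\true$, restricted to $\mathcal{G}'$, and using $\PR(G \in \mathcal{G}' \mid \true) \geq \delta$, yields
\[
\PR(\map \neq \true \mid \true) \;\geq\; \PR(G \in \mathcal{G}' \mid \true)\left(1 - \frac{1}{k+1}\right) \;\geq\; \delta\left(1 - \frac{1}{k}\right),
\]
where the last step uses $1 - \tfrac{1}{k+1} \geq 1 - \tfrac{1}{k}$ and is only a convenient weakening to match the stated form.

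The main (very mild) obstacle is bookkeeping: verifying that the $k$ perturbed candidates really are pairwise distinct elements of $\mathcal{S}\setminus\{\true\}$, so that the argmax genuinely has at least $k+1$ elements. Beyond that, the argument is one line of uniform tie-breaking plus one line of averaging, and it uses nothing about the $\CPDS$ structure---this is a general ``confusion'' lemma for the MAP estimator under a uniform prior on the hidden label.
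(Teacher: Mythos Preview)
Your argument is essentially the paper's (very terse) proof spelled out with the bookkeeping made explicit, and it yields the same bound by the same mechanism. One minor logical slip: the hypothesis only says that $\true$ and the $\tilde S^\star_i$ share the \emph{same} likelihood, not the \emph{maximum} likelihood, so you cannot conclude outright that the argmax contains $\{\true,\tilde S^\star_1,\dots,\tilde S^\star_k\}$; the fix is a one-line case split---if $\true$ is not in the argmax then $\PR(\map=\true\mid G=g,\true)=0$, while if it is then all $k$ perturbations are as well and the argmax has size at least $k+1$---after which your displayed inequality $\PR(\map=\true\mid G=g,\true)\le 1/(k+1)$ holds in either case.
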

\begin{proof}
By our underlying condition, whenever $g\in \cG'$,  the true assignment is such that swapping $u_i$ and $v_i$ from $\true$  results in an equiprobable assignment. In that case, the algorithm is incorrect with probability at least $1- \frac{1}{k}$.
 Therefore, 
 \begin{eq}
 \PR(\map \neq \true ~\big|~ \true) \geq \PR(\map \neq\true ~\big|~ G \in \cG', \true) \PR(G \in \cG' ~\big|~ \true) \geq \delta \Big(1- \frac{1}{k}\Big),
 \end{eq}
 and the proof follows.
\end{proof}
Next we complete the proof of the impossibility result. 
We will use the following Poisson approximation result: 
\begin{lemma}
\label{fact:stirling}
Let $\{W_i\}_{i=1}^{m}$ be i.i.d.~from a distribution taking three values $a,b,c$ and $\mathbb{P}(W_i = a) = \alpha \psi$, $\mathbb{P}(W_i = b) = \alpha (1-\psi)$, and   $\mathbb{P}(W_i = c) = 1- \alpha$. 
Let $N_x:= \#\{i: W_i =x\}$ for $x = a,b,c$. If $\alpha=m^{o(1)-1}$ then the probability distribution of $(N_a,N_b)$ is within a total variation distance of $m^{o(1)-1}$ of the product of a Poisson distribution of mean $\alpha \psi m$ and a Poisson distribution of mean $\alpha(1-\psi)m$.
\end{lemma}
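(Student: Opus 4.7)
The plan is to compare the true pmf of $(N_a,N_b)$ to the target product-Poisson pmf pointwise. I would first write
\begin{align*}
\mathbb{P}(N_a=j, N_b=k) &= \frac{m!}{j!\,k!\,(m-j-k)!}\,(\alpha\psi)^j\,(\alpha(1-\psi))^k\,(1-\alpha)^{m-j-k},\\
Q(j,k) &\;:=\; \frac{(\alpha\psi m)^j e^{-\alpha\psi m}}{j!} \cdot \frac{(\alpha(1-\psi)m)^k e^{-\alpha(1-\psi)m}}{k!},
\end{align*}
so that the ratio
\[
R(j,k) \;=\; \frac{\mathbb{P}(N_a=j, N_b=k)}{Q(j,k)} \;=\; \left(\prod_{i=0}^{L-1}\!\left(1-\tfrac{i}{m}\right)\right)\cdot\frac{(1-\alpha)^{m-L}}{e^{-\alpha m}}
\]
depends on $(j,k)$ only through $L := j+k$. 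Crucially, the labels $\psi$ and $1-\psi$ cancel between $\mathbb{P}$ and $Q$, which is exactly what makes the product-Poisson structure the right target.

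I would then split the state space along the threshold $T := 10\alpha m + \log^2 m$. Since $\alpha m = m^{o(1)}$, we also have $T = m^{o(1)}$. On the atypical region $\{L>T\}$, the variable $L$ is $\mathrm{Bin}(m,\alpha)$-distributed under $\mathbb{P}$ and $\mathrm{Poisson}(\alpha m)$-distributed under $Q$, both with mean $\alpha m$. Standard multiplicative Chernoff / Poisson-tail bounds give $\mathbb{P}(L>T) + Q(L>T) = m^{-\omega(1)}$, which is much smaller than $m^{o(1)-1}$ and thus contributes negligibly to the total variation.

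On the typical region $\{L\leq T\}$ the task is to show $R(j,k) = 1 + O(m^{o(1)-1})$ uniformly. I would expand
\[
\log R(j,k) = \sum_{i=0}^{L-1}\log\!\left(1-\tfrac{i}{m}\right) + (m-L)\log(1-\alpha) + \alpha m
\]
using $\log(1-x) = -x - x^2/2 + O(x^3)$. The dominant errors are an $O(L^2/m)$ term from the first sum and an $O(\alpha^2 m)$ term from the second; the $O(\alpha L)$ cross term and all higher-order terms are strictly smaller under $L \leq T = m^{o(1)}$. Since $T^2/m + \alpha^2 m = O(m^{o(1)-1})$, this yields the uniform ratio bound, and summing $|\mathbb{P}(N_a=j, N_b=k) - Q(j,k)| = Q(j,k)\,|R(j,k)-1|$ over the typical region then gives total variation $O(m^{o(1)-1})$ there as well.

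The main obstacle is bookkeeping rather than any single deep step: $T$ must be chosen subpolynomial in $m$ yet large enough to give super-polynomial tail decay in every regime of $\alpha m = m^{o(1)}$ (whether $\alpha m$ is small, bounded, or slowly growing), while simultaneously keeping $T^2/m$ and $\alpha^2 m$ within $m^{o(1)-1}$. The additive choice $T = 10\alpha m + \log^2 m$ handles the cases $\alpha m \leq \log^2 m$ and $\alpha m > \log^2 m$ uniformly, and once the tail estimate and the pointwise ratio bound are in place the final total-variation bound follows immediately.
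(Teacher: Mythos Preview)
Your proposal is correct, but it takes a genuinely different route from the paper. The paper's proof is a three-line Le~Cam--style coupling: encode each $W_i$ as $\overline{W}_i\in\{(1,0),(0,1),(0,0)\}$, observe that its law is within total variation $O(\alpha^2)$ of a single draw $\widehat{W}_i\sim\mathrm{Poi}(\alpha\psi)\times\mathrm{Poi}(\alpha(1-\psi))$ (the Poisson pair places mass $O(\alpha^2)$ outside those three atoms and matches each of them up to $O(\alpha^2)$), and then use subadditivity of total variation for independent products to get $m\cdot O(\alpha^2)=O(\alpha^2 m)=m^{o(1)-1}$ for $(N_a,N_b)=\sum_i\overline{W}_i$. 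You instead compare the joint pmfs directly, noting that the ratio $R(j,k)$ depends only on $L=j+k$, splitting into a typical region $\{L\le T\}$ where a Taylor expansion gives $\log R=O(T^2/m+\alpha^2 m+\alpha T)=O(m^{o(1)-1})$, and a tail region handled by Chernoff/Poisson tails. The paper's argument is shorter, avoids the tail/typical split and all the bookkeeping, and generalizes immediately to more than two non-censored labels; your argument is fully self-contained (no appeal to an abstract Poisson-approximation lemma) and makes transparent exactly which error terms---$L^2/m$ from the falling factorial and $\alpha^2 m$ from $(1-\alpha)^m$ versus $e^{-\alpha m}$---drive the bound.
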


\begin{proof}
For each $i$, let $\overline{W}_i$ be $(1,0)$ if $W_i=a$, $(0,1)$ if $W_i=b$, and $(0,0)$ if $W_i=c$. Also, for each $i$ let $\widehat{W_i}$ be drawn from $P(\alpha \psi)\times P(\alpha(1-\psi))$ where $P(\lambda)$ is the Poisson distribution with mean $\lambda$. The probability distributions of $\overline{W}_i$ and $\widehat{W}_i$ are within a total variation distance of $O(\alpha^2)=m^{o(1)-2}$ of each other. Furthermore, all of the $\overline{W}_i$ and $\widehat{W}_i$ are mutually independent. So, the probability distribution of $(N_a,N_b)$ is within a total variation distance of $O(\alpha^2m)=m^{o(1)-1}$ of the probability distribution of $\sum_i \widehat{W}_i$, which is $P(\alpha \psi m)\times P(\alpha (1-\psi) m)$.
\end{proof}
This allows us to finally prove Theorem \ref{theorem:impossibility2}.
\begin{proof}[Proof of Theorem \ref{theorem:impossibility2}]
We will show that with high probability, there exist $k=\omega(1)$ pairs of vertices $\{(u_i,v_i):i\in [k], u_i \in \true, v_i \notin \true\}$ such that swapping their labels results in an equiprobable graph instance. By Proposition~\ref{proposition:equiprobable}, this would show that exact recovery fails with probability $1 - o(1)$ for any algorithm.  First, due to \eqref{impossibility-condition}, there exists a sufficiently small enough constant $\delta>0$ such that $t\rho \Delta_+(p,q)<1-2\delta$ for all sufficiently large values of $n$.

For $j=1,2$, let $V_j$ be sets of $2n^{1-\delta}$ randomly selected vertices from $\true$ and $(\true)^c$, respectively. 
Let $V = V_1 \cup V_2$. 
Next, let $V'$ be the set of all vertices in $V$ whose connections to all other vertices in $V$ are censored. We claim that $|V'|> 3n^{1-\delta}$ with probability $1-o(1)$. To see this, observe that the expected number of revealed connections between vertices in $V$ is at most $\alpha \binom{4n^{1-\delta}}{2}\asymp 8n^{1-2\delta}t\log n$, so with high probability there are fewer than $n^{1-\delta}/2$ such connections. Therefore with high probability there are fewer than $n^{1-\delta}$ vertices in $V$ with at least one neighbor in $V$, from which the claim follows.
Let 
\[x^{\star} = \argmax_{x \in [0,1]} \big(1-p^xq^{1-x} - (1-p)^x(1-q)^{1-x} \big)\]
so that $\Delta_+(p,q) = 1-p^{x^\star}q^{1-x^\star} - (1-p)^{x^\star}(1-q)^{1-x^\star}$. 
For a vertex $v$, we call $(d_1,d_2)$ its \emph{degree profile}, where $d_1$ (respectively $d_2$) is the number of present edges (respectively absent edges) $v$ has to the planted dense subgraph $\true$. 
Consider the degree profile 
\begin{align} \label{bad-degree-profile}
    d&=\big(\lceil p^{x^*}q^{1-x^*}t \rho \log(n)\rceil ,\lceil (1-p)^{x^*}(1-q)^{1-x^*}t \rho\log(n)\rceil\big) =: (d_{1},d_{2}).
\end{align}
We will show that there are vertices in $\true$ and $(\true)^c$ with the degree profile given by \eqref{bad-degree-profile}. 
For $v\in \true $ (respectively $v\notin \true$), let $\overline{p}_1$ (respectively~$\overline{p}_2$) be the probability that $v$ has degree profile $d$, conditioned on~$v\in V'$. 
By Lemma~\ref{fact:stirling}, we have that $\overline{p}_1$ is within $n^{o(1)-1}$ of the probability of drawing $d$ from a multidimensional Poisson distribution with mean $\left(\alpha p(K-|V_1|), \alpha(1-p)(K-|V_1|)\right)$ and $\overline{p}_2$ is within $n^{o(1)-1}$ of the probability of drawing $d$ from a multidimensional Poisson distribution with mean $\left(\alpha q(K-|V_1|), \alpha(1-q)(K-|V_1|)\right)$. So,

\begin{eq} \label{eq:min-prob}
   &\min\{\overline{p}_1, \overline{p}_2\} \\
   &\sim \min \bigg\{\e^{-t\rho \log n}\frac{(t\rho p \log n)^{d_{1}}}{d_{1}!} \frac{(t\rho (1-p) \log n)^{d_{2}}}{d_{2}!} , \e^{-t\rho \log n}\frac{(t\rho q \log n)^{d_{1}}}{d_{1}!} \frac{(t\rho (1-q) \log n)^{d_{2}}}{d_{2}!} \bigg\}\pm n^{o(1)-1}\\
   & \sim \e^{-t\rho \log n} \min \bigg\{ \frac{(t\rho p \log n)^{d_{1}}(t\rho (1-p) \log n)^{d_{2}}}{2\pi \e^{-d_1-d_2} d_1^{d_1+1/2}d_2^{d_2+1/2}},\frac{(t\rho q \log n)^{d_{1}}(t\rho (1-q) \log n)^{d_{2}}}{2\pi \e^{-d_1-d_2} d_1^{d_1+1/2}d_2^{d_2+1/2}}\bigg\}\pm n^{o(1)-1}\\
   &= \frac{\e^{-t\rho \log n + d_1+d_2}}{2\pi \sqrt{d_1d_2}} \min \bigg\{ \bigg(\frac{t\rho p\log n}{d_1} \bigg)^{d_1} \times \bigg(\frac{t\rho (1-p)\log n}{d_2} \bigg)^{d_2}, \\
   & \hspace{8cm}\bigg(\frac{t\rho q\log n}{d_1} \bigg)^{d_1} \times \bigg(\frac{t\rho (1-q)\log n}{d_2} \bigg)^{d_2} \bigg\}\pm n^{o(1)-1} \\
   & \sim \frac{\e^{-t\rho \log n + d_1+d_2}}{2\pi \sqrt{d_1d_2}} n^{o(1)}\min \bigg\{ \bigg(\frac{p}{q}\bigg)^{(1-x^*)d_1} \bigg(\frac{1-p}{1-q}\bigg)^{(1-x^*)d_2}, \bigg(\frac{q}{p}\bigg)^{x^*d_1} \bigg(\frac{1-q}{1-p}\bigg)^{x^*d_2}\bigg\}\pm n^{o(1)-1}, 
\end{eq}
and moreover,
\begin{align*}
    \e^{-t\rho \log n + d_1+d_2} \asymp \e^{-t\rho \log n [1 -p^{x^*}q^{1-x^*} - (1-p)^{x^*}(1-q)^{1-x^*}) ]} = n^{-t\rho \times \Delta_+(p,q)}.
\end{align*}
Next, let $f(x) = p^x q^{1-x} + (1-p)^x (1-q)^{1-x}$. Since $f$ attains its minimum at $x^*$, we have $f'(x^*) =0$ and therefore
\begin{align*}
    p^{x^*} q^{1-x^*} \log \frac{p}{q} + (1-p)^{x^*} (1-q)^{1-x^*} \log \frac{1-p}{1-q} = 0. 
\end{align*}
Thus, 
\begin{align*}
    &\bigg(\frac{p}{q}\bigg)^{(1-x^*)d_1} \bigg(\frac{1-p}{1-q}\bigg)^{(1-x^*)d_2} \\
    & \quad = \exp \bigg((1-x^*)t\rho \log n  \bigg[p^{x^*} q^{1-x^*} \log \frac{p}{q} + (1-p)^{x^*} (1-q)^{1-x^*} \log \frac{1-p}{1-q}  \bigg]\pm o(\log(n))\bigg) = n^{o(1)}
\end{align*}
and similarly, 
\begin{align*}
    &\bigg(\frac{q}{p}\bigg)^{x^*d_1} \bigg(\frac{1-q}{1-p}\bigg)^{x^*d_2} \\
    &\quad = \exp \bigg(-x^*t\rho \log n  \bigg[p^{x^*} q^{1-x^*} \log \frac{p}{q} + (1-p)^{x^*} (1-q)^{1-x^*} \log \frac{1-p}{1-q}  \bigg]\pm o(\log(n))\bigg) = n^{o(1)}.
\end{align*}
Thus, \eqref{eq:min-prob} reduces to 
\begin{align*}
    \min\{\overline{p}_1, \overline{p}_2\} \asymp \frac{n^{-\Delta_+-o(1)}}{2\pi \sqrt{d_1d_2}} \pm n^{o(1)-1}\ge n^{2\delta-1+o(1)},
\end{align*}
where the last step uses $t\rho \Delta_+(p,q)\le 1-2\delta$ for large $n$.

For $v \in V'$, let $Y(v)$ be the indicator that $v$ has exactly $d_1$ present edges and $d_2$ absent edges to $S$. Note that the random variables in the set $\{Y(v)\}_{v \in V'}$ are mutually independent conditionally on $V'$.
Finally, observe that if $u \in V' \cap V_1$ and $v \in V' \cap V_2$ satisfy $Y(u) = Y(v) = 1$, then switching the labels of $u$ and $v$ results in an equiprobable outcome.

Let 
\[Y_1 = \sum_{v \in V' \cap V_1} Y(v) ~~~\text{and}~~~ Y_2 = \sum_{v \in V' \cap V_2} Y(v).\]
It suffices to show that there is a function $f(n) = \omega(1)$ such that $Y_1, Y_2 \geq f(n)$ with probability $1-o(1)$. 

Similarly to the proof of \cite[Theorem 2.1]{Dhara2021}, we can show
\begin{align}
\mathbb{P}\left(Y_1 \leq (1- \ve) |V' \cap V_1| \cdot \overline{p}_1 ~\big|~ |V' \cap V_1| = s \right) &\leq \frac{1}{\ve^2 s \overline{p}_1}. \label{eq:Chebyshev-step}
\end{align}
Recall that $|V'| > \frac{3}{4}|V_1\cup V_2|$ with probability $1-o(1)$. $|V_1|=|V_2|=|V_1\cup V_2|/2$, so with high probability $|V_1\cap V'|>n^{1-\delta}$ and $|V_2\cap V'|>n^{1-\delta}$. Along with \eqref{eq:Chebyshev-step}, this implies
\begin{align*}
\mathbb{P}\left(Y_1 \leq (1-\ve) n^{1-\delta}\overline{p}_1 \right) &\leq \frac{n^{\delta-1}}{\ve^2 \overline{p}_1} + o(1).
\end{align*}
Since $\overline{p}_1\ge  n^{2\delta-1+o(1)}$, we have shown that there is a function $f(n) = \omega(1)$ such that $\mathbb{P}\left(Y_1 \leq f(n)\right) = o(1)$. By a similar argument, it holds that $\mathbb{P}\left(Y_2 \leq f(n)\right) = o(1)$. Applying Proposition \ref{proposition:equiprobable} with $\delta = 1 - o(1)$ and $k = \left(f(n)\right)^2$ completes the proof. 
\end{proof}

\subsection{Sharpness in a general parameter regime: Proof of Theorem \ref{theorem:sparse-case-sharpness}}
We start by setting up a coupling argument that will be used in the proof.
\begin{definition}[Reduced \CPDS~model]\label{defn:reduced-CPDS}
Given an instance of $\CPDS(p,q,\alpha,K)$ and $\alpha'\geq \alpha$, suppose we independently add an absent edge between each pair of vertices whose connection was previously censored with probability $\frac{\alpha'-\alpha}{1- \alpha}$.
We call this model a reduced \CPDS~model and denote it by $\CPDS'(p,q,\alpha,\alpha', K)$. 
\end{definition}
We observe the following: 
\begin{lemma}\label{lem:reduced-CPDS}
The distribution of $\CPDS'(p,q,\alpha,\alpha', K)$ is equal to that of $\CPDS(p\alpha/\alpha',q\alpha/\alpha',\alpha',K)$.
\end{lemma}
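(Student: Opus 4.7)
The plan is to prove the identity of the two distributions by a pair-by-pair comparison of marginal edge labels. In both models, conditional on $\true$, the labels on distinct pairs are mutually independent (the extra absenting step in the reduced model only uses independent coins on each censored pair), so it suffices to verify that the marginal distribution of the label in $\{\texttt{present},\texttt{absent},\texttt{censored}\}$ on every pair agrees across the two models.

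First I would handle a pair $\{i,j\}$ with $i,j\in \true$. In $\CPDS'(p,q,\alpha,\alpha',K)$, the label \texttt{present} is inherited from the underlying $\CPDS(p,q,\alpha,K)$, so it occurs with probability $\alpha p$; a pair remains \texttt{censored} iff it was censored originally and the extra absenting coin failed, giving probability $(1-\alpha)\bigl(1-\tfrac{\alpha'-\alpha}{1-\alpha}\bigr)=1-\alpha'$; and \texttt{absent} absorbs the remaining mass $\alpha'-\alpha p$. Comparing with the corresponding pair in $\CPDS(p\alpha/\alpha',q\alpha/\alpha',\alpha',K)$, the three probabilities are $\alpha'\cdot(p\alpha/\alpha')=\alpha p$, $\alpha'\bigl(1-p\alpha/\alpha'\bigr)=\alpha'-\alpha p$, and $1-\alpha'$, matching exactly. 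An identical calculation with $q$ in place of $p$ handles any pair that is not entirely contained in $\true$, and this finishes the marginal match, hence the joint match by independence.

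There is essentially no obstacle; the only sanity checks are that $p\alpha/\alpha',\, q\alpha/\alpha'\in[0,1]$ and that $\tfrac{\alpha'-\alpha}{1-\alpha}\in[0,1]$, both of which follow from $0\le \alpha\le \alpha'\le 1$ (with the degenerate case $\alpha=1$ forcing $\alpha'=\alpha$, in which case no extra absenting ever occurs and the two models coincide trivially).
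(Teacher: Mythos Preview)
Your proposal is correct and follows essentially the same approach as the paper: both arguments verify that the per-pair marginal probabilities of \texttt{present}, \texttt{absent}, and \texttt{censored} agree across the two models, and invoke independence across pairs. The paper's version is simply terser (it only explicitly notes that the \texttt{present} and \texttt{censored} probabilities match and that \texttt{absent} takes the remaining mass), whereas you spell out the computations and the sanity checks on the parameter ranges.
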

\begin{proof}
Clearly, the probability of present edges and censored edges are equal in both the models, and therefore, the probabilities of absent edges are also equal. 
\end{proof}
Since $\CPDS(p,q,\alpha,\alpha', K)$ is distributed as $\CPDS(p\alpha/\alpha',q\alpha/\alpha',\alpha',K)$, any algorithm on the latter model would result in an algorithm in the former model via the reduction. In particular, the following gives a spectral algorithm for $\CPDS(p,q,\alpha,K)$ via reduction:\\
\begin{breakablealgorithm}
\caption{Reduced spectral recovery in the CPDS model}\label{alg:CPDS-reduced}
\begin{algorithmic}[1]
\Require{Parameters $p$, $q$, $\alpha$, $\alpha'$ ($\alpha'\geq \alpha$) and $K$; an unlabeled observation  $A \sim \CPDS(p,q,\alpha,K)$.} \vspace{.2cm}
\Ensure{An estimate of the planted dense subgraph vertices.}
\State Create a reduced \CPDS~instance $A' \sim \CPDS'(p,q,\alpha,\alpha', K)$ according to Definition~\ref{defn:reduced-CPDS}. 
\State If $\alpha' = 1$, then apply Algorithm \ref{alg:PDS} to $A'$, obtaining an estimate $\hat{S}$. Otherwise, apply Algorithm~\ref{alg:CPDS} to $A'$.
\State Return $\hat{S}$.
\end{algorithmic}
\end{breakablealgorithm}
\vspace{.2cm}

\begin{proof}[Proof of Theorem~\ref{theorem:sparse-case-sharpness}]
Recall \eqref{eq:CH-defn}. Using $\alpha_n p_n = \frac{a \log n}{n}, \alpha_n q_n = \frac{b \log n}{n}$, and $\alpha_n = \frac{t_n \log n}{n}$ we can simplify
\begin{eq} 
    \frac{\alpha_n n}{\log n} \Delta_+(p_n,q_n) &= t_n \max_{x\in [0,1]} \left(1-p_n^{x}q_n^{1-x} - (1-p_n)^{x}(1-q_n)^{1-x} \right)\\
    & = \max_{x\in [0,1]} t_n  \bigg(1-\bigg(\frac{a}{t_n}\bigg)^{x}\bigg(\frac{b}{t_n}\bigg)^{1-x} - \bigg(1-\frac{a}{t_n}\bigg)^{x}\bigg(1-\frac{b}{t_n}\bigg)^{1-x} \bigg) \\
    & = \max_{x\in [0,1]}   \left(t_n- a^{x}b^{1-x} - (t_n-a)^{x}(t_n-b)^{1-x}\right)\\
    &:= \max_{x\in [0,1]} g(t_n,x), \label{eq:gt-n}
\end{eq}
where we have denoted the expression inside the max as $g(t_n,x)$. 
Considering $g$ as a function of $t$, note that $\frac{\dif g(t,x)}{\dif t} = 1-x(\frac{t-b}{t-a})^{1-x}-(1-x)(\frac{t-a}{t-b})^x\le 0$, by the weighted AM-GM inequality.
Thus, $g(t,x)$ is nonincreasing in $t$, and therefore, $\max_{x\in [0,1]}g(t,x)$ is also nonincreasing in $t$. 
Also, for $t,t'> \max\{a,b\}$, 
\begin{align*}
    |g(t,x) - g(t',x)| \leq |t-t'|+ (t-a)^x |(t-b)^{1-x}-(t'-b)^{1-x}|+(t'-b)^x |(t-a)^{x}-(t'-a)^{x}|. 
\end{align*}
Thus, for any $t_n\to t>\max\{a,b\}$, we have $\max_{x\in [0,1]}g(t_n,x) \to \max_{x\in [0,1]}g(t,x)$, and therefore,  $\max_{x\in [0,1]}g(t,x)$ is continuous in $t$ on the interval $(\max\{a,b\}, \infty)$. 
To see that $\max_{x \in [0,1]} g(t,x)$ is also continuous at $t = \max\{a,b\}$, let us assume without loss of generality that $a > b$. We will show that the $\limsup_{t\searrow}  \max_{x\in [0,1]}g(t,x)$ and $\liminf_{t\searrow}  \max_{x\in [0,1]}g(t,x) $ are equal. For any $t\geq a$
\begin{align*}
    \max_{x\in [0,1]} g(t,x) \leq \max_{x\in [0,1]} \left(t-a^{x}b^{1-x}\right) = t - b \min_{x\in [0,1]}\bigg(\frac{a}{b}\bigg)^{x} = t-b. 
\end{align*}
Therefore, $\limsup_{t\searrow a}\max_{x\in [0,1]} g(t,x) \leq a-b$. 
Also, for any fixed $\varepsilon>0$, 
\begin{align*}
    \max_{x\in [0,1]} g(t,x) \geq  t - a^\varepsilon b^{1-\varepsilon} - (t-a)^{\varepsilon}(t-b)^{1-\varepsilon},
\end{align*}
and therefore $\liminf_{t\searrow a}\max_{x\in [0,1]} g(t,x) \geq a- b \big(\frac{a}{b}\big)^{\varepsilon}$. 
Taking $\varepsilon\to 0$, we can conclude that $\liminf_{t\searrow a}\max_{x\in [0,1]} g(t,x) \geq a- b $, which proves the continuity of $\max_{x\in [0,1]} g(t,x)$ at $t = a$.

Next, by Lemma \ref{lemma:connected-conditions}, 
\begin{equation}
\lim_{t\to\infty} \max_{x\in [0,1]}g(t,x)= f(a,b). \label{eq:f(a,b)-limit}
\end{equation}
To prove the first statement in Theorem~\ref{theorem:sparse-case-sharpness}, let $ \liminf_{n\to\infty} \frac{\alpha_n n}{\log n} \times \rho \Delta_+(p_n,q_n) >  1.$ 
We consider the cases where $\rho f(a,b) >1$ and $\rho f(a,b) \leq 1$ separately.

If $\rho f(a,b)>1$ then by Theorem \ref{theorem:dense-subgraph-1} there is a spectral algorithm that recovers the dense subgraph of $\CPDS(a\log(n)/n,b\log(n)/n,1,K_n)$ with probability $1-o(1)$. By Lemma~\ref{lem:reduced-CPDS}, the reduced spectral algorithm in Algorithm~\ref{alg:CPDS-reduced} recovers the dense subgraph of $\CPDS(a\log(n)/(\alpha_n n),b\log(n)/(\alpha_n n),\alpha_n,K_n) = \CPDS(p_n, q_n, \alpha_n, K_n)$.  

If $\rho f(a,b)\le 1$ but $ \liminf_{n\to\infty} \frac{\alpha_n n}{\log n} \times \rho \Delta_+(p_n,q_n) >  1$, then we claim that there must exist some fixed $t_0>\max\{a,b\}$ such that 
\begin{eq} \label{alpha-upper-bound}
 t_0\times \rho \Delta_+\left(\frac{a}{t_0},\frac{b}{t_0}\right) >1, \quad\text{and } \alpha_n\leq \frac{t_0\log(n)}{n} \quad \text{for all sufficiently large }n. 
\end{eq}
To prove \eqref{alpha-upper-bound}, recall $t_n = \alpha_n n /\log n$ 
and let $t_0' = \limsup_{n\to\infty} t_n$. 
Let us first show that $t_0'<\infty$. Indeed, if $t_n = \alpha_n n /\log n$ and $t_{n_k} \to \infty$ along some subsequence $(n_k)_{k\geq 1}$, then by \eqref{eq:gt-n} and \eqref{eq:f(a,b)-limit}, we must have that 
\begin{align*}
    \lim_{k\to\infty} \frac{\alpha_{n_k} n_k}{\log n_k} \times \rho \Delta_+(p_{n_k},q_{n_k}) = \rho f(a,b).
\end{align*}
This would yield a contradiction. To prove the first assertion in \eqref{alpha-upper-bound},
let $(n_k')_{k\geq 1}$ be a subsequence such that $t_{n_k'} \to t_0'$. 
We have shown above that $t \Delta_+\big(\frac{a}{t},\frac{b}{t}\big) = \max_{x\in [0,1]} g(t,x)$ is continuous for $t\geq \max\{a,b\}$.  
Thus
\begin{align*}
     1<\lim_{k\to\infty} \frac{\alpha_{n_k'} n_k'}{\log n_k'} \times \rho \Delta_+(p_{n_k'},q_{n_k'}) = t_0'\times \rho \Delta_+\Big(\frac{a}{t_0'},\frac{b}{t_0'}\Big).
\end{align*}
Using the continuity of $t \Delta_+\big(\frac{a}{t},\frac{b}{t}\big)$ again, we can pick $t_0>t_0'$ such that both the conditions in \eqref{alpha-upper-bound} are ensured. 
Thus, we assume \eqref{alpha-upper-bound} holds. Due to the first condition in  \eqref{alpha-upper-bound},  there is a spectral algorithm that recovers the dense subgraph of $\CPDS(a/t_0,b/t_0,t_0\log(n)/n,K_n)$ by Theorem \ref{theorem:exact-recovery-censored-1}. The reduction in Definition~\ref{defn:reduced-CPDS}, together with the second condition of \eqref{alpha-upper-bound} and  Lemma~\ref{lem:reduced-CPDS}, implies that there is also a reduced spectral algorithm (Algorithm~\ref{alg:CPDS-reduced}) that recovers the dense subgraph of \[\CPDS(a\log(n)/(\alpha_n n),b\log(n)/(\alpha_n n),\alpha_n,K_n) = \CPDS(p_n, q_n, \alpha_n, K_n).\] 

To prove the second statement in Theorem~\ref{theorem:sparse-case-sharpness}, let $ \limsup_{n\to\infty} \frac{\alpha_n n}{\log n} \times \rho \Delta_+(p_n,q_n) < 1$. We can set $\alpha_n'=\min(\alpha_n, \log^2(n)/n)$. 
In that case, 
\begin{eq} \label{asymp-t-t-prime}
\lim_{n\to\infty} \left|\frac{\alpha_n n}{\log n} \times \rho \Delta_+(p_n,q_n)-\frac{\alpha_n' n}{\log n} \times \rho \Delta_+(p_n\alpha_n/\alpha_n',q_n\alpha_n/\alpha_n')\right|=0.
\end{eq}
To see this, let  $t_n'  = \frac{\alpha_n' n}{\log n}$. 
We will use the elementary analysis fact that a sequence $(y_n)_{n\geq 1}$ converges to $y$ if and only if for any subsequence there is a further subsequence that converges to $y$.
Fix any subsequence $(n_k)_{k\geq 1}$. 
If $\limsup_{k\to\infty} t_{n_k} <\infty$, then $t_{n_k} = t_{n_k}'$ and the limit in \eqref{asymp-t-t-prime} holds trivially along $(n_k)_{k\geq 1}$. 
If $\limsup_{k\to\infty} t_{n_k} =\infty$, there is a further subsequence $(m_k)_{k\geq 1}\subset (n_k)_{k\geq 1}$ such that $\lim_{k\to\infty} t_{m_k} =\infty$. In this case, $t_{m_k}' \leq t_{m_k}$ and also $\lim_{k\to\infty} t_{m_k}' =\infty$. 
Since we have shown below \eqref{eq:gt-n} that $t\Delta_+(a/t,b/t)$ is non-increasing in $t$, we  have
\[\lim_{k\to\infty} \Big|t_{m_k} \times \rho \Delta_+(a/t_{m_k},b/t_{m_k})-t_{m_k}' \times \rho \Delta_+(a/t_{m_k}',b/t_{m_k}')\Big|=0,\] and \eqref{asymp-t-t-prime} holds along $(m_k)_{k\geq 1}$. This proves \eqref{asymp-t-t-prime}.
Thus, by Theorem \ref{theorem:impossibility2} there is no algorithm that recovers the dense subgraph of $\CPDS(a\log(n)/(\alpha_n' n),b\log(n)/(\alpha_n' n),\alpha_n',K_n)$ with nonvanishing probability. By the reduction in Definition~\ref{defn:reduced-CPDS} and Lemma~\ref{lem:reduced-CPDS}, there is no algorithm that recovers the dense subgraph of $\CPDS(p_n,q_n,\alpha_n,K_n)$ with nonvanishing probability.

\end{proof}

\section{Spectral submatrix localization} \label{sec:Gaussian}
In this section, we will complete the proof of Theorem~\ref{theorem:submatrix-localization-2}. 
We start by recalling the following result by Hajek, Wu, and Xu \cite{Hajek2017}.
\begin{theorem}[Corollary 4, \cite{Hajek2017}]
Recall the definition of exact recovery from \eqref{exact-recovery-possible}. If 
\begin{align}
&\lim_{n \to \infty} K \mu^2 = \infty \label{eq:SL-condition-1}\\
&\liminf_{n \to \infty} \frac{(K-1) \mu^2}{\log \frac{n}{K}} > 4 \label{eq:SL-condition-2}\\
&\liminf_{n \to \infty} \frac{K \mu^2}{\left(\sqrt{2 \log n} + \sqrt{2 \log K} \right)^2} > 1, \label{eq:SL-condition-3}
\end{align}
then exact recovery is possible. If exact recovery is possible, then \eqref{eq:SL-condition-1} holds, while \eqref{eq:SL-condition-2} and \eqref{eq:SL-condition-3} hold with non-strict inequality.
\end{theorem}
Note that \eqref{eq:SL-condition-3} implies \eqref{eq:SL-condition-1}. We consider the regime where $K= \lfloor \rho n \rfloor$ for a constant $\rho = (0,1)$ and $\mu = a\sqrt{\frac{\log(n)}{n}}$ 
for a constant $a > 0$. In this regime,  \eqref{eq:SL-condition-1} implies \eqref{eq:SL-condition-2}. Therefore, we have the following recovery condition.

\begin{corollary}\label{corollary:Gaussian-threshold}
Let $\rho \in (0,1)$ and $a >0$ be constants. Let $K = \lfloor \rho n\rfloor$ and $\mu = a \sqrt{\frac{\log n}{n}}$, and let $A \sim \SL(\mu, K)$. If $\rho a^2 > 8$,
then exact recovery is possible. If $\rho a^2 < 8$
then exact recovery is impossible.
\end{corollary}
\begin{proof}
The result follows from examining \eqref{eq:SL-condition-3}:
\begin{align*}
\liminf_{n \to \infty} \frac{K \mu^2}{\left(\sqrt{2 \log n} + \sqrt{2 \log K} \right)^2} &= \liminf_{n \to \infty}\frac{\rho a^2 \log (n)}{\left(\sqrt{2 \log n} + \sqrt{2 \log (\rho n)} \right)^2}\\
&= \frac{\rho a^2}{(2 \sqrt{2})^2}= \frac{\rho a^2}{8}.
\end{align*}
\end{proof}
In order to prove Theorem \ref{theorem:submatrix-localization}, we first prove an entrywise result akin to Lemma \ref{lemma:entrywise}.

\begin{lemma}\label{lemma:entrywise-Gaussian}
Let $a >0$ and $\rho \in (0,1)$ be constants. Let $A \sim \SL(\mu, K)$, where $K = \lfloor \rho n \rfloor$ and $\mu = a \sqrt{\frac{\log n}{n}}$. Let $(\lambda, u)$ be the top eigenpair of $A$. Similarly, let $(\lambda^{\star}, u^{\star})$ be the top eigenpair of $A^{\star} = \mathbb{E}[A |\true]$. Then with probability $1 - o(1)$,
\[\min_{s \in \{\pm1\}} \left\Vert su - \frac{A u^{\star}}{\lambda^{\star}} \right \Vert_{\infty} \leq \frac{C}{\sqrt{n \log n}},\]
where $C = C(\rho, a) >0$ is a constant.
\end{lemma}

\begin{proof}
As in the proof of Lemma \ref{lemma:entrywise}, we verify the assumptions of \cite[Theorem 2.1]{Abbe2020}, using ideas from the proof of \cite[Theorem 3.1]{Abbe2020}. Note that $A^{\star}$ is a rank-$1$ matrix, where $\lambda^{\star} = \mu K$. Therefore $\Delta^{\star} = \lambda^{\star}$ and $\kappa = 1$.

We have
\[\Vert A^{\star} \Vert_{2 \to \infty} = \sqrt{K \mu^2} = \mu \sqrt{K} = \Theta(\sqrt{\log n}).\]
Let $\gamma = \frac{3\sqrt{n}}{\mu K}$, so that $\Vert A^{\star} \Vert_{2 \to \infty} \leq \gamma \lambda^{\star} = \Theta(\sqrt{n})$, verifying the first assumption.

The second assumption is trivially satisfied.

To verify the third assumption, we apply \cite[Theorem 2.11]{Davidson2001}, which implies that for $t > 0$,
\[\mathbb{P}\left(\Vert A - A^{\star}\Vert_2 \geq 2 \sqrt{n} + t \right) \leq 2 e^{-t^2/4}\]
(see also \cite[Lemma 25]{Hajek2016-hidden-community}). We set $t = \sqrt{n}$, so that $2 \sqrt{n} + t = \gamma \Delta^{\star}$, allowing us to take $\delta_0 = 2e^{-n/4}$. Set $\varphi(x) = cx$ for a constant $c > 0$ to be determined, so that 
\[32 \kappa \max\{\gamma, \varphi(\gamma)\} = 32\max(1,c) \gamma = o(1).\]

To verify the fourth assumption, fix $w \in \mathbb{R}^n$. Note that 
\begin{align*}
\Delta^{\star} \Vert w \Vert_{\infty} \varphi\left(\frac{\Vert w \Vert_2}{\sqrt{n} \Vert w \Vert_{\infty}} \right) &=  \frac{c\Delta^{\star} \Vert w \Vert_2}{\sqrt{n}}\\
&=\frac{ca\sqrt{\log n} K \Vert w \Vert_2}{n}\\
&\geq (1-\epsilon)c\rho a \sqrt{\log n}  \Vert w \Vert_2
\end{align*}
for any $\epsilon$, for $n$ sufficiently large.
For each $m \in [n]$, we have $(A - A^{\star})_{m, \cdot} w \sim \textsc{Normal}(0, \Vert w \Vert_2^2)$. Let $Z \sim \textsc{Normal}(0,1)$. Then
\begin{align*}
\mathbb{P}\left(\left|(A - A^{\star})_{m, \cdot} w \right| \leq \Delta^{\star} \Vert w \Vert_{\infty} \varphi\left(\frac{\Vert w \Vert_2}{\sqrt{n} \Vert w \Vert_{\infty}} \right) \right) &\geq \mathbb{P}\left(\left|(A - A^{\star})_{m, \cdot} w \right| \leq (1-\epsilon)c\rho a \sqrt{\log n}  \Vert w \Vert_2 \right)\\
&= \mathbb{P}\left(Z \Vert w \Vert_2  \leq (1-\epsilon)c\rho a \sqrt{\log n}  \Vert w \Vert_2 \right)\\
&\geq 1 - \frac{2}{(1-\epsilon)c\rho a\sqrt{2\pi \log(n)}} e^{-(1-\epsilon)^2c^2\rho^2 a^2 \log(n)/2}\\
&= 1 - \frac{2}{(1-\epsilon)c\rho a\sqrt{2\pi \log(n)}} n^{-(1-\epsilon)^2c^2\rho^2 a^2 /2}
\end{align*}
In the second inequality, we have used the fact that $\mathbb{P}(Z > t) \leq \frac{1}{t\sqrt{2\pi}} e^{-t^2/2}$.
Therefore, we may take \[\delta_1 = \frac{2}{(1-\epsilon)c\rho a\sqrt{2\pi \log(n)}} n^{1-(1-\epsilon)^2c^2\rho^2 a^2 /2}.\]
Set $c > \frac{\sqrt{2}}{(1-\epsilon)\rho a}$ to ensure that $\delta_1 = o(1)$.

Finally, since $\gamma = \Theta\left(\frac{1}{\sqrt{\log n}}\right)$, $\Vert A^{\star} \Vert_{2 \to \infty} = \Theta(\sqrt{\log n})$, and $\Delta^{\star} = \Theta(\sqrt{n \log n})$,
we have
\begin{align*}
\kappa (\kappa + \varphi(1)) (\gamma + \varphi(\gamma)) \Vert u^{\star}\Vert_{\infty} + \frac{\gamma \Vert A^{\star} \Vert_{2 \to \infty}}{\Delta^{\star}} &= \Theta\left(\frac{1}{\sqrt{n \log n}} \right)
\end{align*}
The desired conclusion then follows from \cite[Theorem 2.1]{Abbe2020}.

\end{proof}

We also need a Gaussian concentration result, such as the following variant of \cite[Lemma 2]{Hajek2016-hidden-community}.
\begin{lemma}\label{lemma:Gaussian-concentration}
Let $\{Z_i\}_n$ be a sequence of (not necessarily independent) normal random variables, where $Z_i \sim \textsc{Normal}(0,1)$. Then
\[\max_{i \in [n]} Z_i \leq \sqrt{2\log n}\]
with probability $1-o(1)$.
\end{lemma}

\begin{proof}
\begin{align*}
    \PR\left(\max_{i\in [n]} Z_i>\sqrt{2\log(n)}\right)&\le\sum_{i\in[n]} \PR\left(Z_i>\sqrt{2\log(n)}\right)\\
    &= n\int_{\sqrt{2\log(n)}}^\infty \frac{1}{\sqrt{2\pi}}e^{-x^2/2} dx\\
    &= \frac{n}{\sqrt{2\pi}}\int_{\sqrt{2\log(n)}}^\infty e^{-\log(n)-\sqrt{2\log(n)}\left(x-\sqrt{2\log(n)}\right)-\left(x-\sqrt{2\log(n)}\right)^2/2} dx\\
    &\le \frac{1}{\sqrt{2\pi}}\int_{\sqrt{2\log(n)}}^{\infty} e^{-\sqrt{2\log(n)}\left(x-\sqrt{2\log(n)}\right)} dx\\
    &= \frac{1}{\sqrt{4\pi\log(n)}}.
\end{align*}
\end{proof}

\begin{proof}[Proof of Theorem \ref{theorem:submatrix-localization}]
Corollary \ref{corollary:Gaussian-threshold} implies that the MAP estimator succeeds in recovering $\true$ with high probability. Therefore, it suffices to show that thresholding either $u$ or $-u$ successfully recovers $\true$, with high probability.

Applying Lemma \ref{lemma:entrywise-Gaussian}, let $s^{\star} \in \{\pm 1\}$ be such that
\[ \left\Vert s^{\star} u - \frac{A u^{\star}}{\lambda^{\star}} \right \Vert_{\infty} \leq \frac{C}{\sqrt{n \log n}}.\]
This occurs with probability $1-o(1)$.

For $i \in [n]$,
\begin{align*}
\left(\frac{A u^{\star}}{\lambda^{\star}} \right)_i &= \frac{1}{\mu K^{3/2}}\sum_{j \in \true} A_{ij}
\sim \begin{cases}
N\left(\frac{1}{\sqrt{K}}, \frac{1}{\mu^2 K^2}\right) & i \in \true\\
N\left(0, \frac{1}{\mu^2 K^2}\right) & i \not \in \true.
\end{cases}
\end{align*}
By Lemma \ref{lemma:Gaussian-concentration}, 
\[\max_{i \not \in S^{\star}} \mu K \left(\frac{A u^{\star}}{\lambda^{\star}} \right)_i \leq \sqrt{2 \log (n-K)}\]
with high probability. 
Similarly, 
\[\min_{i \in S^{\star}} \mu K\left[ \left(\frac{A u^{\star}}{\lambda^{\star}} \right)_i - \frac{1}{\sqrt{K}}\right] \geq -\sqrt{2 \log K}\]
with high probability.
Combining these facts along with the entrywise bound, we conclude that with high probability,
\begin{align*}
\min_{i \in S^{\star}} s^{\star}u_i &\geq \frac{1}{\sqrt{K}} - \frac{\sqrt{2  \log K}}{\mu K} - \frac{C}{\sqrt{n \log n}}\\
\max_{i \not \in S^{\star}} s^{\star}u_i &\leq \frac{\sqrt{2  \log (n-K)}}{\mu K} + \frac{C}{\sqrt{n \log n}}.
\end{align*}
We see that
\begin{align*}
\min_{i \in S^{\star}} s^{\star}u_i - \max_{i \not \in S^{\star}} s^{\star}u_i &\geq \frac{1}{\sqrt{K}} - 2\frac{\sqrt{2  \log n}}{\mu K} - O\left(\frac{1}{\sqrt{n \log(n)}} \right)\\
&= \frac{1}{\sqrt{K}} \left(1  - \sqrt{\frac{8 \log n}{\mu^2 K}} \right) - O\left(\frac{1}{\sqrt{n \log(n)}} \right)\\
&= \frac{1}{\sqrt{K}} \left(1  - \sqrt{\frac{8n \log n}{a^2 \lfloor \rho n\rfloor \log n }} \right) - O\left(\frac{1}{\sqrt{n \log(n)}} \right)\\
&= \frac{1}{\sqrt{K}} \left(1  - (1+o(1))\sqrt{\frac{8 }{\rho a^2}} \right) - O\left(\frac{1}{\sqrt{n \log(n)}} \right).
\end{align*}
Since $\rho a^2 > 8$ and $K  = \Theta(n)$, we have
\[\min_{i \in S^{\star}} s^{\star}u_i - \max_{i \not \in S^{\star}} s^{\star}u_i > 0\]
with high probability.
We conclude that thresholding $s^{\star} u$ succeeds in recovering the communities with probability $1-o(1)$. 
\end{proof}
\bibliography{../references}
\bibliographystyle{abbrv}

\end{document}